\newcolumntype{C}[1]{>{\centering\let\newline\\\arraybackslash\hspace{0pt}}m{#1}}
\newcolumntype{L}[1]{>{\raggedright\let\newline\\\arraybackslash\hspace{0pt}}m{#1}}
\newtheorem{theorem}{Theorem}
\newtheorem{lemma}[theorem]{Lemma}
\newtheorem{corollary}[theorem]{Corollary}
\newtheorem{definition}[theorem]{Definition}
\crefname{section}{Section}{Sections}
\crefname{subsection}{subsection}{subsections}
\crefname{theorem}{Theorem}{Theorems}
\crefname{corollary}{Corollary}{Corollaries}
\crefname{lemma}{Lemma}{Lemmas}
\crefname{appendix}{Appendix}{Appendices}
\crefname{definition}{Definition}{Definitions}
\crefname{equation}{eq.}{eqs.}
\crefname{algorithm}{Algorithm}{Algorithms}
\definecolor{myred}{RGB}{221,0,0}
\definecolor{tk}{RGB}{246,76,246}
\definecolor{citegreen}{RGB}{0,165,0}
\definecolor{mygreen}{RGB}{0,150,0}
\definecolor{mygrey}{RGB}{100,100,100}
\renewcommand{\ket}[1]{| #1 \rangle}
\newcommand{\set}[1]{\left\{ #1 \right\}}
\newcommand{\1}{\mathds{1}}
\newcommand{\mc}[1]{\mathcal{#1}}
\newcommand{\PP}{\mathds{P}}
\newcommand{\NDsquare}{\mathrm{NDS}}
\newcommand{\setn}[1]{ \{ #1 \} }
\newcommand{\Fnorm}[1]{\norm{#1}_\mathrm{F}}
\newcommand{\Fnormn}[1]{\lVert #1 \rVert_\mathrm{F}}
\newcommand{\Flip}{\mathds{F}}
\newcommand{\wtM}{\widetilde{M}}
\newcommand{\e}{\mathrm{e}}
\renewcommand{\i}{\mathrm{i}}
\newcommand{\dt}{\mathrm{d}t}
\newcommand{\avg}{\mathrm{avg}}
\newcommand{\wt}[1]{\widetilde{#1}}
\newcommand{\randomSeeds}{\mathrm{random}\_\mathrm{seeds}}
\newcommand{\degCount}{\mathrm{degeneracy}\_\mathrm{counter}}
\newcommand{\degSet}{\mathrm{degenerate}\_\mathrm{set}}
\newcommand{\conjDegSet}{\mathrm{conj}\_\mathrm{degenerate}\_\mathrm{set}}
\newcommand{\sumValues}{\mathrm{values}\_\mathrm{to}\_\mathrm{sum}}
\newcommand{\newBasis}{\mathrm{new}\_\mathrm{basis}}
\newcommand{\True}{\mathrm{True}}
\newcommand{\False}{\mathrm{False}}
\newcommand{\selfCheck}{\mathrm{check}\_\mathrm{self}\_\mathrm{adjoint}}
\newcommand{\numSelf}{\mathrm{number}\_\mathrm{self}\_\mathrm{adjoint}}
\newcommand{\numConj}{\mathrm{number}\_\mathrm{conjugate}}
\newcommand{\selfBasis}{\mathrm{self}\_\mathrm{adjoint}\_\mathrm{basis}}
\newcommand{\conjBasis}{\mathrm{conjugate}\_\mathrm{basis}}
\newcommand{\additionalSelf}{\mathrm{extra}\_\mathrm{self}\_\mathrm{adjoint}}
\newcommand{\posDegSets}{\mathrm{positive}\_\mathrm{degenerate}\_\mathrm{sets}}
\newcommand{\negDegSets}{\mathrm{negative}\_\mathrm{degenerate}\_\mathrm{sets}}
\newcommand{\complexDegSets}{\mathrm{complex}\_\mathrm{degenerate}\_\mathrm{sets}}
\newcommand{\allDegSets}{\mathrm{all}\_\mathrm{degenerate}\_\mathrm{sets}}
\newcommand{\conjPairSets}{\mathrm{conjugate}\_\mathrm{pairs}}
\DeclareMathOperator{\BigO}{O}
\DeclareMathOperator{\poly}{poly}
\newcommand{\posBases}{\mathrm{positive}\_\mathrm{bases}}
\newcommand{\negBases}{\mathrm{negative}\_\mathrm{bases}}
\newcommand{\complexBases}{\mathrm{complex}\_\mathrm{bases}}
\newcommand{\compCheck}{\mathrm{checklist}\_\mathrm{complex}}
\newcommand{\posCheck}{\mathrm{checklist}\_\mathrm{positive}}
\newcommand{\negCheck}{\mathrm{checklist}\_\mathrm{negative}}
\newcommand{\conjTest}{\mathrm{conj}\_\mathrm{test} }
\newcommand{\C}{\mathbb{C}}
\begin{document}

\title{Fitting quantum noise models to tomography data}

\author[1,2]{Emilio Onorati}
\author[1,3]{Tamara Kohler}
\author[1]{Toby S. Cubitt}
\affil[1]{University College London, Department of Computer Science, UK}
\affil[2]{Technische Universit\"at M\"unchen, Fakult\"at f\"ur Mathematik, DE}
\affil[3]{Instituto de Ciencias Matem\'aticas, Madrid, ES}

\date{}

\maketitle

\begin{adjustwidth}{0.3cm}{0.3cm}
  \begin{abstract}
The presence of noise is currently one of the main obstacles to achieving large-scale quantum computation. Strategies to characterise and understand noise processes in quantum hardware are a critical part of mitigating it, especially as the overhead of full error correction and fault-tolerance is beyond the reach of current hardware. Non-Markovian effects are a particularly unfavourable type of noise, being both harder to analyse using standard techniques and more difficult to control using error correction. In this work we develop a set of efficient algorithms, based on the rigorous mathematical theory of Markovian master equations, to analyse and evaluate unknown noise processes.

In the case of dynamics consistent with Markovian evolution, our algorithm outputs the best-fit Lindbladian, i.e., the generator of a memoryless quantum channel which best approximates the tomographic data to within the given precision. In the case of non-Markovian dynamics, our algorithm returns a quantitative and operationally meaningful measure of non-Markovianity in terms of isotropic noise addition.

We provide a Python implementation of all our algorithms, and benchmark these on a range of 1- and 2-qubit examples of synthesised noisy tomography data, generated using the Cirq platform. The numerical results show that our algorithms succeed both in extracting a full description of the best-fit Lindbladian to the measured dynamics, and in computing accurate values of non-Markovianity that match analytical calculations.
  \end{abstract}
\end{adjustwidth}

\tableofcontents

\section{Introduction}

A key challenge in developing medium to large-scale quantum architectures and to design algorithm for quantum advantage is error mitigation~\cite{PreskillNISQ}.
Unfortunately, noise adversely affects all stages of quantum computation, in particular the manipulation of states through quantum gates and continuous-time quantum processes.
On large scale devices, error correcting codes will be used to suppress noise and achieve fully fault-tolerant computation. But in the NISQ era, the overhead of full fault tolerance is prohibitive, placing it beyond reach of near-term hardware.
It is thus important to understand and characterise the underlying noise dynamics in current quantum devices, both in order to inform hardware design, and to prepare error correction and mitigation protocols optimised for the specific noise in the apparatus.

Various methods have been devised to evaluate and analyse noise in quantum dynamics, making different assumptions on, and providing different information about, the noise processes~\cite{benchmarkingmethods}.
One way to understand the noise model in a device is to look for compatible Markovian\footnote{We call a quantum channel \emph{Markovian} if it is a solution of a master equation with generator in Lindblad form~\cite{lindblad1976,Gorini76}. Equivalently, it is an element of a one-parameter continuous completely positive semigroup.} evolutions.
Knowing the Lindbladian which best approximates the generator of the physical process is considerably helpful to extract information about the noisy dynamics in a quantum device, which can be leveraged in multiple ways, for instance to calibrate error mitigation techniques.

Noise in quantum devices may however deviate from a memoryless dynamics, so that no compatible Markovian description exists.
This can be problematic: strategies for mitigating or correcting non-Markovian noise are currently less established than schemes designed for memoryless errors. For example, the fault tolerance thresholds that have been calculated for non-Markovian noise models~\cite{Terhal_2005,Aharonov_2006,Ng_2009} have worse estimates than traditional threshold theorems.
Therefore, methods which can depart from the Markovian assumption, or that can provide precise tests of non-Markovian dynamics, are of considerable interest.
A number of approaches to benchmark non-Markovianity have been devised indeed;
we give a brief overview in \cref{sec:RelatedWork}.
However, some of these procedures are impractical to compute, requiring either complete knowledge of the master equation, or an unbounded number of full process tomography routines, or again a specific experimental setup.
Other methods are only capable of providing one-sided witnesses of non-Markovianity, failing even in principle to identify all non-Markovian effects, and providing no information about the underlying dynamics in the Markovian case.
Thus, finding a theoretically well-motivated but also feasible and low-resource procedure to assess whether or not dynamics is consistent with Markovian evolution is desirable.

\subsection{Main results}\label{sec:main_results}
In this work we develop a family of algorithms -- implemented in Python and benchmarked on simulated data in Cirq -- which characterise and quantify the noise processes in a quantum system from one (or a time series of) tomographic snapshot(s), that is, tomographic reconstruction(s) of the quantum process under consideration at a certain time in the evolution. This means that our method has conveniently a low demand of quantum resources and does not require an ad-hoc experimental setup. 
At the core of our algorithms is an efficient convex optimisation programme over a matrix variable minimising the distance to the matrix logarithm of the snapshot.
The constraint of the minimisation task are the three necessary and sufficient conditions for a Lindblad generator. This idea is built on previous work~\cite{Markdynamics08}, where the quantum embedding problem is tested in terms of a semi-integer programme. 
Turning this approach into a convex optimisation yields to a series of favourable features, such as the inclusion of a tolerance parameter dealing with inaccuracies of the tomographic reconstruction, or the the possibility to retrieve a Lindblad generator in presence of weakly non-Markovian dynamics. Also, convex optimisation guarantees that the solution found by our procedure is the global optimal one, and we thus do not have to be concerned with the problem of local optima.
We also highlight that our scheme does not require any a priori knowledge or assumption regarding the noise model, nor expects a specific structure of the input data, thus being suited to be adopted in realistic scenarios.

We list in the following our main contributions.

\begin{itemize}
	
\item \textbf{Algorithm for extraction of the best fit Lindblad generator}
Our first algorithm receives as input a single tomographic snapshot $M$ and extracts the full description of the best fit Lindblad generator to the measured data. More specifically, it computes for a subset of branches of the matrix logarithm of $M$ the closest Lindblad operator, and returns the one generating the closest quantum map to the tomographic data. 
With our novel convex optimisation approach we can include a precision parameter $\varepsilon$ representing our desired error tolerance regarding consistency with a Markovian evolution: if no Lindbladian generating a channel within distance $\varepsilon$ from the snapshot $M$ is found, then the algorithm informs us that the investigated channel is not compatible with a Markovian dynamics within the indicated precision.

\item \textbf{Algorithm calculating amount of non-Markovianity}
Our second algorithm takes again as input a tomographic snapshot $M$ and a precision parameter $\varepsilon$ and returns the explicit description of the quantum channel within the $\varepsilon$-ball around $M$ having the smallest measure of non-Markovianity. We express this quantities in terms of the minimal amount of isotropic noise to be added to the generator of an hermiticity- and trace-preserving map in order to ``wash out'' memory effects and render the evolution Markovian. Such a measure for non-Markovianity, first defined in~\cite{Markdynamics08},  is operationally meaningful and directly analogous to robustness of entanglement measures in entanglement theory.
This second algorithm can hence be used as a complement for our first Lindbladian-extraction algorithm when this one fails in retrieving a valid Lindbladian within the required tolerance.

\item \textbf{Lindbladian extraction over time series of snapshots}
If our single-snapshot algorithm successfully retrieves a Markovian channel which approximates the input data to the desired precision, the physical channel is then considered \emph{consistent} with Markovian dynamics.
However, it is important to note that this snapshot compatibility does not guarantee with complete certainty that the physical process is actually Markovian: there may in fact exist non-Markovian dynamics which at some point in their evolution are compatible with a Markovian channel. One can strengthen the test by taking a series of snapshots of the same channel at different times and ascertain if each of them is separately consistent. 
But also in this case -- even if unusual -- it is still possible to run into non-Markovian quantum channels where each snapshot is separately consistent with a Markovian dynamics.
In order to overcome this challenge and obtain a more rigorous test of Markovianity, we have thus devised an extension of the algorithm which takes as input a time series of tomographic snapshots, and checks whether there is a time-independent Markovian map embedding (up to some approximation) all of them through its evolution.
This offers a significantly more sensitive test of a time-independent memoryless evolution, at only linear classical computational cost with respect to the number of snapshots.

\item \textbf{Extracting Lindbladian for time dependent Markovian evolution} In a more general case, snapshots of a Markovian evolution may not be consistent with a time independent Lindblad generator, e.g. when the underlying noise dynamics are fluctuating in time. To address this type of scenario, we have modified the time series algorithm leveraging the property of continuous divisibility of (time-dependent) Markovian quantum channels. This time dependent version of the algorithm is then designed to extract a sequence of Lindbladian whose consecutive evolutions approximates the input snapshots, offering a general approach to characterise noisy dynamics for ideal and approximate Markovian evolutions.

\item \textbf{Extending the method to quantum channels with degenerate spectrum} The algorithms devised in \cite{Markdynamics08} make a crucial assumption: in order to function they require that the input channel has a non-degenerate spectrum.
It might be hoped that this assumption was not a serious barrier to applying the algorithm in general cases;
after all, the set of matrices with non-degenerate spectrum is dense in the set of matrices, and as such the probability of a degenerate matrix resulting from carrying out tomography is vanishingly small.
However, in this work we demonstrate that such optimism is not well founded.
We will discuss why, given a Markovian quantum channel with degenerate spectrum as an input for our approach inspired by~\cite{Markdynamics08}, even the smallest perturbation breaking the hermiticity-preserving structure will lead to failure. 
Hence, to correctly analyse a quantum channel with degenerate spectrum, we will show how to correctly perturb into a non-degenerate matrix within arbitrarily close distance, in a way that preserves the hermiticity-preserving property and allows our approach to successfully identify a best-fit Lindbladian within the precision parameter.

On the other hand, issues arise when applying the algorithms of~\cite{Markdynamics08} to channels which are not degenerate themselves, but are the result of a perturbation of a degenerate quantum operator.
This is a very recurrent instance when characterising noise in quantum computing devices, since the channels of interest are typically experimentally implemented quantum gates, which all have degenerate eigenvalues.
The task of dealing with channels with degenerate spectrum is delicate, due to the sensitivity to perturbation of the Lindblad structure with multi-dimensional eigenspaces.

To overcome this severe problem, we develop a strategy to restore the Lindblad structure of multi-dimensional eigenspaces by leveraging techniques from matrix perturbation theory~\cite{StewartSun}, and implement this as a set of algorithms that serve as a pre-processing phase for the convex optimisation task\footnote{The pre-processing algorithms for the case of a perturbed degenerate channel still do not handle the case of exactly degenerate channels. However, we show that in the case where an exactly degenerate channel is given as input to the algorithm it can always be perturbed to an arbitrarily close non-degenerate channel while preserving the value of the non-Markovianity measure. While this situation is of less practical interest, it means that our algorithms make absolutely no assumptions about the form of the input channel.}. 
To justify mathematically this approach, we will prove a stability theorem under perturbation for multi-dimensional eigenspaces of hermiticity-preserving operators.

\end{itemize}

We accompany our theoretical analysis with a Python implementation of all the algorithms, which we benchmark numerically on simulated tomographic data in Cirq \cite{cirq}. The numerical results show that our algorithms successfully identify Markovian-compatible dynamics for a range of 1- and 2-qubit examples, both for noisy quantum gates with degenerate spectrum and non-degenerate quantum channels.
The numerics also confirm that our algorithms are able to compute accurate values for the non-Markovianity measure of noisy, non-Markovian quantum channels, which we show to be consistent with a calculation of this measure done by hand.

\subsection{Related work on assessing non-Markovian noise} \label{sec:RelatedWork}

The nature of non-Markovian noise has been investigated from a variety of perspectives and with a number of different approaches (cfr. review papers~\cite{RivasMarkReview,Addis_2014,LiHaWi18} and the introduction of~\cite{ChruManiscalco14}).
One of the principal ways to quantifying non-Markovianity is based on \emph{divisibility}~\cite{WolfDivisibility,Hou11}, i.e., the property of a channel encoding evolution from time $t_0$ to $t_1$ to be implemented as a concatenation of channels from time $t_0$ to $t$ and $t$ to $t_1$ for any $t \in [t_0,t_1]$.
Indeed, a channel which is Markovian, in the Lindblad sense, is also divisible.
However, the converse is not necessarily true.\footnote{The property of being `infinitesimal divisible' is known to be equivalent to time-dependent Markovianity, but that is a stronger requirement than divisibility (see~\cite{WolfDivisibility}).}
In \cite{PhysRevLett.123.040401} the authors tackle the problem of defining divisibility in an operational manner, and demonstrate that their operational definition of divisibility gives a stricter test of Markovianity than a purely mathematical definition; however they note that operational divisibility still does not imply Markovianity. 

The original measure of non-Markovianity~\cite{Markdynamics08}, on which this work builds, determines whether the observed tomographic data is consistent with a time-independent Markovian master equation. If not, it provides a quantitative measure of how far the observed dynamics is from the closest Markovian trajectory. This task (and also that of determining finite divisibility) was shown to be NP-hard in general~\cite{CEW09,Bausch16}, but efficiently (classically) computable for any fixed Hilbert space dimension.

Other methods for detecting and measuring non-divisibility and non-Mar\-ko\-viani\-ty are based on checking monotonicity of quantities that are known to decrease under completely positive maps such as quantum correlation (see~\cite{Rivas_2010}, known as the \emph{RHP measure)}, quantum coherence~\cite{Wu20}, quantum relative entropy~\cite{Usha11} or the quantum mutual information~\cite{Luo12}.

Another approach affirms that a non-Markovian map is one that allows information~\cite{Laine_2010}, e.g. the Fisher information~\cite{Fisherflow}, to flow from the environment to the system.
Non-Markovianity can also be quantified by considering the change in the distinguishability of pairs of input states~\cite{Breuer_2009}.
The observation of non-monotonic behaviour of channel capacity~\cite{Bogna13}, the geometrical variation of the volume of the set of physical states~\cite{LorPlaPat13}, correlations in Choi-matrices associated to process tensors of quantum channels \cite{PhysRevLett.120.040405}, ensembles of Lindblad's trajectories~\cite{Kade20} and deep-neural-network and machine learning~\cite{NiuTLS19,Luchnikov_2020,luchnikov2021probing} are among the many alternative strategies to quantify dynamics with memory effects.

A limitation of many of these measures is that they provide one-sided witnesses of non-Markovianity, but cannot show that the dynamics \emph{is} Markovian, or find the master equation consistent with or closest to the observed dynamics.

\bigskip

The paper is organised as follows.
In \cref{prelim} we introduce our notation and cover preliminaries on convex optimisation.
The precise notion of Markovianity is defined in \cref{sec:Markovianity}, where we also present the non-Markovianity measure first defined in \cite{Markdynamics08}.
In \cref{sec:algorithms} we illustrate and discuss our two core algorithms, one for the extraction of the best-fit Lindblad generator to the input data and a second one to calculate the non-Markovianity component of the measured channel.
In \cref{sec:multiple_snapshots_extensions} we cover the extension to the case of multiple snapshots, with time-series algorithms designed to test a quantum map for both time-independent and time-dependent Markovianity. 
In \cref{sec:eigenspaces} we tackle the crucial problem of analysing quantum operator having a degenerate spectrum, and discuss our strategy to overcome it by finding suitable set of vectors restoring the hermiticity-preserving structure of the input map.
In \cref{sec:main_alg} we present the main algorithm of our procedure, which includes both pre-processing and the convex-optimisation.
In the last part of the manuscript,  \cref{sec:examples}, we test the validity of our method on different one-qubit and two-qubit numerical examples.

\section{Notation and theoretical background}\label{prelim}
We denote elementary basis vectors by $\ket{e_j} = (0,\dots,1,0,\dots,0)^T$ with 1 in the $j$-th position. The maximally entangled state is $\ket{\omega}= \sum_{j=1}^d \ket{e_j,e_j}/\sqrt{d}$ and $\omega_\perp = \1 - \dyad{\omega}{\omega}$ is the projection onto its orthogonal complement.
We write $\Flip$ to denote the flip operator interchanging the tensor product of elementary basis vectors, i.e.~$\Flip \ket{e_j,e_k} = \ket{e_k,e_j}$. We will use the Frobenius norm on matrices, defined by $\Fnorm{M} = \sqrt{\sum_{j,k} \abs{m_{jk}}^2}$, and the 1-norm, $\norm{M}_1=\sum_{j,k} \abs{m_{jk}}$. They are both \emph{submultiplicative}, that is, they satisfy $\norm{AB} \leq \norm{A} \norm{B}$ for all square matrices $A$ and $B$.

\subsection{Channel representations and $\NDsquare$ matrices}\label{sec:channel_representations}

We will consider quantum channels of finite dimension only, i.e.~completely positive and trace preserving (CPT) linear operators acting on the space of $d \times d$  matrices. To represent a channel $\mc T$ as a $d^2\times d^2$ matrix $T$, we will adopt the elementary basis representation:
\begin{equation}
	T_{(j,k),(\ell ,m)} = \bra{e_j,e_k}T\ket{e_\ell,e_m} \coloneqq \Tr \big[ \dyad{e_k}{e_j} \mc T (\dyad{e_\ell}{e_m}) \big].
\end{equation}
This is sometimes called the \emph{transfer matrix} of the channel in the elementary basis.
The corresponding representation $\ket{v}\in\C^{d^2}$ of a $d\times d$ matrix $V$ on which the channel acts is then
\begin{equation}
	v_{j,k}=\braket{e_j,e_k}{v} \coloneqq \bra{e_j}V\ket{e_k} .
\end{equation}
In this representation, the action of the channel on a matrix becomes matrix-vector multiplication, and the composition of channels corresponds to the product of their respective matrix representations.

In order to formulate necessary and sufficient conditions for the generator of a Markovian evolution, we will also make use of another representation, the Choi-matrix (or Choi-representation), defined as
\begin{equation}
	\tau (\mc T) \coloneqq d \big(\mc T \otimes \mc I) (\dyad{\omega}) \big) .
\end{equation}
Conveniently, the two representations are directly related through the $\Gamma$-involution~\cite{Markdynamics08}, acting on the elementary basis as
\begin{equation}
\dyad{e_j,e_k}{e_\ell,e_m}^\Gamma \coloneqq \dyad{e_j,e_\ell}{e_k,e_m}.
\end{equation}
Explicitly, we have
\begin{equation}
	\tau = (T)^\Gamma \qquad \text{and} \qquad T=(\tau)^\Gamma .
\end{equation}
The Choi-representation is very useful to investigate the \emph{hermiticity-preserving} property of quantum channels, i.e., quantum maps $\mc T$ such that $\mc T(X^\dagger) = \mc T(X)^\dagger$ for all $X$. Indeed we have
\begin{lemma}
	$\mc T$ is hermiticity-preserving $\iff$ $\tau$ is hermitian.
\end{lemma}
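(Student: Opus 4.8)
The plan is to reduce everything to the elementary-basis expansion of the Choi matrix together with one short computation on basis dyads, and then pass to general $X$ by linearity of $\mc T$. Writing $\ket{\omega} = \tfrac{1}{\sqrt d}\sum_j \ket{e_j,e_j}$, one has $\tau(\mc T) = \sum_{j,k} \mc T(\dyad{e_j}{e_k}) \otimes \dyad{e_j}{e_k}$, so the $(j,k)$ ``block'' of $\tau$, extracted as $(\1\otimes\bra{e_j})\,\tau\,(\1\otimes\ket{e_k})$, is exactly $\mc T(\dyad{e_j}{e_k})$. The crux is then the elementary observation that $\mc T$ is hermiticity-preserving if and only if $\mc T(\dyad{e_j}{e_k})^\dagger = \mc T(\dyad{e_k}{e_j})$ for all $j,k$: one direction is immediate on taking $X = \dyad{e_j}{e_k}$ (recall $(\dyad{e_j}{e_k})^\dagger = \dyad{e_k}{e_j}$), and the other follows by expanding an arbitrary $X = \sum_{j,k} x_{jk}\dyad{e_j}{e_k}$, so that $X^\dagger = \sum_{j,k} \overline{x_{jk}}\dyad{e_k}{e_j}$, and comparing $\mc T(X)^\dagger = \sum_{j,k}\overline{x_{jk}}\mc T(\dyad{e_j}{e_k})^\dagger$ with $\mc T(X^\dagger) = \sum_{j,k}\overline{x_{jk}}\mc T(\dyad{e_k}{e_j})$ using linearity.

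For the forward implication I would assume $\mc T$ is hermiticity-preserving and compute $\tau^\dagger = \sum_{j,k} \mc T(\dyad{e_j}{e_k})^\dagger \otimes \dyad{e_k}{e_j} = \sum_{j,k} \mc T(\dyad{e_k}{e_j}) \otimes \dyad{e_k}{e_j}$, which coincides with $\tau$ after relabelling $j\leftrightarrow k$; hence $\tau$ is hermitian. For the converse I would assume $\tau^\dagger = \tau$ and take the dagger of the block identity $\mc T(\dyad{e_j}{e_k}) = (\1\otimes\bra{e_j})\,\tau\,(\1\otimes\ket{e_k})$, obtaining $\mc T(\dyad{e_j}{e_k})^\dagger = (\1\otimes\bra{e_k})\,\tau\,(\1\otimes\ket{e_j}) = \mc T(\dyad{e_k}{e_j})$, which by the observation above is equivalent to $\mc T$ being hermiticity-preserving.

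There is no genuine obstacle here: the statement is essentially the Choi--Jamio\l kowski characterisation of hermiticity-preserving maps, and the only thing requiring care is bookkeeping of complex conjugates and of the index transpositions when moving between $\tau$, its blocks, and the action of $\mc T$. A slightly more conceptual variant, if a cleaner write-up is wanted, is to introduce the ``conjugate'' map $\mc T^\sharp(X) \coloneqq \mc T(X^\dagger)^\dagger$, observe that $\mc T$ is hermiticity-preserving iff $\mc T^\sharp = \mc T$, check by the same block computation that $\tau(\mc T^\sharp) = \tau(\mc T)^\dagger$, and conclude from injectivity of the linear map $\mc T \mapsto \tau(\mc T)$.
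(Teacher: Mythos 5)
Your proof is correct. The paper itself states this lemma without proof, treating it as a standard fact about the Choi--Jamio\l kowski correspondence, so there is no in-paper argument to compare against; your block-decomposition argument --- expanding $\tau = \sum_{j,k}\mc T(\dyad{e_j}{e_k})\otimes\dyad{e_j}{e_k}$, reducing hermiticity-preservation to the condition $\mc T(\dyad{e_j}{e_k})^\dagger = \mc T(\dyad{e_k}{e_j})$ on basis dyads via linearity, and reading each direction off the blocks $(\1\otimes\bra{e_j})\,\tau\,(\1\otimes\ket{e_k})$ --- is the standard one and is complete. The only point worth double-checking is the normalisation: with the paper's convention $\tau(\mc T)= d\,(\mc T\otimes\mc I)(\dyad{\omega})$ and $\ket{\omega}=\sum_j\ket{e_j,e_j}/\sqrt{d}$ your expansion is exactly right, and in any case a positive scalar prefactor would not affect hermiticity. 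Your suggested variant via the conjugate map $\mc T^\sharp$ and injectivity of $\mc T\mapsto\tau(\mc T)$ is an equally valid, slightly cleaner packaging of the same computation.
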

\noindent We will use the terms hermiticity-preserving and Choi-hermitian interchangeably.

Crucial for our approach is the structure of the complex matrix logarithm of matrices with non-degenerate spectrum:

\begin{definition}[Non-Degenerate Spectrum ($\NDsquare$) matrices)]
We say a matrix is $\NDsquare$ if all its eigenvalues are unique. 
\end{definition}

An important property of $\NDsquare$ matrices is that they can be uniquely diagonalised.
Their matrix logarithm is given by an infinite number of branches indexed by a vector $\vec{m}=(m_1,\dots,m_{d^2}) \in \mathds{Z}^{d^2}$. 
The $0$-branch of the matrix logarithm of a diagonalisable matrix $T= \sum_{j=1}^{d^2} \lambda_j \dyad{r_j}{\ell_j} = \sum_{j=1}^{d^2}\lambda_j \, P_j$, with $\{\lambda_1,\dots\lambda_{d^2}\}$ being the eigenvalues of $T$ and $\ell_j, r_j$ the respective left and right eigenvectors such that $\braket{\ell_j}{r_k}=\delta_{jk}$ and $\dyad{r_j}{\ell_j} = P_j$, is given by
\begin{equation}
	G_0\coloneqq \log(T) = \sum_{j=1}^{d^2} \log \lambda_j \, \dyad{r_j}{\ell_j}
	=
	\sum_{j=1}^{d^2} \log \lambda_j \, P_j;
\end{equation}
the $\vec{m}$-branch is then
\begin{equation}
	G_{\vec{m}} \coloneqq G_0 +  \sum_{j=1}^{d^2} m_j \, 2 \pi \i \, \dyad{r_j}{\ell_j}
	=
	\sum_{j=1}^{d^2} m_j \, 2 \pi \i \, P_j.
\end{equation}

\paragraph{Modified hermitian adjoint} The hermitian adjoint operation on a matrix in its vector representation $\ket{v}$ in the elementary basis is given by $\ket{v^\dagger}\coloneqq \Flip \ket{v^\ast}$. We call vectors such that $\ket{v}=\ket{v^\dagger}=\Flip \ket{v^\ast}$ \emph{self-adjoint}, and we say that two vectors $v$ and $w$ are \emph{hermitian-related} if $\ket{w}=\ket{v^\dagger}=\Flip \ket{v^\ast}$ (and equivalently $\ket{v}=\ket{w^\dagger}=\Flip \ket{w^\ast}$).
This terminology is unusual with respect to the conventional hermitian conjugation operation on vectors, but the definition adopted here for $d^2$-dimensional vectors exactly corresponds to the usual hermitian adjoint of their corresponding $d \times d$ matrices on which $\mc T$ acts. We will conversely denote the standard hermitian conjugation for a matrix $A$ by $A^H$.

\subsection{Convex optimisation programmes}\label{sec:preliminaries_convex_opt}
At the heart of our algorithms are convex optimisation programmes which either retrieve the closest Lindbladian to the matrix logarithm, or the smallest non-Markovianity parameters (both objects explained into detail in \cref{sec:Markovianity}). These convex optimisations over a (scalar or vector) variable $x$ have the general form~\cite{ByodConvex}
\begin{equation}
	\begin{array}{lll}
	&\textbf{standard form}& \\
	\\
	\textsl{minimize} &f_0 (x)& \\
	\textsl{subject to} &f_j (x) \leq 0, & j=1,\dots,n \\
	 & \langle a_k,x \rangle = b_k  & k=1,\dots,m \\
	\end{array}
\end{equation}
\begin{equation}
	\begin{array}{lll}
	&\textbf{epigraph form} &\\
	\textsl{minimize} &\mu& \\
	\textsl{subject to} &f_0 (x) -\mu \leq 0& \\
	&f_j (x) \leq 0, & j=1,\dots,n \\
	&\langle a_k,x\rangle = b_k  & k=1,\dots,m
	\end{array}
\end{equation}
where $f_0,f_1,\dots,f_n$ are convex functions. A fundamental property of convex optimisation problems is that any locally optimal point is also globally optimal.

A class of convex optimisation problems, called \emph{second-order cone programmes}, takes the form
\begin{equation}
	\begin{array}{lllclll}
	\textsl{minimize} &\langle f,x \rangle& \\
	\textsl{subject to} &\norm{A_jx+b_j}_2 \leq \langle c_j,x\rangle +d_j& j=1,\dots,n \\
	&Fx=g &
	\end{array}
\end{equation}
where $f$, $g$, $ b_j$'s and $c_j$'s are vectors (not necessarily of the same dimension), $F$ and $A_j$'s are matrices and $d_j$'s are scalars. The condition $\norm{A_jx+b_j}_2 \leq \langle c_j,x\rangle +d_j$ is called \emph{second-order cone constraint}. Important for our work is the fact that minimization objectives for the Frobenius norm over matrix variables can be converted into a second order cone programme via epigraph formulation.

To numerically implement convex optimisation programme, we make use of the Python library CVXPY~\cite{diamond2016cvxpy,agrawal2018rewriting}.

\subsection{Quantum Markovian channels and embedding problem}\label{sec:Markovianity}

Processes which do not retain any memory of their previous evolution are called \emph{Markovian} and satisfy the \emph{Markov property}: given a sequences of points in time $t_1<t_2<,\dots,<t_{n-1} < t_n$, a stochastic process $X_t$ taking values on a countable space has the Markov property if
\begin{equation}
\PP (X_{t_n+s}=y \, \vert \, X_{t_n} = y_n,\dots,X_{t_2} = y_2, X_{t_1} = y_1, )
=
\PP (X_{t_n+s}=y \, \vert \, X_{t_n} = y_n)
\end{equation}
for any $s>0$.

Extending this notion, a \emph{quantum Markov process} is described by a one-parameter semi-group giving rise to a continuous sequence of completely positive and trace preserving (CPT) channels. The generators of this type of semi-group is called \emph{Lindbladian} and must take the well-known \emph{Lindblad form}~\cite{lindblad1976,Gorini76},
\begin{equation}\label{eq:Lindbladian_equation}
	\mathfrak{L}(\rho) \coloneqq \i[\rho,H]
	+ \sum_{\alpha,\beta} \xi_{\alpha,\beta}
	\left[
		F_\alpha \rho F_\beta^\dagger -\frac{1}{2} \left( F_\beta^\dagger F_\alpha \rho + \rho F_\beta^\dagger F_\alpha \right)
	\right] ,
\end{equation}
where $H$ is hermitian, $\Xi=(\xi_{\alpha\beta})$ is positive semi-definite and $\set{F_\alpha}_\alpha$ are orthonormal operators.
The first term on the RHS is the Hamiltonian part and describes the unitary evolution of the density operator, while the second term represents the dissipative part of the process.
By diagonalising $\Xi$ as $ \Gamma \coloneqq U^\dagger \Xi U  = \mathrm{diag} (\gamma_\alpha)$ and defining the so-called \emph{jump operators}
$J_\alpha \coloneqq \sqrt \gamma_\alpha \sum_\beta u_{\beta\alpha} F_\beta$, we can re-write \cref{eq:Lindbladian_equation} in diagonal form,
\begin{equation}\label{eq:Lindbladian_diagonal}
\mathfrak{L}(\rho) \coloneqq \i[\rho,H]
+ \sum_\alpha
\left[
J_\alpha \rho J_\alpha^\dagger -\frac{1}{2} \left( J_\alpha^\dagger J_\alpha \rho + \rho J_\alpha^\dagger J_\alpha \right)
\right] .
\end{equation}

The question whether a given quantum map $\mc M$ is compatible with a Markovian process, in the sense that there exists a memoryless evolution that at a certain time is equal to $\mc M$, has been investigated from different perspectives, e.g. in the context of complexity~\cite{CEW09}, channel divisibility~\cite{WolfDivisibility}, regarding spectrum~\cite{davies2010} and toward the goal of achieving a quantum advantage~\cite{Lostaglio20}, and it is sometimes referred to as the \emph{embedding problem}. A method to ascertain whether a given channel is compatible with a Markovian dynamics has been developed in ref.~\cite{Markdynamics08}, which provides three properties for $L$ that are necessary and sufficient to satisfy \cref{eq:Lindbladian_equation} (where $L$ is the elementary basis representation of $\mathfrak{L}$). These are
\begin{enumerate}[label=(\roman*)]
	\item $L$ is hermiticity-preserving, that is, $L\ket{v^\dagger} =(L\ket{v})^\dagger$ for all $\ket{v}$.
	\item $(L)^\Gamma$ is \emph{conditionally completely positive}~\cite{Evans77}, that is,
	$\omega_\perp \, (L)^\Gamma \, \omega_\perp \geq 0,$
	where $\omega_\perp = (\1 - \dyad{\omega}{\omega})$.
	\item $\bra{\omega} L= \bra{0}$, which corresponds to the trace-preserving property.
\end{enumerate}
A time-independent Markovian quantum channel at time $t$ is then given by $T(t)=\e^{L t}$ for some time-independent Lindblad operator~$L$.
We will call \emph{quantum embeddable} any map whose matrix logarithm admits at least one complex branch satisfying these properties.

Interpreting the above conditions, we observe that they impose a rigid structure on the operator in matrix form, since they involve both eigenvalues and eigenvectors. In particular, from the hermiticity-preserving condition we note that, if $\lambda$ is an eigenvalue of $L$ and $\ket{v}$ the corresponding eigenvector, then it follows that:
\begin{equation}
	 L \ket{v^\dagger} = (L\ket{v})^\dagger = (\lambda\ket{v})^\dagger = \lambda^\ast \ket{v^\dagger}.
\end{equation}
Thus $\lambda^\ast$ and $\ket{v^\dagger}= \Flip \ket{v^\ast}$ are an eigenvalue and the corresponding eigenvector of $L$ too.

This implies an important property of Lindbladians in their elementary basis representation $L$: complex eigenvalues must necessarily come in complex-conjugate pairs $\lambda$, $\lambda^\ast$ and have the same multiplicity. Moreover, the eigenspace of~$\lambda$ must admit a set of basis vectors whose hermitian conjugates span the eigenspace of~$\lambda^\ast$. If $\lambda$ is real, then it must necessarily admit a set of vectors spanning its eigenspace which either come in hermitian-related pairs or are self-adjoint. For a Markovian channel $T=e^{L}$ the conditions for eigenvalues that are either complex or positive and for the vectors spanning their subspaces follow exactly the same rules. On the other hand, negative eigenvalues of $T$ must have even multiplicity (implying that no non-degenerate negative eigenvalue can occur); the eigenspace of a negative eigenvalue must admit a basis of hermitian-related pairs of vectors. This particular structure for Markovian maps and their Lindblad generators will be leveraged in \cref{sec:pre-processing} to reconstruct the eigenspace of an originally degenerate eigenvalue from a set of perturbed eigenvectors.

Conditions~(i), (ii) and~(iii) will run throughout our analysis, and we will implement them in the algorithms in~\cref{sec:algorithms} as constraints of a convex optimisation problem. We will also sometimes need to express these conditions in the Choi representation. By expanding condition~(iii) as
\begin{align}
\bra{\omega} L
	&=
	\frac{1}{\sqrt{d}} \sum_j \bra{j,j} \sum_{\substack{a,b \\ c,r}} L_{\substack{a,b\\c,r}} \dyad{a,b}{c,r}
	=
	\frac{1}{\sqrt{d}} \sum_{j,c,r} L_{\substack{j,j \\ c,r}} \bra{c,r}\\
	&=
	\frac{1}{\sqrt{d}} \sum_{j,c,r} \left((L)^\Gamma\right)_{\substack{j,c \\ j,r}} \bra{c,r} = \bra{0},
\end{align}
we can re-formulate it as
\begin{equation}
	\Tr_1 \left[(L)^\Gamma\right]
	=
	\sum_{j,c,r} \left((L)^\Gamma\right)_{\substack{j,c\\j,r}} \dyad{c}{r}
	=
	0_{d,d} .
\end{equation}
This is equivalent to $\Big\lVert \Tr_1 \Big[(L)^\Gamma\Big] \Big\rVert = 0$ in any matrix norm.

\bigskip

In order to quantify the degree of non-Markovianity of quantum channels, we consider again \cite{Markdynamics08} where a computable, continuous and basis-independent measure of non-Markovianity has been introduced. This evaluates the minimal amount of isotropic noise to render a quantum process Markovian, and more precisely, the smallest value $\mu$ such that $L= G_{\vec{m}} - \mu \, \omega_\perp$ is a legitimate Lindblad operator for some branch $G_{\vec{m}} $ of the generator of an hermiticity- and trace-preserving quantum channel~$M$.
Moving to the Choi representation,
we define
\begin{equation}\label{eq:definition_mu}
	\mu_{\min} (M) \coloneqq \min_{\vec{m}} \min \Big\{ \mu\, : \omega_\perp (G_{\vec{m}} )^\Gamma \omega_\perp + \frac{\mu}{d}\1 \geq 0\Big\}
\end{equation}
as the \emph{non-Markovianity parameter} of $M$. In this second formulation we recognise $\mu_{\min}$ as the smallest value to be added to the spectrum of the Choi matrix of~$G_{\vec{m}} $ to fulfill the conditionally completely positive condition, hence ensuring that the so modified generator gives rise to a completely positive map at any time. As previously discussed, this is a necessary property of memoryless dynamics.

This measure is very much alike the one proposed in~\cite{Rivas_2010}, calculating the distance of the averaged map from a completely positive channel, namely,
\begin{equation}
	\mu_{\mathrm{RHP}} \coloneqq \int_0^\infty dt \lim_{\varepsilon \rightarrow 0} \frac{\Vert\1 + \varepsilon L^\Gamma\Vert_1 - 1}{\varepsilon}.
\end{equation}
Furthermore, a connection between $\mu_{\mathrm{RHP}}$ and another non-Markovianity measure based on the non-monotonicity of the trace distance of two quantum states evolving under the same (non-Markovian) channel~\cite{Breuer_2009} has been rigorously established~\cite{ChruManiscalco14}, thus linking -- at least on a mathematical level -- our measure to information backflow.
A physical interpretation for measures restoring the conditionally completely positive conditions like our $\mu_{\min}$ parameter has been discussed in~\cite{Addis_2014} in relation to \emph{reverse jump operators}. These appear in the non-Markovian master equation when the continuous completely positive condition is violated~\cite{Piilo08} and recover previously lost coherence, thus manifesting of memory effects.

\section{Core algorithms for retrieval of best-fit Lindbladian and computing non-Markovianity measure $\mu_{\min}$} \label{sec:algorithms}

In this section we present a high level overview of the core algorithms developed in this paper.
For more information, including detailed pseudocode; for more details we refer the Reader to \cref{app:optimisation}.
The code itself is available at \cite{gitlab_repo}.

Here we cover the case of investigating non-Markovianity with a single tomographic snapshot of the quantum process under consideration.
In this scenario we have developed two algorithms: one to find the quantum Markovian channel which best fits the tomographic snapshot, and a second algorithm to compute the measure of non-Markovianity~$\mu$, which we can run e.g. when no Markovian channel is found in the proximity of the input map.
These algorithms stem from the semi-integer programme worked on in \cite{Markdynamics08}. Compared to this previous formulation, our novel convex optimisation approach has the very desirable benefit of dealing with weakly non-Markovian noise and tomographic inaccuracies by introducing an error tolerance parameter. Moreover, it guarantees that the solution is a global optimum. 
Our choice for the Frobenius norm in the objective function is indeed dictated by the fact that, as mentioned in \cref{sec:preliminaries_convex_opt}, this yields to a second-order cone programme, whereas other more physically-motived norms such as the diamond norm would not allow us to work with a convex-optimisation frame.

More specifically, as only input we are given the tomographic snapshot $M$ and a desired tolerance parameter~$\varepsilon$.
\cref{alg:MR:Lindbladian} looks for the closest Lindbladian to $\log M$ in the $\varepsilon$-neighbourhood by casting a convex optimisation task over a matrix variable $X$ whose constraints are exactly the necessary and sufficient conditions for a Lindblad generator in the Choi representation, as discussed in \cref{sec:Markovianity}.

We iterate over a restricted set of branches of the matrix logarithm -- limited by the parameter $m_{\max}$ -- and pick the resulting Lindbladian from the convex optimisation programme whose generated channel is the closest to~$M$.
Iterating over different branches of the matrix logarithm is necessary, as in the general case not all branches (which the 0-branch may not necessarily part of) of a Markovian quantum channel fulfil the conditions for a Lindblad operator.
Hence, while all being mathematically equivalent, only a subset of branches of  $\log M$ may be close to the physical generator of a Markovian quantum channel. It is e.g. immediate to see that not all branches will preserve the existence of conjugate pairs $\lambda,\lambda^\ast$ for the hermiticity-preserving property.
As discussed in the preliminaries, infinitely many branches exist, which would clearly make a complete search unfeasible. Fortunately, both heuristic and theoretical argument show that a restricted search of few branches around the 0-branch suffices: more details are given in \cref{sec:Mloop}.

\bigskip

\begin{algorithm}\small
	\SetKwInOut{Input}{Input}\SetKwInOut{Output}{Output}
	\Input{matrix $M$ positive real number $\varepsilon$, positive integer $m_{\max}$}
	\Output{$L$ closest Lindbladian to $\vec{m}$-branch of $\log M$ such that $\Fnorm{M-\exp L} < \varepsilon$ is minimal over all
		$\vec{m}\in\{-m_{\max},\dots,0,\dots,m_{\max}\}^{\times d^2}$}
	$G_0 \leftarrow \log(M)$ \\
	\For{$G_{\vec{m}} \leftarrow G_0 + 2\pi \i \sum_{j=1}^{d^2} m_j \, P_j $ \  \textit{(branches of $G_0$)}}{
		\smallskip
		Run convex optimisation programme on variable $X(\vec{m})$:\\
		\Indp $\begin{array}{ll}
			\text{minimise} & \Fnorm{X(\vec{m})-L_{\vec{\vec{m}}}^\Gamma}   \\
			\text{subject to } &X(\vec{m}) \text{ hermitian}\\
			&\omega_\perp X(\vec{m}) \omega_\perp \geq 0 \\
			&\norm{\Tr_1[X(\vec{m})]}_{1} = 0
		\end{array}$\\
		\Indm
		
		\smallskip
	
		\If{$\Fnorm{M - \exp X^\Gamma(\vec{m})}<\varepsilon$}{
			
		\smallskip
			
		Store $X (\vec m)$ \\
		$\mathrm{distance}(\vec{m}) \leftarrow \Fnorm{M - \exp X^\Gamma(\vec{m})}$}
	
		\medskip
		
	}
	\Return$L=X^\Gamma(\vec{m}') \text{ for } \vec{m}'=\mathrm{argmin}\, \{\mathrm{distance}(\vec{m})\}$
	\caption{Retrieve best-fit Lindbladian}\label{alg:MR:Lindbladian}
\end{algorithm}

If no Lindblad generator is found in $\varepsilon$-ball of $M$,  
\cref{alg:MR:NoMark_parameter}  is designed to find the channel in the $\varepsilon$-neighbourhood with the smallest non-Markovianity parameter according to the definition in \cref{eq:definition_mu}. This can again be retrieved by formulating a convex optimisation programme.


\begin{algorithm}[t]\small
	\SetKwInOut{Input}{Input}\SetKwInOut{Output}{Output}
	\Input{matrix $M$, positive real number $\varepsilon$, positive integer $m_{\max}$}
	\Output{smallest positive real $\mu_{\min}$ and hermiticity-preserving map $G'$ such that $G'-\mu_{\min}\omega_\perp$ is a Lindbladian and $\Fnorm{M-\e^{G'}}<\varepsilon$ }
	
	\medskip
	
	$\mu_{\min} \in \mathds{R}^{+}$ \hspace{12pt} \textit{(initialize it with an high value)}
	
	\smallskip
	
	$\delta$ satisfying $\varepsilon = \exp ( \delta) \cdot \delta \cdot \Fnorm{G_0}$, \hspace{8pt} $\delta_{\max}=10 \cdot \delta$, \hspace{8pt} $\delta_{\mathrm{step}} \in \mathds{R}^{+}$ \\
	$G_0 \leftarrow \log(M)$ 
	\medskip
	
	\For{$\delta < \delta_{\max}$}{
		
		\smallskip
		
		\For{$G_{\vec{m}} \leftarrow G_0 + 2\pi \i \sum_{j=1}^{d^2} m_j \, P_j $ \  \textit{(branches of $G_0$)}}{

			Run convex optimisation programme on variables $\mu(\vec{m},\delta)$ and $X(\vec{m},\delta)$:\\
			\Indp $\begin{array}{ll}
				\text{minimise} & \mu(\vec{m},\delta) \\
				\text{subject to } &X(\vec{m},\delta) \text{ hermitian}\\
				&\Fnorm{X(\vec{m},\delta)-G_{\vec{m}} ^\Gamma } \leq \delta \\
				&\omega_\perp X(\vec{m},\delta) \omega_\perp + \mu \frac{\1}{d} \geq 0 \\
				&\norm{\Tr_1[X(\vec{m},\delta)]}_{1} = 0
			\end{array}$\\
			\Indm
			
			\medskip
			
			\If{$\Fnorm{M - \exp X^\Gamma(\vec{m},\delta) } < \varepsilon$}{
				
				\smallskip
				
			Store  $(\mu(\vec{m},\delta), X^\Gamma(\vec{m},\delta))$}
			
		}
		
		$\delta \leftarrow \delta+\delta_{\mathrm{step}}$
		
	}
	
	\smallskip

	\Return{$( \mu_{\min},G')=( \mu(\vec{m}',\delta'), X^\Gamma(\vec{m}',\delta') )$ \hspace{5pt} for $(\vec{m}',\delta')=\mathrm{argmin} \ \mu(\vec{m},\delta)$ }
	
	\caption{Compute non-Markovianity measure~$\mu_{\min}$}\label{alg:MR:NoMark_parameter}
\end{algorithm}

Here the variable $\delta$ reflects the neighbourhood around $G_{\vec{m}} $ that maps under the exponential into the $\varepsilon$-neighbourhood of $M$; a lower bound is given by Lemma 11 in~\cite{CEW09}, that is,  $\varepsilon \leq \exp ( \delta) \cdot \delta \cdot \Fnorm{G_0}$. An upper bound is again provided in Corollary 15 of~\cite{CEW09}. However, the boundaries of this region around $G_{\vec{m}} $ related to the $\varepsilon$-ball are not precisely characterised. This can cause problems for the convex optimisation programme, which in some cases can retrieve a Lindbladian for $\delta$ in the interval defined by the above bounds whose generated map falls outside the $\varepsilon$-ball. We further discuss this matter with a relevant example in \cref{sec:Unital_quantum_channel} (cfr. also \cref{fig:mu_illustration}). To solve this issue in a practical way, we run over increasing $\delta$ values up to 10 times the value determined by the lower bound.

Note that throughout this section we have assumed that the input for all the algorithms is $\NDsquare$.
However, following the argument of~\cref{sec:matrix_perturbation} it is straightforward to include the possibility of inputs which are not $\NDsquare$.
This would just require an additional step to check the input, and if it was not $\NDsquare$ perturb it to a matrix that is.

\paragraph{The error tolerance parameter}~\newline 
Both algorithms take as input a value~$\varepsilon$, which can be thought of as an error tolerance parameter allowing for a flexible treatment of noisy channels in realistic scenarios and at the same time allowing us to constraint our search over solutions within an expected interval, and observe if we find any proper outcome. In other words, this parameter can be thought as the boundary within which we are accepting a Markovian estimation to the observed channel.
This input is a figure that encompasses different characteristics, both theoretical and experimental, namely (i)~tomographic inaccuracies (ii)~state preparation and measurement errors, 
(iii)~the relation between two matrices and their logarithms, which makes the task of finding the closest Markovian map distinct -- but related -- to the task of finding the closest Lindblad generator (we refer the Reader to \cref{eq:relation_log,eq:tolerance_and_relation_log} and the text wrapping them),
(iv)~the amount of non-Markovianity of the ideal quantum channel and the noisy dynamics.
Precisely establishing the value for~$\varepsilon$ may thus be complicated without prior information on the noise model and the quantum apparatus. In practice this is however a very easily treatable problem, that we can resolve by making an initial educated guess for this parameter, then varying it over different calls of our algorithm and subsequently observing how the outputs differ; we took this approach to numerically investigate a non-Markovian quantum channel as illustrated in \cref{sec:Unital_quantum_channel}. 
Also, in the situation where no information is known, or where we are interested in retrieving the closest Lindblad generator without testing whether this is within a specific region around the physically observed map, the~$\varepsilon$ parameter can be set to some arbitrarily large value.

While the output of our procedure is the Markovian channel which is closest to the tomographic snapshot, it may be noted that unless we have a more informed idea on error bars any Markovian channel in the $\varepsilon$-ball has equally good claim to being the Lindbladian driving the physical process: the best-fit operator does not automatically correspond to the most likely one. 
Our best-fit approach is motivated by the willingness to design a practically feasible and robust scheme based on the well-known method of convex optimisation, requiring as few assumptions or prior information as possible.
We also argue that retrieving a Lindbladian compatible with the snapshot using the optimisation approach delivers insightful information about the noise process, even if it does not corresponds to the exact physical generator.

\paragraph{Treating channels with degenerate spectrum (cfr. \cref{sec:eigenspaces})}~\newline 
A critical case is the treatment of the experimental implementation of a channel having degenerate spectrum. Notably, this is the case of any unitary transformation. Without a specific pre-processing phase, the convex optimisation approach (as well as other schemes) is bound to fail in recognising a legitimate Lindbladian, even for very small perturbations of the ideal Markovian process. We examine this problem (with a concrete example) in \cref{sec:perturbation_example}. 
Solving this very fundamental issue is indeed one of the main technical contributions of this work. The strategy will be about searching new set of vectors spanning a perturbed multi-dimensional eigenspace satisfying the hermiticity-preserving condition and using them to construct a new matrix with the same spectrum as the original snapshot, but now with a different set of eigenvectors; we explain the details in \cref{sec:reconstructing_HP_structure}, then implemented as a pre-processing algorithm as discussed in \cref{sec:pre-processing}. Importantly, this approach is rigorously justified in the light of matrix perturbation theory: in \cref{sec:stability_theorem} we prove that, given an hermiticity-preserving operator with a pair of degenerate complex eigenvalues, it is always possible to retrieve a set of vectors spanning the two perturbed multi-dimensional eigenspaces respecting an hermiticity-preserving structure. The same can be done for a degenerate real eigenvalue.

\paragraph{Connection to other post-processing techniques}~\newline 
While in principle the tomographic snapshot required as input $M$ of our algorithm can be an arbitrary $\NDsquare$ matrix, we interpret it as being the tomographic reconstruction of a quantum channel at a certain time in the evolution. 
This means that ideally we would like to analyse data where state preparation and measurement errors have been filtered out, 
so that $M$ represents a proper (completely positive) quantum channel obtained from the raw data of the experimental observation. However, thanks to the flexibility given by the convex-optimisation approach, we can deal with input matrices which still incorporate state preparation and measurement errors (even though this will inevitably affect the accuracy of the Lindbladian estimation) and with tomographic errors.
In particular, we can accept negative probabilities resulting from limits of quantum tomography, which may infer negative eigenvalues. While this shortcoming affects methods such as \emph{maximum likelihood estimation}~\cite{MLE97,MLE01} yielding a potentially problematic rank-deficient density state reconstruction~\cite{MLEreliable}, our scheme removes the minus sign of isolated negative eigenvalues (by getting rid of the $\i \pi$ term in the logarithm) returning a strictly positive one.

\section{Extending to the case of multiple snapshots} \label{sec:multiple_snapshots_extensions}

Two of the major contributions of this work involve the extension to the case where we are given multiple tomographic snapshots of a channel.
In this situation there are two criteria we can test for -- whether there exists some time-independent Markovian evolution which is compatible with all snapshots, and whether the channel is consistent with time-dependent Markovian evolution.

For the case of testing for compatibility with time-independent Markovian evolution the theory is largely unchanged from the single snapshot case, and we cover the updates required to the algorithm in \cref{sec:mulSnap}. 
Testing whether the channel is consistent with time-dependent Markovian evolutions requires some additional theoretical work, which we cover in the remainder of this section. 

We view the time-dependent and the time-independent schemes as complementary to each other in practice, rather then just viewing the former as a refinement of the latter.
Especially for small time scales, we expect both approaches to successfully fit the observed data, where the time-independent algorithm returning a single Lindbladian may be more convenient for experimentalists to design error mitigation techniques. Conversely, for longer run times one would expect the time-dependent version to successfully identify a set of Lindbladians fitting the data for any continuous Markovian process affected by noise varying over time, while the time-independent algorithm certifies stable evolutions driven by a constant Lindblad generator.

\subsection{Testing time-independent Markov evolution from time series} \label{sec:mulSnap}

The approach for single snapshot can be extended to multiple tomographic snapshots taken at a sequence of times. Augmenting the number of measurements is a way to make the conditions for a compatible Markovian evolution, or detection of non-Markovian effects, more stringent, since the requirement is that there exist a single time-independent Lindbladian that generates a dynamical trajectory that passes close to \emph{every} snapshot.
We highlight that this extension comes with a classical computational overhead that is only linear in the number of snapshots.

Consider a sequence of tomographic snapshots $M_1,\dots, M_N$ associated with measurement times $t_1,\dots,t_N$.
In \cref{alg:MR:multiple_snapshots}  
we formulate a convex optimisation programme that finds the Lindbladian minimising the sum of the distances from the logarithms of the input matrices (for simplicity, we present the algorithm for the case of $\NDsquare$ matrices with no cluster of eigenvalues, and discuss the more general case in \cref{app:mulSnap}). We iterate over different branches and pick the Lindbladian $L$ for which $\sum_c \Fnorm{M_c - \exp  t_c\, L}$ is the smallest. 
We certify the evolution as Markovian if the distance in each snapshot is less than~$\varepsilon$.

\begin{algorithm}[h]\small
	\SetKwInOut{Input}{Input}\SetKwInOut{Output}{Output}
	\Input{$(d^2 \times d^2)$-dimensional matrices $M_1\dots,M_N$ , positive real numbers $t_1,\dots,t_N$, positive real number $\varepsilon$, positive integer $m_{\max}$}
	\Output{$L$ Lindbladian minimising $\sum_{c=1}^N \Fnorm{t_c\,L-\log M_c}$ such that $\Fnorm{M_c-\exp t_c\,L} < \varepsilon$ for all $c$}
	
	\medskip

	\smallskip
	
	\For{$G^c_{\vec{m}} \leftarrow G^c_0 + 2\pi \i \sum_{j=1}^{d^2} m_j \, P^c_j $ \  \textit{(branches of $G_0$)}}{

		Run convex optimisation programme on variable $X(\vec{m})$:\\
		\Indp $\begin{array}{ll}
			\text{minimise} & \sum_c \Fnorm{ t_c\,X(\vec{m})-(G_{\vec{m}} ^c)^\Gamma}   \\
			\text{subject to } &X(\vec{m}) \text{ hermitian}\\
			&\omega_\perp X(\vec{m}) \omega_\perp \geq 0 \\
			&\norm{\Tr_1[X(\vec{m})]}_{1} = 0\\
		\end{array}$\\
		\Indm
		
		\smallskip
		\If{$\Fnorm{M_c - \exp  t_c\, X^\Gamma(\vec{m})} < \varepsilon$ for all $c=1,\dots,N$}{
			
			\smallskip 
			
			Store $X(\vec m)$ \\
			$\mathrm{distance}(\vec{m}) \leftarrow \sum_c \Fnorm{M_c - \exp  t_c\, X^\Gamma(\vec{m})}$}

	}
	\medskip
	
	\Return $L=X^\Gamma(\vec{m}') \text{ for } \vec{m}'=\mathrm{argmin}\, \{\mathrm{distance}(\vec{m})\}$
	\caption{Retrieve best-fit Lindbladian for multiple snapshots}\label{alg:MR:multiple_snapshots} 
\end{algorithm}

\FloatBarrier

\subsection{Characterising time-dependent Markovian dynamics} \label{sec:timeDep}

The more general instance of a time-dependent quantum Markov process $\Phi$ from time $t_1$ to $t_2$ can be written in terms of a \emph{time-ordered integral} corresponding to the limit of a product integral of time-independent Lindbladians $L_j$ satisfying \cref{eq:Lindbladian_equation}, that is,
\begin{equation}
	\Phi(t_1,t_2) 
	=
	\mathbb{T} \exp \int_{t_1}^{t_2} L(t) \dt
	=
	\lim_{\dt\rightarrow 0} \prod_{j=N}^1 \e^{L_j},
\end{equation}
where $\mathbb T$ is the time-ordering operator, $ N=(t_2-t_1)/\dt$. Then, $\Phi(t_1,t_2) $ is divisible at any intermediate time~\cite{WolfDivisibility}, namely,
\begin{equation}\label{eq:divisibility_property}
	\Phi (t_1,t_3) = \Phi (t_2,t_3) \Phi(t_1,t_2) \quad \text{for all} \ 0\leq t_1\leq t_2 \leq t_3.
\end{equation}

We will leverage this divisibility property to construct a series of operators from a time series of tomographic snapshots $M_1,M_2, \dots, M_N$  taken over a total time $\mathfrak T$. Our goal is to extract a set of Lindblad operators whose evolution approximates each snapshot $M_j$ within some tolerance parameter~$\tau$.
Iteratively, we then construct $T_p = M_p M_{p-1}^{-1}$ for $p= 2, \dots , N$ with~$T_1 = M_1$. 
Under the  Markovian assumption, the divisibility property in \cref{eq:divisibility_property} yields that all $T$s are completely positive and trace preserving maps. 

\begin{figure}[h]
	\begin{center}
	\begin{tikzpicture}
		
		\node at (-.75,.6) [align=center] {\color{black}{Snapshots}};
		\node at (-.75,-1.1) [align=center] {\color{blue}{Lindbladians} \\ \color{blue}{$\Theta_p \approx \e^{L_p}$}};
		
		\draw [line width=2pt, mygrey] (0,0) -- (6.5,0);
		\draw [dotted,line width=2pt, mygrey] (6.75,0) -- (7.4,0);
		\draw [->,line width=2pt, mygrey] (7.6,0) -- (9,0);
		
		\node at (10,0) [align=center] (a){\color{mygrey}{quantum} \\ \color{mygrey}{channel}};
		
		\node at (0,0) [fill,circle](b0) {};
		\node at (-.85,0) [align=center] {$t=0$};
		
		\foreach \y in {1,2,3}{
			\node at (\y*2,0) [fill,black,circle](b\y) {};
			\draw [->,very thick,black] (b0.north) to[out=45,in=135] (b\y.north);
			\node at (.4+2*\y,.6) [align=center] {\color{black}{$M_{\y}$ at $t_\y$}};
			\node at (2*\y,-.65) [align=center] {\color{mygreen}{$\Theta_{\y}$}};
			
			\pgfmathtruncatemacro{\z}{\y-1}
			
			\draw [->,very thick,mygreen] (b\z.south) to[out=300,in=240] (b\y.south);
			\node at (-1+2*\y,-1.1) [align=center] {\color{blue}{$L_{\y}$}};
		}
		
		\node at (8,0) [fill,black,circle](bN) {};
		\node at (9 ,.6) [align=center] {\color{black}{$M_{N}$ at $t_N=\mc T$}};
		\node at (8.1,-.65) [align=center] {\color{mygreen}{$\Theta_{N}$}};
		\draw [->,very thick,black] (b0.north) to[out=45,in=135] (bN.north);
		\draw [->,very thick,black] (b0.north) to[out=45,in=135] (b1.north);
		
		\node at (7,-1.1) [align=center] {\color{blue}{$L_4$ to $L_N$}};
		
		\draw [->,very thick,mygreen] (b3.south) to[out=300,in=170] (6.65,-.6);
		\draw [dotted,very thick,mygreen] (6.75,-.6) -- (7.4,-.6);
		\draw [->,very thick,mygreen] (7.45,-.6) to[out=10,in=240] (bN.south);
		
	\end{tikzpicture}
	\end{center}
\end{figure}

As a second assumption for $\Phi(t)$ in addition to Markovianity,
we specifically consider an evolution where the Lindbladian is time-dependent but with reasonably small fluctuation; more specifically, we impose Lipschitz continuity
$\norm{L(t_2)-L(t_1)} \leq \eta (t_2 - t_1)$ for some fixed~$\eta>0$.
This condition is physically justified whenever the generator does not change significantly in short time scales. 
In this way we translate our problem into finding separately for every time interval between two snapshots $M_{p-1}$ and $M_p$ a time-independent Lindbladian $\set{L_p}_{p=1}^N$ whose evolution $\exp L_p$ approximates the mapping $T_p$ and at the same time we retain consistency with a reasonably stable Lindblad generator over the whole evolution. In other words, we characterise each snapshot $M_p$ as a sequence of time-independent Markovian evolutions, i.e.,
$M_p' \approx \prod_{p=p'}^1 \e^{L_p}$,
which will converge in the limit of infinitely small time intervals (we provide a quantitative expression for the bound for the error in \cref{app:timeDep}).
We remark that, conveniently, the times $t_p$ when the snapshots $M_p$ are taken is not required input.

\medskip

The algorithm implementing this approach goes in a similar vein as the ones illustrates previously, where at its core is a convex optimisation task.

\begin{algorithm}\small
	\SetKwInOut{Input}{Input}\SetKwInOut{Output}{Output}
	\Input{integer $N$, matrices $M_1,\dots,M_N$, 
		positive integer $m_{\max}$,
		positive real~$\beta$ }
	\Output{set of Lindbladians  $\set{L_p}_p$ generating best fit map to $\Theta_p = M_{p}M_{p-1}^{-1}$}
	
	\For{ $\vec m \in \set{m_{\max}}^{\times d^2}$}{
		
		\For{$p=1,\dots,N$}{
			$\Theta_p \leftarrow M_{p}M_{p-1}^{-1}$ \\ 
			$\set{\ell_j, r_j}_{j=1}^{d^2} \leftarrow$ set of left and right eingenvectors of $\Theta_p$ \\
			$G(p,\vec m) \leftarrow \log \Theta_p + 2\pi \i \sum_{j=1}^{d^2} m_j \, \ketbra{\ell_j}{r_j}$ 
			
			\smallskip
			
			Run convex optimisation programme on variable $X(p,\vec m)$:\\
			\Indp $\begin{array}{ll}
				\text{minimise} & \Fnorm{X(p,\vec m)-G^\Gamma(p,\vec{m})}   \\
				\text{subject to } &X(p,\vec m) \text{ hermitian}\\
				&\omega_\perp X(p,\vec m) \omega_\perp \geq 0 \\
				&\norm{\Tr_1[X(p,\vec m)]}_{1} = 0 \\
				&\Fnorm{X(p,\vec m)-X(p-1, \vec m)}\leq  \beta
			\end{array}$\\
			\Indm  
			Store $X(p,\vec m)$
		}
		
		$\mathrm{distance}(\vec{m}) \leftarrow \sum_p \Fnorm{\Theta_p - \exp X^\Gamma(p,\vec m)}$  
	}
	\Return $\set{L_{p} = X^\Gamma(p,\vec{m}')}_p \text{ for } \vec{m}'=\mathrm{argmin}\, \{\mathrm{distance}(\vec{m})\}$
	\caption{time-dependent Markovian evolution}\label{alg:time_dependent}
\end{algorithm}

\FloatBarrier

The Lipschitz continuity constraint is here characterised by the parameter $\beta \equiv \beta(\eta)$ -- for the difference between two consecutive generators. (Note that the Frobenius norm is invariant with respect to the $\Gamma$ involution.)
This ensures that even though we are not taking measurements at infinitesimally short time intervals, the resulting channel $\widetilde{M} = \Pi_{p=1}^Te^{L_p}$ is close to the true channel in the case of a time-dependent Markovian evolution.
We make this argument rigorous by calculating error bounds in the resulting channel with respect to the number of snapshots taken (see \cref{sec:bounds}).
When the snapshots are taken at regular intervals $\mc T /N$, we have
\begin{equation}
	\beta(\eta) = \eta \frac{\mc T^2}{N^2} + 4 \mathfrak{R},
\end{equation}
where $ \mathfrak{R}$ is the error for the truncation of the Magnus expansion of $\Phi(t)$ at the first order (cfr. \cref{sec:Lipschitz_bound}). 

In a more practical sense, when using the algorithm one can start with an estimate for $\beta$ and then vary this parameter over multiple runs, observing how the output changes.
Increasing the value of $\beta$ allows for a search over a larger space around the Lindbladian obtained in the previous iteration of the $p$-loop (that is, the generator of the previous time interval), which will likely give a better fit to the most suited branch of~$\log \Theta_p$.
From a physical perspective, if the distance is significantly reduced when augmenting $\beta$, this suggests that the Lindblad generator is varying more rapidly than anticipated.
Conversely, obtaining good fits to the data when setting $\beta$ with small values suggests that the process is close to time-independent dynamics.
Against intuition, relaxing the bound on the difference between two consecutive Lindbladians can instead increase the distance between $M_p$ and its Markovian estimation $\Pi_{j=1}^p L_j$, but not beyond explicitly derived bounds~\cite{CEW09}. The underlying reason for this behaviour is again that there is no direct correspondence between distance of two matrices and their logarithms.

\section{Reconstructing perturbed degenerate eigenspaces}\label{sec:eigenspaces}
The analysis in~\cite{Markdynamics08} is based on the assumption that all eigenvalues are non-degenerate: the set of matrices with this characteristic form a dense set\footnote{{cfr.} the \emph{Zariski} topology \cite{DanilovAlgGeo}.} in the matrix space.

Working with an $\NDsquare$ matrix is needed in order to deal with a set of eigenvectors where each of them is unique (up to a scalar factor). This also ensures that the matrix logarithm is unique, up to complex branches. Conversely, if $M$ is not an $\NDsquare$ matrix, there is then a continuous freedom in the choice of eigenbasis, and thus uncountably infinitely many different matrix logarithms (not just the countable infinity of complex branches of the logarithm). This is due to the fact that, in case the matrix is diagonalizable but has an eigenvalue which is not simple, the corresponding degenerate eigenspace allows for infinite number of choices of basis vectors. In the more general case, when $M$ is not diagonalizable, the Jordan canonical form again admits an uncountably infinite number of choices of generalized eigenvectors~\cite{Weintraub}.
In \cref{sec:matrix_perturbation} we show that it is always possible, given a defective or derogatory Choi-hermitian matrix, to produce an arbitrarily close $\NDsquare$ matrix that preserves the hermiticity-preserving property, so that this does not alter the output of our algorithm.

In the reminder of this section we tackle the opposite (and more physically relevant) situation, when we are given an $\NDsquare$ matrix which arises as a \emph{perturbation} of a channel with degenerate subspaces.
This would be the case if, for example, the channel was a perturbation of a quantum gate -- the typical situation when benchmarking quantum computing devices.

\subsection{Explaining the problem with degenerate subspaces}\label{sec:perturbation_example}

Imagine we are given an $\NDsquare$ matrix $M$ which may come from the perturbation of a quantum embeddable operator having degenerate subspaces. For instance perturbations of many of the standard unitary gates in quantum computation, such as Pauli gates. This is a delicate situation since in the general case the hermiticity-preserving basis vectors structure characterising Lindblad operators, as discussed in \cref{sec:Markovianity}, will be broken under perturbation (even when this is very small) due to the instability of the basis of multi-dimensional eigenspaces. Thus, when looking for the closest Lindbladian, the convex optimisation approach will possibly retrieve a Lindbladian whose matrix exponential is very distant from the original unperturbed operator~$M$.

To illustrate this argument, consider the Pauli $X$-gate and restrict our attention to the hermiticity-preserving condition (noting that the closest hermiticity-preserving matrix will always be closer than the closest Lindbladian since the latter imposes more constraints).
The operator $X$ has a two-fold degenerate eigenvalue~1 with eigenspace $\mathrm{span} \{(1,1,1,1);$ $(1,-1,-1,1)\}$ and another two-dimensional eigenspace $\mathrm{span} \set{(1,0,0,-1);(0,1,-1,0)}$ with respect to eigenvalue~$-1$. We denote these vectors by $w_1,w_2,w_3,w_4$, respectively; observe that all of the eigenvectors are self adjoint ($w_3$ is self-adjoint up to an irrelevant overall phase). 
Write
$E = \varepsilon \big( \dyad{w_1} -\dyad{w_2} + \dyad{w_3}-\dyad{w_4} \big)$.
Then the $\NDsquare$ perturbed operator $X+E$ then has eigenvalues
$1+\varepsilon, 1 - \varepsilon, -1+\varepsilon,-1-\varepsilon$
with respect to the eigenbasis $\set{w_1,w_2,w_3,w_4}$, and its 0-branch logarithm $\log (X+E)$ has eigenvalues $\varepsilon,-\varepsilon, \i \pi-\varepsilon, \i\pi +\varepsilon$ (up to first order in  $\varepsilon$) with respect to the same eigenbasis. At this point, if we look for the closest hermiticity-preserving operator, since all the eigenvectors are self-adjoint we obtain a matrix having again the same eigenbasis and keeping the real part of the eigenvalues of $\log (X+E)$, i.e.,
$\varepsilon,-\varepsilon,-\varepsilon, \varepsilon$. Clearly, the exponential of this matrix is close to the identity map and not the expected Pauli $X$-gate, even for very small~$\varepsilon$. The same will apply for any complex branch of $\log (X+E)$ where we can add $2\pi \i \ \mathrm{mod}\, k$ to any eigenvalue of $\log (X+E)$.

If we instead consider a perturbation of the same magnitude but along hermitian-related vectors of the eigenspace of~$-1$, say,
$E = \varepsilon \big( \dyad{w_1}-\dyad{w_2} +\dyad{w_5}-\dyad{w_6} \big)$ with
$w_5 = w_3 + w_4=(1,1,-1,-1)$ and $w_6 = w_3 - w_4=(1,-1,1,-1)$ so that $w_5^\dagger=w_6$,
then $\log(X+E)$ has again eigenvalues
$\varepsilon,  - \varepsilon, \i \pi +\varepsilon, \i\pi-\varepsilon$
but this time with respect to eigenbasis $\set{w_1,w_2,w_5,w_6}$. In this case, by choosing the branch appropriately (i.e. picking $G_0 - 2\pi \i \dyad{w_6}$) the logarithm has eigenvalues $\varepsilon,-\varepsilon, \i \pi-\varepsilon, -\i\pi +\varepsilon$ and its closest hermiticity-preserving map has eigenvalues $\varepsilon,-\varepsilon, \i \pi, -\i\pi$. As expected, taking the exponential of this matrix will give a map very close to $X$.
This example highlights both the importance of reconstructing a pair of hermitian-related eigenvectors for the eigenvalue~$-1$ as well as choosing the correct complex branches of the matrix logarithm.

Our strategy to overcome this complication is to reconstruct a compatible hermiticity-preserving structure for the invariant subspaces of those eigenvalues of $M$ that are close to each others and that presumably stem from a perturbation of a unique degenerate eigenvalue. Hence we will look for a new basis of eigenvectors that we will interchange with the actual eigenbasis, creating a new operator $R$ on which to run the convex optimisation problem to retrieve the closest Lindbladian to $\log R$.

To provide an insight on our approach, in the next session we discuss the single-qubit case; the analysis of the general case can be found in \cref{app:multi-qubit}.

\subsection{Reconstructing the hermiticity preserving structure: one-qubit case}\label{sec:reconstructing_HP_structure}

For single-qubit channels we have the following possibilities: (a)~one pair of close eigenvalues, (b)~three close eigenvalues, (c)~two different pairs, or (d)~all four eigenvalues are close.

Consider case~(a) with a pair of eigenvalues that is close to the real negative axis and where $w_1$ and $w_2$ are the corresponding eigenvectors. Assume that they come from a real negative 2-fold degenerate eigenvalue. We seek a new pair of hermitian-related eigenvectors $\setn{v,v^\dagger}$ such that $\mathrm{span}\setn{v,v^\dagger} = \mathrm{span}\setn{w_1,w_2}$.\newline
Thus we want to find coefficients $\alpha,\beta,\mu,\nu$ such that $\ket{v} = \alpha \ket{w_1} + \beta \ket{w_2}$ and $\ket{v^\dagger} = \alpha^\ast  \ket{w_1^\dagger} + \beta^\ast  \ket{w_2^\dagger} = \mu \ket{w_1} + \nu \ket{w_2}$.
The solution will parametrise a set of compatible hermitian-related eigenvectors that we will interchange with the vectors $w_1$ and $w_2$.

If the two close eigenvalues are near the positive real axis, then if we assume they come from a real eigenvalue we have an additional option: a pair of two self-adjoint eigenvectors $\set{v_1,v_2}$ spanning the eigenspace of $w_1$ and $w_2$. In other words, we look for coefficients $\alpha,\beta,\mu,\nu$ such that $\ket{v_1} =\alpha \ket{w_1} + \beta \ket{w_2}$ and $\ket{v_2} =\mu \ket{w_1} + \nu \ket{w_2}$ with $\ket{v_1} = \ket{v_1^\dagger}$ and $\ket{v_2} = \ket{v_2^\dagger}$. Again, we should implement any possible solution $\set{v_1,v_2}$ as a new basis of eigenvectors related to the pair of eigenvalues.

The third option for case (a) is a pair of complex eigenvalues not close to the real axis. If this was originally a unique, two-fold degenerate complex eigenvalue $\lambda$, the hermiticity-preserving condition implies a second two-dimensional eigenspace with respect to eigenvalue $\lambda^\ast$; this will then be case~(c).

Now consider case~(b), where three eigenvalues of $M$ are all close. In order to represent the perturbation of a quantum embeddable channel, they cannot originate from a 3-fold degenerate complex eigenvalue, since it is not possible to pair three eigenvalues each with a complex partner in a 4-dimensional space. They cannot originate from a real negative eigenvalue either, since the same argument will apply for the logarithm of $M$, which must also be hermiticity-preserving. $M$ may instead be compatible with a Markovian dynamics if the eigenvalues are close to the real positive axis. Denoting by $w_1,w_2$ and $w_3$ the corresponding eigenvectors, we want to substitute them with a set $\setn{v,v^\dagger,z}$ with $z=z^\dagger$ and $\mathrm{span}\setn{v,v^\dagger, z} =  \mathrm{span}\setn{w_1,w_2,w_3}$ reconstructing an original unperturbed 3-dimensional eigenspace. Alternatively, we should find a new eigenbasis of three self-adjoint vectors.

In case (c) we say that we have two pairs of different eigenvalues. As we discussed in case (a), if one of this pair stems from a 2-fold degenerate and complex eigenvalue $\lambda$, then by the hermiticity-preserving condition the other pair should be close to $\lambda^\ast$.
We should thus find a basis $v_1,v_2$ for the eigenspace of $\lambda$ and $v_3,v_4$ for the eigenspace of $\lambda^\ast$ such that $v_3=v_1^\dagger$ and $v_4=v_2^\dagger$.
Now consider pairs close to the real axis. If both can be associated with a real negative eigenvalue, then the basis of each 2-dimensional subspace corresponding to one pair of eigenvalues should be chosen to be hermitian-related vectors as in case~(a). If one pair is close to a real positive number and the other to a negative one, than the eigenspace of the positive eigenvalue can be spanned either by an hermitian-related pair of eigenvectors or two self-adjoint vectors. The last option in case (c) is that both pairs comes from two different 2-fold degenerate real positive eigenvalues. In this case, again each eigenspace can be spanned by a pair of hermitian-related vectors or two self-adjoint vectors.

If all four eigenvalues are close (case~(d)) and we presume that they come from a single 4-dimensional eigenspace, then $M$ must necessarily be the (perturbed) identity channel up to a real scalar factor. We check if this is close enough according to the error tolerance parameter.

In \cref{tab:1-qubit_deg} we summarize the above described scenarios for the unperturbed operator.
The general multi-qubit case is discussed in \cref{app:multi-qubit}.

\begin{table}[htbp]
	\footnotesize
	\begin{center}\renewcommand{\arraystretch}{2.5}
		\begin{tabular}{|C{3cm}|C{3.7cm} C{3.7cm}|}
			\hline
			\textbf{one single 2-dim degeneracy} & (i) positive eigenvalue with either h.r. or s.a. basis vectors & (ii) negative eigenvalue with h.r. basis vectors \phantom{aaaaaaaaaaaa} \\
			\hline
			\textbf{one single 3-dim degeneracy} & \multicolumn{2}{C{8.3cm}|}{positive eigenvalue with either 1 h.r. and 1 s.a. basis vectors, or 3 s.a. basis vectors } \\
			\hline
			\multirow{2}{3cm}{\vspace{1cm} \\ \centering\textbf{two distinct 2-dim degeneracies}} & (i) two positive eigenvalues each with  1 h.r. pair or 2 s.a. basis vectors & (ii)
			two negative eigenvalues each with h.r. basis eigenvectors \phantom{aaaaaaaaaaaa}\\[-8pt]
			& (iii) a positive eigenvalue with either h.r. or s.a. basis vectors and a negative eigenvalue with  h.r. basis vectors & (iv) a pair of complex conjugate eigenvalues $\lambda$ and $\lambda^\ast$ with   two  h.r. partner  vectors in the partner subspaces \phantom{aaaaaaaaaaaa} \\
			\hline
			\textbf{one single 4-dim degeneracy} &  \multicolumn{2}{C{8.3cm}|}{a single 4-degenerate real eigenvalue \break (identity channel)}\\
			\hline
		\end{tabular}
	\end{center}
	\caption{Structure of a multi-dimensional eigenspace for an hermiticity-preserving operator on one qubit. Here we abbreviate ``hermitian related'' by h.r. and ``self-adjoint'' by s.a.}\label{tab:1-qubit_deg}
\end{table}

\FloatBarrier

\subsection{Pre-processing algorithms for a single snapshot} \label{sec:pre-processing}

We have implement the strategy of retrieving a set of hermiticity-preserving eigenvectors spanning the subspaces of the clusters as a pre-processing algorithm to our complex optimisation programme.
The first task of this algorithm is to detect sets of eigenvalues which are close together -- with closeness parametrised by a precision parameter~$p$, given as input -- likely stemming from perturbations of degenerate eigenvalues.
(If there are no degenerate eigenvalues, then the algorithm proceeds straightforwardly to convex optimisation stage to find $L$ or $\mu_{\min}$.)
Once these clusters of eigenvalues have been identified, the successive step of the pre-processing procedure is to construct bases for the degenerate eigenspaces which have a compatible hermiticity-preserving structure.

Note that there is an infinite number of possible basis choices which respect the hermiticity-preserving structure, and which basis is chosen will affect the distance to the closest Lindbladian.
To handle this, we use a randomised construction to generate $r$ bases satisfying the hermiticity-preserving property, and we run the convex optimisation algorithm on each one, keeping the optimal result.
Clearly, running the algorithm with higher values for~$r$ will give a better fit~(cfr.~\cref{sec:examples}).~

\medskip

The implementation of pre-processing procedure is rather technical and we refer the Reader to \cref{app:pre-processing} -- and in particular to \cref{alg:construct_basis_complex}, \cref{alg:construct_basis_real} and \cref{alg:construct_basis_random} -- for full details.

\subsection{Stability of hermiticity preserving subspaces}\label{sec:stability_theorem}

In principle, one critical occurrence in our pre-processing approach is that, given a tomographic snapshot $M$ with two (or more) approximately degenerate eigenvalues, it may be the case that no compatible hermiticity-preserving basis can be found, even if $M$ is the result of the perturbation of an hermiticity-preserving channel.
We resolve this question in a positive way, namely, using tools from matrix perturbation theory we show that it is actually always possible to find a basis for $M$ which satisfies the hermiticity-preserving conditions, at least approximately\footnote{We note that for the numerical examples in this paper this situation did not arise, and as such our algorithms only deal with the case where the hermiticity preserving basis is exact.}. That is,

\begin{theorem}[informal version of \cref{thm:stability_h-r-_structure}]
	Let $\Lambda$ be an hermiticity-preserving map and $M= \Lambda + E$ its perturbed version. Assume $\lambda, \lambda^\ast$ are a pair of degenerate eigenvalues of $\Lambda$, giving rise under perturbation $E$ to two clusters of eigenvalues for $M$. 
	Let $\set{v_j}_j$ and $\{v_j'\}_j$ be the eigenvectors of $M$ to the clusters of $\lambda$ and $\lambda^\ast$, respectively.
	Then there exists a set of paired vectors $\{(w_j,w_j')\}_j$ such that $w_j^\dagger \approx w_j'$, and such that $\mathrm{span}\set{w_j}_j=\mathrm{span}\set{v_j}_j$ and $\mathrm{span}\{w_j'\}_j=\mathrm{span}\{v_j'\}_j$.
\end{theorem}
This theorem constitutes an analytical founding for our search over a hermiticity-preserving sets of vectors in the pre-processing phase.
We refer the Reader to \cref{sec:perturbation_theory} for a theoretical background in matrix perturbation theory, followed by the formal version of the theorem (including the case where $\lambda$ is a real eigenvalue) and the proof thereof.

\FloatBarrier

\section{The main algorithm} \label{sec:main_alg}

Our main algorithm provides a recipe for evaluating non-Markovianity from a single-snapshot. (The multiple-snapshot case building on \cref{alg:MR:multiple_snapshots}  and \cref{alg:time_dependent} is a straightforward modification.)
It takes as input an estimate of the channel, $M$, a precision parameter, $p$, an accuracy parameter $\varepsilon$, and an integer, $r$, which determines how many random basis choices will be tested in the case of degenerate eigenvalues.
The first step in the algorithm is to determine whether the input matrix has eigenvalues that may originate from a perturbation of degenerate eigenspaces.
If the matrix has no such eigenvalues, we run the convex optimisation algorithm directly on the input.
Conversely, if the matrix has eigenvalues that may originate from perturbed degenerate eigenspaces, before running the convex optimisation we need to construct bases for the degenerate eigenspaces which have a compatible hermiticity-preserving structure, as discussed in \cref{sec:pre-processing}.

Once these matrices have been constructed we run the convex optimisation \cref{alg:MR:Lindbladian} on them to determine whether or not there exists a memoryless channel in the $\varepsilon$-neighbourhood of $M$.
If a Markovian map, $T$, exists within the $\varepsilon$-ball, the algorithm returns the Lindbladian $L$ satisfying $T=e^{L}$. 
If no Markovian channel is retrieved, the main algorithm calls \cref{alg:MR:NoMark_parameter} which calculates the non-Markovianity parameter~$\mu_{\min}$  .

\medskip

A simplified version of the algorithm is presented here as \cref{alg:MAIN}. The complete pseudo-algorithm can be instead found in \cref{app:main}.

\begin{algorithm}\small
	\SetKwInOut{Input}{Input}\SetKwInOut{Output}{Output}
	\SetKw{Continue}{continue}
	\Input{$\NDsquare$ matrix $M$, integer $\mathrm{random}\_\mathrm{samples}$, positive real numbers~$p, \varepsilon$}
	\Output{Lindbladian, $L$, consistent with $M$ (within error tolerance $\varepsilon$), or if no such $L$ exists, non-Markovianity parameter~$\mu$}
	
	\uIf{$M$ \emph{has no degenerate eigenvalues within precision} $p$}
	{
		$L \leftarrow$ Run \cref{alg:MR:Lindbladian} on $M,M,\varepsilon$
		
		\uIf{ $\|M - \exp(L) \| < \varepsilon$}
		
		{\Return{$L$}}
		
		\Else{
			$\mu \leftarrow$ Run \cref{alg:MR:NoMark_parameter} on input $M, M, \varepsilon$
			
			\Return{$\mu$}
	}}
	\Else{
		
		\For{$i \in (0,\mathrm{random}\_\mathrm{samples})$}{
			
			$ R \leftarrow $ hermiticity-preserving matrix arising from $M$ \\
			$L_R \leftarrow$ Run \cref{alg:MR:Lindbladian} on $M,R,\varepsilon$ \\
			\uIf{$\|M - \exp(L_R) \| < \varepsilon$}
			{\Continue}
			
			\Else{
				$\mu_R \leftarrow$ Run \cref{alg:MR:NoMark_parameter} on input $M, R, \varepsilon$}	
			
		}
		\uIf{$\min{\|M - \exp(L_R') \| < \varepsilon}$}
		{\Return{$L_{R}$ \emph{with} $\min \|M - \exp(L_R) \|$}}
		\Else
		{\Return{$\min \mu_R$}}
	}
	\caption{Main algorithm, including pre-processing}\label{alg:MAIN}
\end{algorithm}

\FloatBarrier

\section{Numerical Examples with Cirq}\label{sec:examples}

In this section, we present the results of testing our algorithm numerically on noisy dynamics synthesised in Cirq~\cite{cirq}. The numerics serve as a benchmark of both our convex optimisation and pre-processing algorithms.
Since we know the `ideal' channel in each test case, we can compare the outcomes against the true values. The algorithms performed well in all cases.
It is worth emphasising that the algorithm itself does not require any information about what the `ideal' channel is -- it merely needs the tomographic data of the channel under consideration. Here the ideal operator is only used to benchmark the results against.

The tomography was carried out using code available in \cite{gitlab_repo}; linear inversion process tomography was performed in the standard basis 
(we chose to carry out linear inversion process tomography as it is unbiased).
The drawback of linear inversion process tomography is that it can lead to process matrices that are unphysical. 
However, since our algorithms only search over physical channels to find the best Markovian fit we do not view this as a serious drawback.
Moreover, the algorithms themselves are not sensitive to what sort of process tomography was used, so it is straightforward to run the algorithms with a different choice of settings for the process tomography to investigate whether this affects the results at all. 
We leave this to future work.

\subsection{One-qubit numerics}

In every one-qubit example each measurement in the simulated process tomography was repeated 10,000 times. Throughout this section we will express distances between matrices using the Frobenius norm.

\subsubsection{Unitary 1-qubit example: $X$-gate}
In \cref{sec:perturbation_example} we demonstrate how the naive algorithm (with no pre-processing) is not guaranteed to find the closest Markovian channel if the input is a `noisy' $X$-gate, as the degenerate eigenbases are not stable with respect to perturbations.\footnote{The transfer matrix for an $X$-gate has two two-fold degenerate eigenspaces: one with eigenvalue $+1$, and one with eigenvalue $-1$.} This numerical test shows that our pre-processing scheme solves the problem.

We used Cirq's density matrix simulator to simulate process tomography on a 1-qubit $X$-gate.
We then applied our convex optimisation algorithm to extract the full description of the best-fit Lindbladian; we denote the generated Markovian channel by $T_X$.
In \cref{fig:X-gate} we show how the process fidelity varies with the number of random samples we set the code to run for.
The optimum fidelity between the tomographic snapshot and $T_X$ was $100.92\%$ -- while a fidelity of over $100\%$ is clearly not physical, small `overfidelities' due to numerical errors are not uncommon, and the result shows that the optimum fidelity was close to $100\%$.
As expected, increasing the number of random samples increases the process fidelity between $M_X$ and $T_X$, although there is some fluctuation due to the randomised nature of the algorithm.
This shows that the `direction' which the matrix is perturbed in is crucial -- as we saw in \cref{sec:perturbation_example}, perturbations of the same magnitude along different directions can have very different effects on whether a compatible Lindbladian is found.

\begin{figure}[htbp]
    \centering
    \begin{subfigure}[t]{0.45\textwidth}
        \centering
        \includegraphics[width=\linewidth]{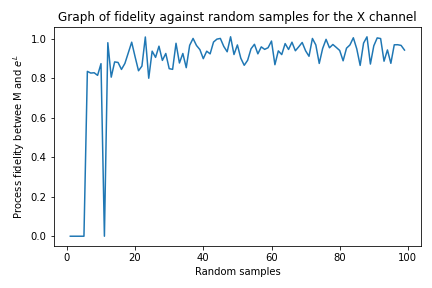}
        \caption{Random samples in the range $(0,100)$} \label{fig:X100}
    \end{subfigure}
    \hfill
    \begin{subfigure}[t]{0.45\textwidth}
        \centering
        \includegraphics[width=\linewidth]{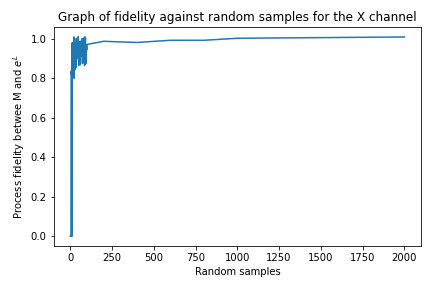}
        \caption{Random samples in the range $(0,2000)$} \label{fig:X1000}
    \end{subfigure}
\caption{Results for a simulated 1-qubit $X$-gate. In both cases the algorithm ran with $\varepsilon = 1$. A fidelity of zero indicates that no Markovian channel was found in that run.}\label{fig:X-gate}
 \end{figure}

\subsubsection{Markovian 1-qubit example: depolarizing channel}

The depolarizing channel, implementing the evolution
\begin{equation}
\rho \rightarrow (1-p)\rho + \frac{p}{3}\left(X\rho X + Y\rho Y + Z\rho Z \right) ,
\end{equation}
is an example of a non-unitary, but Markovian, quantum channel.

We simulated process tomography on a depolarizing channel with $p=0.3$.
The transfer matrix for this channel has a non-degenerate $+1$ eigenvalue, and a three-fold degenerate eigenspace with eigenvalue $0.6$.
We used our convex optimisation algorithm to construct the closest Markovian channel.

In \cref{fig:depolarizing} we show how the process fidelity of the tomography result and the closest Markovian channel varies with the number of random samples we allowed the code to run for.
The optimum fidelity found between the Markovian channel found by the algorithm and the tomographic snapshot was $99.96\%$.

Comparing \cref{fig:depolarizing} with \cref{fig:X-gate} (and noticing the different $y$-axis scales in the two graphs) we see that unlike with the $X$-gate, \cref{alg:MAIN} always finds a good approximation to the tomographic result, even for very few random samples.
This suggests that the direction of perturbation is less important for the depolarizing channel than for the $X$-gate.

\begin{figure}[htbp]
    \centering
    \begin{subfigure}[t]{0.45\textwidth}
        \centering
        \includegraphics[width=\linewidth]{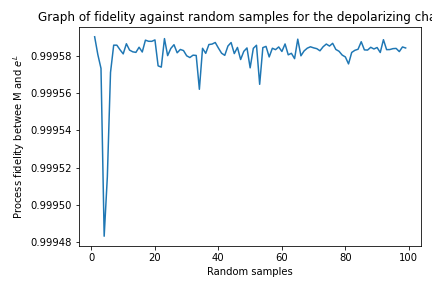}
        \caption{Random samples in the range $(0,100)$} \label{fig:D100}
    \end{subfigure}
    \hfill
    \begin{subfigure}[t]{0.45\textwidth}
        \centering
        \includegraphics[width=\linewidth]{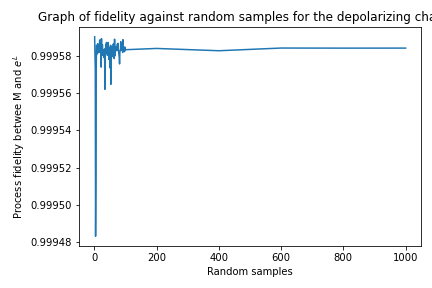}
        \caption{Random samples in the range $(0,1000)$} \label{fig:D1000}
    \end{subfigure}
\caption{Results for a simulated 1-qubit depolarizing channel.}\label{fig:depolarizing}
 \end{figure}

\subsubsection{Non-Markovian and Markovian 1-qubit examples: unital quantum channel}\label{sec:Unital_quantum_channel}

In \cite{Andersson_2007} Kraus operators are constructed for a unital quantum channel, and conditions for the channel to be Markovian have been derived.\footnote{A channel is said to be unital if the maximally mixed state is a fixed point of the evolution.}
For a master equation
\begin{equation}
\dot{\rho}(t) = \gamma_1\sigma_1 \rho \sigma_1 +  \gamma_2\sigma_2 \rho \sigma_2+  \gamma_3\sigma_3 \rho \sigma_3 -  (\gamma_1 + \gamma_2 + \gamma_3)\rho ,
\end{equation}
the Kraus operators for the evolution are:
\begin{equation}
\begin{split}
& A_0 = \frac{1}{2}\left(1 + \Gamma_1 + \Gamma_2 +\Gamma_3 \right)^\frac{1}{2} \mathbb{I} \\
&A_1 = \frac{1}{2}\left(1 + \Gamma_1 - \Gamma_2 -\Gamma_3 \right)^\frac{1}{2} \sigma_1 \\
&A_2 = \frac{1}{2}\left(1 - \Gamma_1 + \Gamma_2 -\Gamma_3 \right)^\frac{1}{2} \sigma_2 \\
&A_3 = \frac{1}{2}\left(1 - \Gamma_1 - \Gamma_2 +\Gamma_3 \right)^\frac{1}{2} \sigma_3,
\end{split}
\end{equation}
where
\begin{equation}
\Gamma_i \coloneqq e^{- \int_0^t ds\left( \gamma_j(s) + \gamma_k(s)\right)}
\end{equation}
and $\{i,j,k\}$ is a permutation of $\{1,2,3\}$.

The inequality for the channel to be completely positive is
\begin{equation}
\Gamma_i + \Gamma_j \leq 1 + \Gamma_k,
\end{equation}
where $\{i,j,k\}$ is a permutation of $\{1,2,3\}$.
The condition for the channel to be Markovian is given by $\gamma_i(t) > 0$, for all~$t$ and~$i$.

We simulated process tomography in Cirq on a non-Markovian unital quantum channel with $\gamma_1 = -200$, $\gamma_2 = 201$, $\gamma_3 = 200.5$.
The transfer matrix for this channel is non-degenerate, and the process fidelity between the ideal transfer matrix and the tomographic snapshot we obtained is $99.96\%$.
Applying \cref{alg:MAIN} with a large error tolerance parameter ($\varepsilon = 1$) found a Markovian channel, $T$, which had process fidelity of $82.18\%$  with the tomographic snapshot.

In \cref{fig:UQC} we show the results of comparing the non-Markovianity parameter $\mu$ against the accuracy~$\varepsilon$.
We first ran \cref{alg:MAIN} with varying~$\varepsilon$ using a step size $\delta_{\mathrm{step}}=0.5$ in the \cref{alg:MR:NoMark_parameter} subroutine:
the results are shown in  \cref{fig:uqc_big_steps}.
For $\varepsilon \in (0,0.186)$ no  $\mu$-parameter was found by the algorithm.
Conversely in \cref{fig:uqc1}, where the same analysis is run with a step size $\delta_{\mathrm{step}}=0.01$, a non-Markovianity parameter was found for all $\varepsilon \geq 0.025$.

\begin{figure}[htbp]
	\centering
	\begin{subfigure}[t]{0.45\textwidth}
		\centering
		\includegraphics[width=\linewidth]{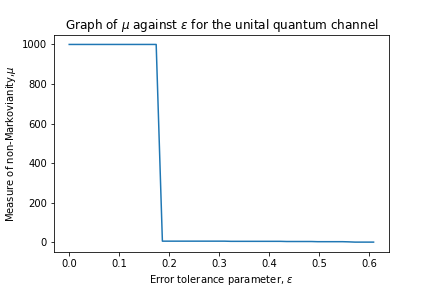}
		\caption{$\varepsilon$ in the range $(0,0.609)$ with step size of $0.5$} \label{fig:uqc_big_steps}
	\end{subfigure}
	\hfill
	\begin{subfigure}[t]{0.45\textwidth}
		\centering
		\includegraphics[width=\linewidth]{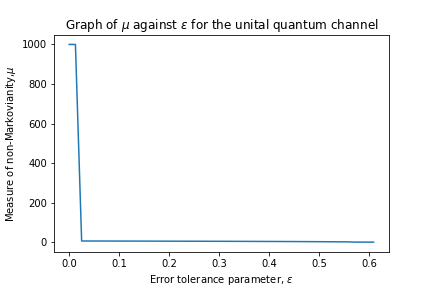}
		\caption{$\varepsilon$ in the range $(0,0.609)$ with step size of 0.01} \label{fig:uqc1}
	\end{subfigure}

	\begin{subfigure}[t]{0.45\textwidth}
		\centering
		\includegraphics[width=\linewidth]{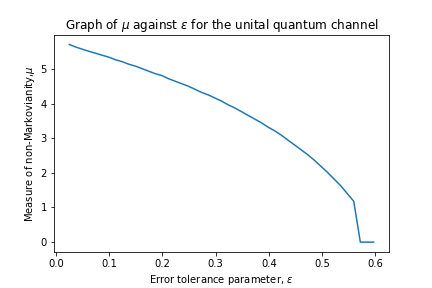}
	\caption{$\varepsilon$ in the range $(0.025,0.609)$} \label{fig:uqc2}
	\end{subfigure}
	\begin{subfigure}[t]{0.45\textwidth}
		\centering
		\includegraphics[width=\linewidth]{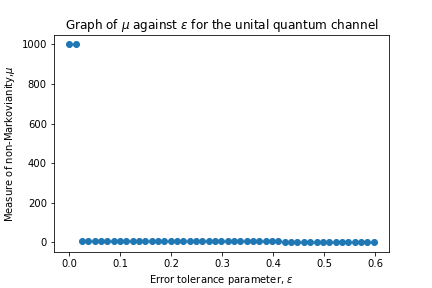}
	\caption{\cref{fig:uqc1} as a scatter plot. The diagonal line from the case where no $\mu$ was found to the case where $\mu$ was found in \cref{fig:uqc1} is simply a result of the distance between data points on the $x$-axis, as demonstrated by the scatter plot.}
	\end{subfigure}
	\caption{Results for a simulated unital quantum channel. A value of $\mu = 1000$ indicates that no $\mu$ was found by the algorithm. In \cref{fig:uqc_big_steps} the step size is 0.5. In all other figures the step size is 0.01.}\label{fig:UQC}
\end{figure}

This can be understood by recalling that \cref{alg:MR:NoMark_parameter} does not search for the channel $T$ that minimises $\mu$ and is within an $\varepsilon$-ball of $M$.
It searches for the Lindbladian $L$ that minimises $\mu$ and is within a $\delta$-ball of $\log(M)$, then checks whether $\|e^{L} - M \| \leq \varepsilon$.
Since the bounds on $\delta$ are not tight, it repeats this process for $\delta \in \{\delta_i = \delta_{\min} + \delta_{\mathrm{step}} \mid \delta_i < \delta_{\max}  \}$, and keeps the best result.
If $\delta_{\mathrm{step}}$ is too large, this can lead to the algorithm failing to select a $\mu$-value within the $\varepsilon$-ball of $M$, even when one exists: this is what has happened in \cref{fig:uqc_big_steps}. The problem can be eliminated by running \cref{alg:MR:NoMark_parameter} with a smaller step size.
An illustration explaining this intricacy is given in \cref{fig:mu_illustration}.

\begin{figure}[htbp]
	\centering
	\scalebox{.81}{
	\begin{tikzpicture}[use Hobby shortcut,closed=true]

		\definecolor{modgray}{RGB}{190,190,190}

		\filldraw[fill=modgray, thick] (0,0) circle (2.5);
		\draw[<->](-0.05,0)--(-2.5,0);
		\node at (-1.25,-0.25) {\large $\varepsilon$};

		\filldraw[red](0,0) circle (2pt);
		\filldraw[blue](1.2,.75) circle (2pt);
		\filldraw[blue](.3,-2.7) circle (2pt);

		\node at (0.17,0.25) {$M$};
		\node at (0.17+1.2,0.25+.75) {$T_1$};
		\node at (.65,-2.7) {$T_2$};

		\draw [->,thick] (3,1) to [out=30,in=150] (5.5,1);
		\node at (4.25,.9) {\large $\log$};

		\begin{scope}[shift={(8.5,0)}]
			\filldraw[color=modgray] (0,-1.8)..(-.6,-1.1)..(-1.9,-0.7)..(-1.6,0)..(-1.4,.5)..(-1.1,0.9)..(-1.5,1.4)..(-.3,2.2)
			..(0,2.45) ..(.2,1.8)..(.8,1.5)..(.9,1.3)..(1.7,.7)..(1.9,0.2)
			..(1.3,-.3)..(1.8,-.7)..(1.7,-1.4)..(1.3,-1.5);

			\draw[thick] (0,0) circle (2.5); 
			\draw[thick] (0,0) circle (1); 

			\filldraw[red](0,0) circle (2pt);
			\filldraw[blue](1.2,0.5) circle (2pt);
			\filldraw[blue](0.3,-2.1) circle (2pt);

			\node[label={\small $\log(M)$}] at (0.4,-0.12) {};
			\node[label={\small $\log(T_1)$}] at (1.7,0.4) {};
			\node[label={\small $\log(T_2)$}] at (0.75,-2.2) {};

			\draw[<->](0.05,0)--(0.936,-0.35);
			\draw[<->](-0.05,0)--(-2.5,0);
			\node at (0.5,-0.4) {$\delta_{\min}$};
			\node at (-1.6,-0.2) {$\delta_{\max}$};
		\end{scope}
	\end{tikzpicture}
	}
	\caption{A schematic illustration of the distortion of the $\varepsilon$-ball (gray area) under the matrix logarithm, whose boundaries are included between two balls of radius $\delta_{\min}$ and $\delta_{\max}$. Let $T_i$ (for $i=1,2$) be quantum channels with non Markovianity-parameter $\mu_i$ such that $\|\log(T_i) - \log(M) \| = \delta_i$  and $\|T_i - M \| = \varepsilon_i$ where $\mu_1 > \mu_2$, $\varepsilon_2 > \varepsilon > \varepsilon_1$ and $\delta_2>\delta_1$. If \cref{alg:MR:NoMark_parameter} is run on the channel $M$ with $\delta_{\mathrm{step}} > \delta_2 - \delta_1$ the algorithm will not return a non-Markovianity parameter because the convex optimisation finds $T_2$, but this is rejected by the step which checks if $\|e^{L} - M \| \leq \varepsilon$. Running \cref{alg:MR:NoMark_parameter} with a smaller $\delta_{\mathrm{step}}$ solves this issue.} \label{fig:mu_illustration}
\end{figure}

In \cref{fig:uqc1,fig:uqc2} we show the results of running  \cref{alg:MAIN} with a varying $\varepsilon$ parameter, and where the \cref{alg:MR:NoMark_parameter} subroutine has a step size $\delta_{\mathrm{step}}=0.01$.
In this example, for $\varepsilon = 0$ no $\mu$ is found by the convex optimisation algorithm, indicating that there is no compatible Markovian channel in the $\varepsilon$-ball around $M$, even with white noise addition.
This occurs for certain channels because the addition of white noise cannot affect the hermiticity-preserving or trace-preserving condition for Markovianity.
So if the input tomographic matrix does not satisfy these conditions then no amount of white noise addition will render the channel Markovian.
However, for non-zero $\varepsilon$ there may be channels that are hermiticity-preserving and trace-preserving within the $\varepsilon$ ball.
For these values of $\varepsilon$ it is possible to find values of the non-Markovianity parameter $\mu_{\min}$.

In order to benchmark \cref{alg:MR:NoMark_parameter}, we have derived by hand the value of~$\mu$ for this example. 
Following our analytical calculation, we retrieve a non-Markovianity value $\mu=5.76983$ with error tolerance $\varepsilon = 0.01788$.  
As a comparison, by setting $\varepsilon = 0.01788$ in our algorithm (with $\delta_{\mathrm{step}}=10^{-5}$) this returns $\mu_{\min}=5.76539819$, and this is indeed the smallest value for $\varepsilon$ where we retrieve a valid measure $\mu_{\min}$. 
We refer the Reader to \cref{sec:mu_by_hand} for full details.

\subsection{Two-qubit examples}

In every two-qubit example each measurement in the simulated process tomography was repeated 100,000 times.

\subsubsection{Unitary 2-qubit example with degenerate eigenvalues:\\ ISWAP gate}

The action of the ISWAP gate is to swap two qubits, and introduce a phase of $i$ to the $\ket{01}$ and $\ket{10}$ amplitudes.
It has a six-fold degenerate eigenspace with eigenvalue $+1$, a two-fold degenerate eigenspace with eigenvalue $-1$, and two four-fold degenerate eigenspaces with eigenvalues $\pm i$.

We used Cirq's density matrix simulator to simulate process tomography on the ISWAP-gate.
We then used our convex optimisation algorithm to construct the closest Markovian channel, $T_I$.
We ran \cref{alg:MAIN} for 10 random samples, 642 times.
The results are shown in  \cref{fig:iswap}.
The optimum fidelity found between the Markovian channel found by the channel and the tomographic snapshot was $95.69\%$.

It is clear that a higher number of random samples is required than in the 1-qubit case with the $X$-gate.
This is to be expected, since the degenerate eigenspaces are higher-dimensional, and there are more of them.
The probability of randomly choosing a `good' basis is hence lower.

\begin{figure}[htbp]
	\centering
	\begin{subfigure}[t]{0.45\textwidth}
		\centering
		\includegraphics[width=\linewidth]{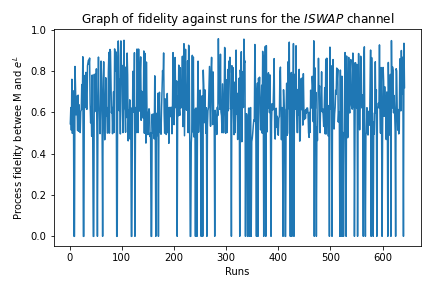}
		\caption{The process fidelity between the tomography result and the Markovian channel constructed by \cref{alg:MAIN} in each 10-random sample run.} \label{fig:iswap_full}
	\end{subfigure}
	\hfill
	\begin{subfigure}[t]{0.45\textwidth}
		\centering
		\includegraphics[width=\linewidth]{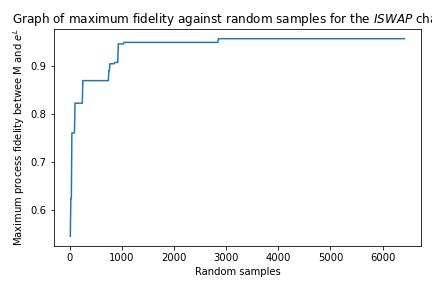}
		\caption{Results from the same simulation as \cref{fig:iswap_full}, where now we show \emph{maximum} process fidelity achieved against total number of random samples.} \label{fig:iswap_best}
	\end{subfigure}
	\caption{Numerics for the simulated ISWAP gate. \cref{alg:MAIN} was run for 10 random samples 642 times.} \label{fig:iswap}
\end{figure}

\subsubsection{Markovian 2-qubit example: depolarizing C$Z$ channel}
The C$Z$ is a two-qubit quantum gate.
Its action is to apply a $Z$-gate to the second qubit if the first qubit is in the $\ket{1}$ state.
Otherwise it acts as the identity on both qubits.

We simulated process tomography on a depolarizing C$Z$ channel, which applied the C$Z$-gate with probability 0.1, or an $XX$ / $YY$ / $ZZ$ -gate with probabilities 0.07, 0.08 and 0.09 respectively.
This is an example of a non-unitary, but Markovian, quantum channel.
The transfer matrix of the channel has six two-fold degenerate eigenspace with eigenvalue $+1$, $0.7$, $0.93$, $0.57$, $0.67$, and $0.47$, and non-degenerate eigenvalues of $0.68$, $0.66$, $0.48$ and $0.46$.

We ran \cref{alg:MAIN} on $M_{\textrm{C$Z$depol}}$ for 10 random samples, 503 times.
The results are shown in  \cref{fig:C$Z$D}.
The optimum fidelity found between the Markovian channel found by the channel and the tomographic snapshot was $94.75\%$.

As in the $ISWAP$ case, it is clear that a significantly higher number of random samples are needed in the two qubit case. 

\begin{figure}[htbp]
	\centering
	\begin{subfigure}[t]{0.45\textwidth}
		\centering
		\includegraphics[width=\linewidth]{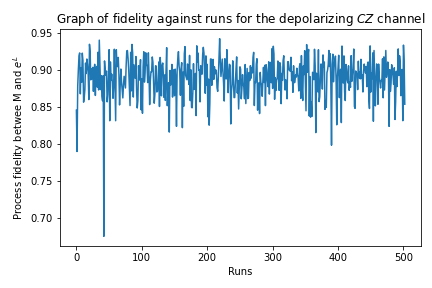}
		\caption{The process fidelity between the tomography result and the Markovian channel constructed by \cref{alg:MAIN} in each 10-random sample run.} \label{fig:C$Z$D_full}
	\end{subfigure}
	\hfill
	\begin{subfigure}[t]{0.45\textwidth}
		\centering
		\includegraphics[width=\linewidth]{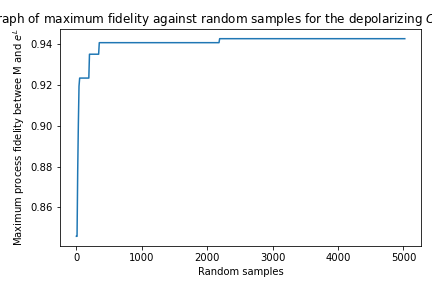}
		\caption{Results from the same simulation as \cref{fig:C$Z$D_full}, where now we show \emph{maximum} process fidelity achieved against total number of random samples.} \label{fig:C$Z$D_best}
	\end{subfigure}
	\caption{Numerics for the simulated depolarizing C$Z$ channel. \cref{alg:MAIN} was run for 10 random samples 503 times.} \label{fig:C$Z$D}
\end{figure}

\subsection{Time series examples}

One of the key advances of this work over previous papers on the subject is the extension to a time-series algorithm.
We have benchmarked the time-series algorithm on a simulated ideal $T$-gate, as well as noisy implementations of the $T$-gate, in Cirq.
The results for the ideal $T$-gate are shown in \cref{fig:idealT}. 
The time series algorithm finds a time independent Lindbladian which is an excellent fit for all times, as would be expected.
The results for noisy gates are shown in \cref{fig:depolT} for the depolarizing channel and \cref{fig:bitflipT} for the bit-flip channel. 
For each noisy implementation, we ran the algorithm in a `high-noise' and a `low-noise' regime. 
From the figures it can be seen that even in the presence of significant noise, where the measured channel is no longer close to the ideal $T$-gate, our algorithm still finds a time-independent Markovian channel which is a good fit to the overall evolution.

\begin{figure}[htbp]
\centering
\includegraphics[width=0.5\linewidth]{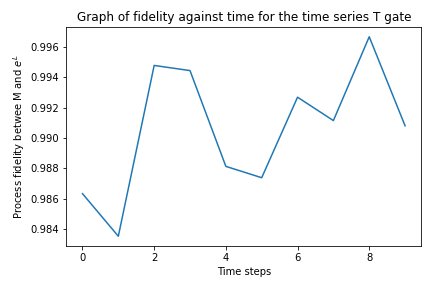}
\caption{The process fidelity between the tomography result and the Markovian channel constructed by \cref{alg:MR:multiple_snapshots} for a $T$-gate, repeated for 10 time steps}\label{fig:idealT}
\end{figure}

\begin{figure}[htbp]
	\centering
	\begin{subfigure}[t]{0.45\textwidth}
		\centering
		\includegraphics[width=\linewidth]{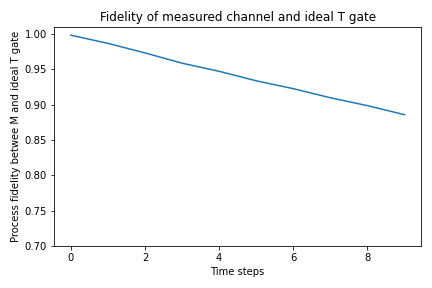}
		\caption{The process fidelity between the ideal T gate and the measured channel with an error rate of $\gamma=0.01$.}
	\end{subfigure}
	\hfill
	\begin{subfigure}[t]{0.45\textwidth}
		\centering
		\includegraphics[width=\linewidth]{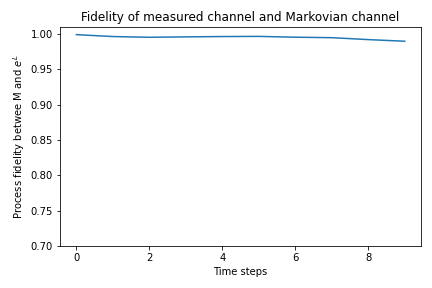}
		\caption{The process fidelity between the measured channel and the Markovian channel found by \cref{alg:MR:multiple_snapshots} with an error rate of $\gamma=0.01$}
	\end{subfigure}
		\hfill
	\begin{subfigure}[t]{0.45\textwidth}
		\centering
		\includegraphics[width=\linewidth]{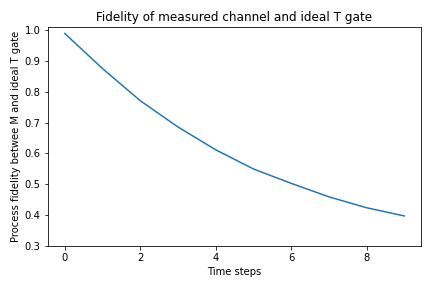}
				
			\caption{The process fidelity between the ideal T gate and the measured channel with an error rate of $\gamma=0.1$}
	\end{subfigure}
			\hfill
	\begin{subfigure}[t]{0.45\textwidth}
		\centering
		\includegraphics[width=\linewidth]{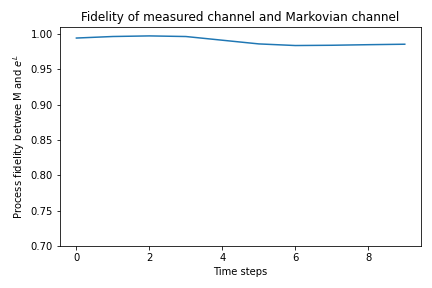}
	\caption{The process fidelity between the measured channel and the Markovian channel found by \cref{alg:MR:multiple_snapshots} with an error rate of $\gamma=0.1$}

	\end{subfigure}
	\caption{Numerics for a $T$-gate with depolarizing noise applied. We explicitly constructed a Markovian channel which was generated by a Lindblad with a Hamiltonian term that generated a T gate, and jump operators that generated a depolarizing channel at various (symmetric) error rates, $\gamma$.} \label{fig:depolT}
\end{figure}

\begin{figure}[htbp]
	\centering
	\begin{subfigure}[t]{0.45\textwidth}
		\centering
		\includegraphics[width=\linewidth]{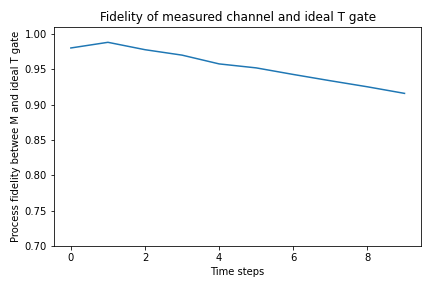}
		\caption{The process fidelity between the ideal T gate and the measured channel with an error rate of $\gamma=0.01$.}
	\end{subfigure}
	\hfill
	\begin{subfigure}[t]{0.45\textwidth}
		\centering
		\includegraphics[width=\linewidth]{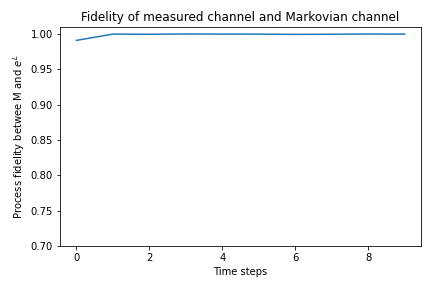}
		\caption{The process fidelity between the measured channel and the Markovian channel found by \cref{alg:MR:multiple_snapshots} with an error rate of $\gamma=0.01$}
	\end{subfigure}
		\hfill
	\begin{subfigure}[t]{0.45\textwidth}
		\centering
		\includegraphics[width=\linewidth]{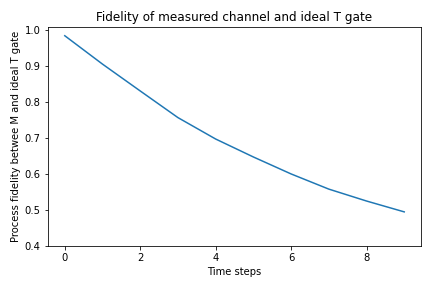}
		\caption{The process fidelity between the ideal T gate and the measured channel with an error rate of $\gamma=0.1$}
	\end{subfigure}
			\hfill
	\begin{subfigure}[t]{0.45\textwidth}
		\centering
		\includegraphics[width=\linewidth]{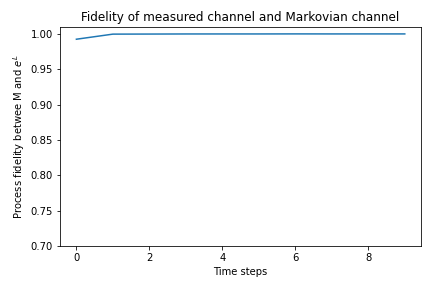}
				\caption{The process fidelity between the measured channel and the Markovian channel found by \cref{alg:MR:multiple_snapshots} with an error rate of $\gamma=0.1$}
	\end{subfigure}
	\caption{Numerics for a $T$-gate with bit flip noise applied. We explicitly constructed a Markovian channel which was generated by a Lindblad with a Hamiltonian term that generated a T gate, and jump operators that generated a bit-flip channel at various error rates $\gamma$.} \label{fig:bitflipT}
\end{figure}

\FloatBarrier

\section{Conclusions}

We have developed novel methods and algorithms, based on previous work~\cite{Markdynamics08}, to retrieve the best-fit Lindbladian to a quantum channel. We have implemented these algorithms in Python, and benchmarked them on synthetic tomography data generated in Cirq.

The key strengths of our method is that it can be applied to a single (or small number of) tomographic snapshot(s), is completely assumption-free regarding the structure of the analysed operator, and does not rely on any prior knowledge of the environment or the noise model. At the core of the method is a convex optimisation programme which searches for the closest Lindblad generator within a given distance from the matrix logarithm of the input. This approach is successful in dealing with imprecise tomographic data, extracting Markovian dynamics within any desired regime of tolerance, and can be used to look for time-independent and time-dependent Markovian dynamics consistent with a series of tomographic snapshots. If no Markovian channel is found, the scheme provides a well-defined quantitative measure of non-Markovianity in terms of the minimal addition of white noise required to ``wash-out'' memory effects, and render the evolution Markovian. 
Moreover, the extensions to the case where we are given as input a series of tomographic snapshots allows us to compute a far more fine-grained test of non-Markovianity, and to tackle the case of time-dependent channels.

A significant part of the work is focused on the treatment of input matrices that are perturbations of some unknown process with a degenerate spectrum. This situation commonly arises when analysing noisy unitary gates in quantum computation. In order to address the susceptibility of the convex optimisation programme with respect to perturbation of multi-dimensional eigenspaces, we have designed a series of pre-processing algorithms, rigorously rooted in the theory of matrix perturbation, which identify and re-construct the unperturbed hermiticity-preserving structure of the original channel.
To test our theoretical formulation, we have implemented and numerically benchmarked our algorithm on simulated data from the Cirq platform, demonstrating that our algorithms are able both to successfully identify channels consistent with an underlying Markovian dynamics, as well detect and quantify non-Markovianity.

One drawback of the algorithms developed here is that they require a full tomographic snapshot.
This only requires 12 measurement settings for analysing the dynamics of a single qubit, and 144 measurement settings for analysing the dynamics of two qubits.
Both single and two qubit process tomography are experimentally feasible and can be completed in less than 30 minutes per gate with 1000 shots (single qubit tomography takes significantly less time).
However, extending to the case of three qubits requires 1728 measurement settings, and at this point the required experiments is possibly infeasible.
This is due both  to the cost involved with running long experiments on quantum devices, and because we don't expect the noise processes to stay stable for the amount of time it would take to run process tomography on three qubits (or more).
While this is not necessarily a major drawback -- most gates on quantum qubits are one and two qubit gates, and understanding the noise models for each gate on a device is already useful for designing error mitigation processes -- nonetheless lifting this requirement may lead to more applications for the techniques in this paper. 
Extending these algorithms to the case where we only have access to measurement data that gives incomplete information about the full dynamics, or incorporating partial prior information, will be explored in future work.

Other enhancements for future work include the improvement of the run time of the algorithms.
In the present instance we have focused on showing that the algorithms are theoretically well founded, and can be benchmarked on simulated data and shown to give correct results.
So far we have made no attempt to optimise the classical computational phase of our procedure.
However, for running on real data a run time of hours is likely to be necessary, as opposed to days (as we have for the two qubit numerics).
There are a number of complementary ways this could be achieved, a straightforward example being the parallelisation of the loop over different branches of the matrix logarithm. 

\bigskip

Shortly after this paper appeared on the arXiv another work investigating fitting noise models to tomography data was posted \cite{samach2021lindblad}.
It takes a complementary approach to assessing Markovianity in near-term hardware using information-flow-based measures (cfr.~\cref{sec:RelatedWork}).


\section*{Code availability}

The python code implementing the algorithms presented in this work and used to produce the numerical results is available at \cite{gitlab_repo}.

The 1-qubit $X$-gate analysis (with 10,000 random samples) took 2 hours on a standard Intel x86 2.0GHz laptop, and the 2-qubit ISWAP analysis (8,500 random samples) took 2 weeks on a standard Intel  x86 3.40GHz desktop machine.
We made no effort to optimise the algorithm implementation.
In particular, the most costly part of the algorithms, namely the random sampling in the pre-processing, is trivially parallelisable, but we did not do this here.
The run-time can certainly be reduced significantly if desired.

\section*{Acknowledgements}
E.O.\ and T.S.C.\ are supported by the Royal Society. 
E.O.\ is also supported by the Bavarian state government with funds from the Hightech Agenda Bayern Plus as part of the Munich Quantum Valley, by the German Research Foundation DFG via the SFB/Transregio~352, and by the UK Hub in Quantum Computing and Simulation, part of the UK National Quantum Technologies Programme with funding from UKRI EPSRC (grant EP/T001062/1).
T.K.~is supported by the Spanish Ministry of Science and Innovation through the ``Severo Ochoa Program for Centres of Excellence in R\&D'' (CEX2019-00904-S) and PID2020-113523GB-I0 and started the work while being supported by the EPSRC through the Centre for Doctoral Training in Delivering Quantum Technologies (grant EP/L015242/1). \newline
This work was supported by Google Research Award ``Assessing non-Markovian noise in NISQ devices''.
We thank James Seddon and Calvin Liu for helpful comments on an earlier version of the manuscript.

\phantomsection
\addcontentsline{toc}{section}{References}


\newpage

\phantomsection
\addcontentsline{toc}{section}{Appendix}
\appendix

\noindent {\LARGE \textbf{Appendix} }

\section{Detailed discussion of algorithms}\label{app:algorithms}

\subsection{Pre-processing algorithms} \label{app:pre-processing}

The algorithm to prepare the basis for an eigenvalue which is either complex or negative real is \cref{alg:construct_basis_complex}.
It takes as input the transfer matrix of the channel, $M$, and two sets of integers.
The first set, $ \degSet$, contains the indices of the eigenvectors corresponding to the degenerate eigenvalue itself.
The second set, $\conjDegSet$, contains the indices of the eigenvectors corresponding to the conjugate eigenvalue.\footnote{Clearly we require $|\degSet| = |\conjDegSet|$. Furthermore in the case of a negative real degenerate eigenvalue we have $\degSet= \conjDegSet$.}

The idea behind \cref{alg:construct_basis_complex} is that, for each set of eigenvectors $w_1$, $w_2$,...,$w_n$ associated with a degenerate eigenvalue $\lambda$ and eigenvectors  $u_1$, $u_2$,...,$u_n$ associated with $\lambda^\ast$ we solve the equation
\begin{equation} \label{kernel-equation}
	{\alpha_1}^\ast \ket{w_1^\dagger} + {\alpha_2}^\ast\ket{w_2^\dagger} + \cdots + \alpha_n^\ast\ket{w_n^\dagger} - \beta_1 \ket{u_1} - \beta_2 \ket{u_2} - \cdots - \beta_n \ket{u_n}  = \ket{0}
\end{equation}
for $\{\alpha_j\}_j$ and $\{\beta_j\}_j$, allowing us to construct a basis with the correct hermiticity preserving structure.

We solve \cref{kernel-equation} by arranging the vectors into the columns of a matrix
\begin{equation}
	A = ({w_1}^\dagger,{w_2}^\dagger , \cdots, {w_n}^\dagger, u_1, u_2, \cdots u_n)
\end{equation}
and finding its kernel.
If the variable $\mathrm{Nullity}(A) \coloneqq \mathrm{dim}(\ker(A))$ is equal to the dimension of the degenerate eigenspace then there is a hermiticity-preserving basis which spans the  eigenspace (in fact there are uncountably infinitely many choices of hermiticity-preserving bases).
The algorithm returns the basis vectors
\begin{equation}
	\ket{v_i} = \sum_j \alpha^{(i)}_j \ket{w_j}
\end{equation}
for $i \in (0, \mathrm{Nullity}(A))$, where $\alpha^{(i)}_j$ is the value of $\alpha_j$ in the $i^{\mathrm{th}}$ solution to \cref{kernel-equation}.

The pseudo-code for the case of degenerate, positive real eigenvalues is given by \cref{alg:construct_basis_real}.
It follows a similar idea to \cref{alg:construct_basis_complex}, with some additional processing to account for the possibility of self-adjoint eigenvectors.
The equation to solve in order to obtain a basis with the correct hermiticity-preserving structure is
\begin{equation} \label{kernel-equation_pos}
	{\alpha_1}^\ast \ket{w_1^\dagger} + {\alpha_2}^\ast\ket{w_2^\dagger} + \cdots + \alpha_n^\ast\ket{w_n^\dagger} - \beta_1 \ket{w_1} - \beta_2 \ket{w_2} - \cdots - \beta_n \ket{w_n}  = 0
\end{equation}
again for $\{\alpha_j\}_j$ and $\{\beta_j\}_j$.
As in the previous case, if $\mathrm{Nullity}(A)$ is equal to the dimension of the degenerate eigenspace then there is a hermiticity-preserving basis which spans the eigenspace given by
\begin{equation}
	\ket{v_i} = \sum_j \alpha^{(i)}_j \ket{w_j}.
\end{equation}
However, now the basis may include self-adjoint eigenvectors, as well as hermitian conjugate eigenvectors.
These need to be treated separately in the next stage of the pre-processing.
To verify if the $i^{\mathrm{th}}$ basis-vector is self-adjoint we check whether $(\alpha^{(i)}_j)^* - \beta^{(i)}_j  < p$, for all $ j$.
If the total number of non-self-adjoint basis vectors is even, the algorithm returns two sets of basis vectors -- the self-adjoint set and the non-self-adjoint set.
If the total number of non-self-adjoint basis vectors is odd, the algorithm finds the non-self-adjoint basis vector for which $\sum_j (\alpha^{(i)}_j)^* - \beta^{(i)}_j$ is minimised, and returns this with the set of self-adjoint basis vectors instead of the non-self-adjoint set.
Note that there is an infinite number of possible basis choices which respect the hermiticity-preserving structure, and which basis is chosen will affect the distance to the closest Lindbladian.
To handle this, we use a randomised construction to generate $r$ bases satisfying the hermiticity-preserving property, and we run the convex optimisation algorithm on each one, keeping the optimal result.
The number of random bases, $r$, is an input to \cref{alg:MAIN}.
Running the algorithm with higher $r$ will give a better result (numerical results on degenerate 1- and 2-qubit channels are given in \cref{sec:examples}).

The algorithm to prepare a random basis is given in \cref{alg:construct_basis_random}.
It takes as input $M$, lists indicating the indices of eigenvectors associated to each degenerate eigenvalue $\{sl_\lambda \mid \mathrm{for \ degenerate \ \lambda}\}$, and the bases constructed by \cref{alg:construct_basis_complex} and \cref{alg:construct_basis_real}.
It returns a random choice of basis, $S$.
In order to build $S$, \cref{alg:construct_basis_random} checks whether $i$ is in any of the $sl_\lambda$ for each $i \in \mathrm{dim}(M)$.
If it is not, we set the $i^{\mathrm{th}}$ column of $S$ equal to the $i^{\mathrm{th}}$ eigenvector of $M$.
If it is, we check whether the $i^{\mathrm{th}}$ column needs to be constructed randomly from either self-adjoint basis vectors or non-self-adjoint basis vectors; or whether it needs to be obtained by conjugating another column.
For $\lambda$ real, which columns are produced randomly (from self-adjoint or non-self-adjoint basis vectors) and which are found via conjugation is arbitrary.
For $\lambda$ complex, either all columns associated to $\lambda$ are constructed randomly, or all are retrieved by conjugating the basis vectors associated to $\lambda^\ast$. This is determined by which order $sl_\lambda$ and $sl_{\lambda^\ast}$ are given as input to \cref{alg:construct_basis_complex}.
If the column needs to be prepared randomly then we generate $|sl_\lambda|$ random seeds $\kappa_j$.
The $i^{\mathrm{th}}$ column of $S$ is then
\begin{equation}
	S[:,i] = \sum_j \kappa_j v_j,
\end{equation}
where $v_j$ are the hermiticity-structure preserving basis vectors associated with $\lambda$, and the sum is over either all self-adjoint or all non-self-adjoint basis vectors.
If the column needs to be obtained by conjugation we find which column it is conjugate to, denoted $k$, and the $i^{\mathrm{th}}$ column of $S$ is then given by
\begin{equation}
	S[:,i] = \Flip S[:,k]^\ast .
\end{equation}
The final step of pre-processing is to compute $R = SMS^{-1}$ for each random basis.

\begin{algorithm}\small
	\SetKwInOut{Input}{Input}\SetKwInOut{Output}{Output}
	\Input{Matrix $M$, Array of integers $\degSet$, Array of integers $\conjDegSet$}
	\KwResult{Matrix $\newBasis$}
	
	$\mathrm{eigvecs} \leftarrow \mathrm{eigenvectors}(M)$
	
	$\degCount \leftarrow \mathrm{length}(\degSet)$
	
	\For{$i \in (0,\degCount)$}
	{$j \leftarrow \degSet[i]$
		
		$k \leftarrow \conjDegSet[i]$
		
		$\ket{w_i} \leftarrow \mathrm{eigvecs}[j]$
		
		$\ket{u_i} \leftarrow \mathrm{eigvecs}[k]$
	}
	
	$A_1 \leftarrow [ \Flip \ket{w_i^\ast} \mid i \in (0,\degCount)]$

	$A_2 \leftarrow [ -\ket{u_i} \mid i \in (0,\degCount) ]$

	$A = (A_1,A_2)^T$				(Kernel of A is solution to
	$\alpha_1^\ast \Flip \ket{w_1^\ast} + \alpha_2^\ast \Flip \ket{w_2^\ast} +\cdots+\alpha_n^\ast \Flip \ket{w_n^\ast} - \beta_1 \ket{u_1} - \beta_2 \ket{u_2}-\cdots- \beta_n \ket{u_n} = \ket{0}$~)
	$\mathrm{nullity} = \mathrm{dim}\left(\ker(A)\right)$

	\eIf{$\mathrm{nullity}==\degCount$}
	{
		\For{$i \in (0, \degCount)$}
		{
			$\sumValues = [ w_j * \ker(A)_{i,j}^\ast  \mid j \in (0,\degCount)]$

			$\newBasis[:,i] = \sum_j(\sumValues)$
		}
		\Output {$\True$, $\newBasis$}}
	{\Output {$\False$, $[]$}
	}
	\caption{Construct conjugate basis for cluster of a complex or real negative eigenvalue}\label{alg:construct_basis_complex}
\end{algorithm}

\begin{algorithm}\small
	\SetKwInOut{Input}{Input}\SetKwInOut{Output}{Output}
	\Input{Matrix $M$, Array of integers $\degSet$, Real number $p$}
	\KwResult{Matrices $[\selfBasis, \conjBasis]$}
	
	$\mathrm{eigvecs} \leftarrow \mathrm{eigenvectors}(M)$
	
	$\degCount \leftarrow \mathrm{length}(\degSet)$
	
	\For{$i \in (0,\degCount)$}
	{
		$j \leftarrow \degSet[i]$
		
		$\ket{w_i} \leftarrow \mathrm{eigvecs}[j]$
	}

	$A_1 \leftarrow [ \Flip \ket{w_i^\ast} \mid i \in (0,\degCount)]$
	
	$A_2 \leftarrow [ -\ket{w_i} \mid i \in (0,\degCount) ]$

	$A = (A_1,A_2)^T$				(Kernel of A is solution to
	$\alpha_1^\ast \Flip \ket{w_1^\ast} + \alpha_2^\ast\Flip \ket{w_2^\ast} +\cdots+ \alpha_n^\ast \Flip \ket{w_n^\ast} 
	- \beta_1 \ket{w_1} - \beta_2 \ket{w_2}-\cdots- \beta_n \ket{w_n}  = 0$)
	
	$\mathrm{nullity} = \mathrm{dim}\left(\ker(A)\right)$

	\eIf{$\mathrm{nullity}==\degCount$}
	{
		
		$\selfCheck \leftarrow [\True \mathrm{\ if \ }i^{\mathrm{th}}\mathrm{\ column \ of\ }\ker(A)\mathrm{\ is\  self-adjoint \ within\  precision \ }p  \mid i \in (0,\degCount)]$
		
		$\numSelf \leftarrow \sum_i(\selfCheck)$
		
		$\numConj \leftarrow \degCount - \numSelf$
		
		\If{$\numConj \mod2 \neq 0$}
		{$\additionalSelf \leftarrow \mathrm{index \ of \ extra \ dimension \ which \ is \ closest \ to \ self \ adjoint}$
			
			$\selfCheck[\additionalSelf] \leftarrow \True$
			
			$\numSelf =\numSelf + 1 $
			
			$\numConj = \numConj - 1 $
			
		}
		$k=0$
		
		$l=0$
		
		\For{$i \in (0, \degCount)$}
		{
			
			$\sumValues = [\ket{w_j} * \ker(A)_{i,j}^\ast  \mid j \in (0,\degCount)]$
			
			\eIf{$\selfCheck[i]$}
			{$\selfBasis[:,k] = \sum_j(\sumValues)$
				
				$k =k+1$}
			{$ \conjBasis[:,l] = \sum_j(\sumValues)$
				
				$l=l+1$}
		}
		\Output {$\True$, $[\selfBasis,\conjBasis]$}}
	{\Output {$\False$, $[]$}
	}
	\caption{Construct self-adjoint / conjugate basis for cluster of a real positive eigenvalue}\label{alg:construct_basis_real}
\end{algorithm}

\begin{algorithm}\small
	\SetKwInOut{Input}{Input}\SetKwInOut{Output}{Output}
	\Input{Matrix $M$, Three lists of lists $\posDegSets$, $\negDegSets$, $\complexDegSets$, Three sets of bases $\posBases$, $\negBases$, $\complexBases$, List of boolean values $\conjTest$}
	\KwResult{Matrix $\newBasis$}
	
	$\mathrm{eigvecs} \leftarrow \mathrm{eigenvectors}(M)$
	
	$\allDegSets \leftarrow [ sl \mid sl \in \posDegSets \cup \negDegSets \cup \complexDegSets)$

	\For{$sl \in \allDegSets$}
	{$sl.\mathrm{sort}()$
	}
	
	\For{$i \in \mathrm{dim}(M)$}
	{\eIf{$i  \notin sl \forall sl \in \allDegSets$}
		{
			$\newBasis[:,i] = \mathrm{eigvecs}[:,i]$
			
		}
		{\uIf{$i \in sl \textrm{\emph{\textbf{  for  }}} sl \in \posDegSets$}
			{$\mathrm{sets} = \posDegSets$
				
				$\mathrm{bases} = \posBases$}
			
			\uElseIf{$i \in sl \textrm{\emph{\textbf{  for  }}} sl \in \negDegSets$}
			{$\mathrm{sets} = \negDegSets$
				
				$\mathrm{bases} = \negBases$}
			
			\Else
			{$\mathrm{sets} = \complexDegSets$
				
				$\mathrm{bases} = \complexBases$}
			
			$j \leftarrow  \mathrm{index \ of \ list \ } i \mathrm{\ is \ in \ within \ list \ of \ lists \ ``sets"}$
			
			$k \leftarrow  \mathrm{index \ of  \ } i \mathrm{\ within \ list } j$
			
			$n \leftarrow \mathrm{length}\left(\mathrm{list \ } j \right)$

			\eIf{$\conjTest[i]$}
			{
				$\randomSeeds \leftarrow \mathrm{list \ of\ } n \mathrm{\ random\  numbers}$
				
				$\mathrm{tilde}\_\mathrm{basis} \leftarrow \sum\left(\randomSeeds[k] * \mathrm{bases}[j][:,k]\right)$
				
				$\newBasis[:,i] \leftarrow \mathrm{tilde}\_\mathrm{basis} / \mathrm{norm}\left(\mathrm{tilde}\_\mathrm{basis} \right)$}
			{
				
				$\mathrm{conjugate}\_\mathrm{value} \leftarrow \mathrm{index \ which \ }j\mathrm{\ is \ conjugate \ with}$
				
				$\newBasis[:,i] \leftarrow \Flip (\newBasis[:,\mathrm{conjugate}\_\mathrm{value}])^\ast$
			}
	}}

	\Output {$\newBasis$}
	\caption{Construct conjugate random choice of basis for quantum channel with cluster(s) of eigenvalues}\label{alg:construct_basis_random}
\end{algorithm}

\FloatBarrier

\subsection{Convex Optimisation algorithms for a single snapshot} \label{app:optimisation}

In \hyperref[alg:app:Lindbladian]{Algorithm 1} we present a fleshed out version of the algorithm to find the best fit Lindbladian to a tomographic snapshot.
It looks for the closest Lindbladian to $\log R$ in the $\varepsilon$-neighbourhood of $M$ by formulating a convex optimisation task whose constraints are exactly the necessary and sufficient conditions for a Lindblad generator, as discussed in \cref{sec:Markovianity}.
We iterate over different branches of the matrix logarithm and pick the resulting Lindbladian from the convex optimisation programme whose generated channel is the closest to~$M$.

The first remark about the algorithm is that we are not searching for the closest Markovian channel, but for the closest Lindbladian to its matrix logarithm. A natural question is then whether the two objects are precisely related, that is, if the closest Lindbladian generates the closest Markovian channel. In the general case, this is not true.
A simple counter-example is provided by the perturbed $X$-gate discussed in \cref{sec:perturbation_example}, where the closest Lindbladian generates a map close to the identity, although the unperturbed $X$-gate is a closer Markovian channel.
However, consider the upper bound for general matrices $A$ and $B$ (Lemma~11 in~\cite{CEW09})
\begin{equation}\label{eq:relation_log}
	\norm{\exp A-\exp B}
	\leq
	\norm{A-B}\exp \norm{A-B} \, \exp \norm{A}.
\end{equation}
If we now ask $B$ to be Lindbladian, then the closest Lindbladian $L$ to $A$ is also the operator minimising this upper bound.
This implies that the distance between the Markovian channel $\exp L$ retrieved by our algorithm and the input matrix $M=\exp L_R$ is upper-bounded by
\begin{equation}\label{eq:tolerance_and_relation_log}
	\varepsilon' \coloneqq \norm{C-L_R}\exp \norm{C-L_R} \, \exp \norm{L_R},
\end{equation}
where we assume $\exp C$ to be the closest Markovian operator to $M$. Thus, by incorporating $\varepsilon '$ in the tolerance parameter $\varepsilon$ (together with the tomographic inaccuracy and the predicted amount of non-Markovianity), we expect to find a compatible Markovian evolution for the input $M$.

Note that the convex optimisation problem contains a second-order cone constraint (when formulating the Frobenius norm minimisation in epigraph form), a linear matrix inequality (LMI) constraint to ensure the conditionally completely positive condition and a first-order cone constraint representing the trace preserving condition. As in any convex optimisation programme, every minimum will be a global minimum.

Moreover we have $\Fnormn{A} =\Fnormn{A^\Gamma}$ for $A$ in the elementary basis representation. Thus, $\Fnormn{L-L} = \Fnormn{(L-L)^\Gamma} = \Fnormn{X-L^\Gamma}$.
This is useful in order to run the convex optimisation task on the variable $X$ without involutions.

\medskip

In \hyperref[alg:app:mu]{Algorithm 2} we present a fleshed out version of the algorithm to find the non-Markovianity parameter~$\mu$.

\begin{algorithm}[htbp]\small
	\SetAlgoRefName{1 (restatement)}
	\SetKwInOut{Input}{Input}\SetKwInOut{Output}{Output}
	\Input{matrix $M$, $\NDsquare$ Matrix $R$  with $\mathrm{dim} R = \mathrm{dim} M$, positive real number~$\varepsilon$, positive integer $m_{\max}$}
	\KwResult{$L$ closest Lindbladian to $\vec{m}$-branch of $\log R$ such that $\Fnorm{M-\exp L} < \varepsilon$ is minimal over all
		$\vec{m}\in\{-m_{\max},-m_{\max}+1,\dots,0,\dots,m_{\max}-1,m_{\max}\}^{\times d^2}$}
	
	\medskip
	
	$d\leftarrow \sqrt{\mathrm{dim} M}$ \\
	$G_0 \leftarrow \log R$ 
	
	\smallskip
	
	$P_j \leftarrow \dyad{r_j}{\ell_j}$ \  \textit{($\ket{r_j}$ and $\bra{\ell_j},\, j=1,\dots,d^2$  right and  left eigenvectors of~$M$)}
	
	\smallskip
	
	$\ket{\omega}\leftarrow \sum_{j=1}^d \ket{j,j}$,
	\ $\omega_\perp \leftarrow \1-\dyad{\omega}{\omega}$
	
	\medskip
	
	\For{$G_{\vec{m}} \leftarrow G_0 + 2\pi \i \sum_{j=1}^{d^2} m_j \, P_j $ \  \textit{(branches of $G_0$)}}{
		\smallskip
		Run convex optimisation programme on variable $X(\vec{m})$:\\
		\Indp $\begin{array}{ll}
			\text{minimise} & \Fnorm{X(\vec{m})-L_{\vec{\vec{m}}}^\Gamma}   \\
			\text{subject to } &X(\vec{m}) \text{ hermitian}\\
			&\omega_\perp X(\vec{m}) \omega_\perp \geq 0 \\
			&\norm{\Tr_1[X(\vec{m})]}_{1} = 0
		\end{array}$\\
		\Indm
		
		\smallskip
		
		\If{$\Fnorm{M - \exp X^\Gamma(\vec{m})}<\varepsilon$}{
			
			\smallskip
			
			Store $X(\vec m)$ \\
			$\mathrm{distance}(\vec{m}) \leftarrow \Fnorm{M - \exp X^\Gamma(\vec{m})}$}
		
		\medskip
		
	}
	\Return$L=X^\Gamma(\vec{m}') \text{ for } \vec{m}'=\mathrm{argmin}\, \{\mathrm{distance}(\vec{m})\}$
	\caption{Retrieve best-fit Lindbladian} \label{alg:app:Lindbladian}
\end{algorithm}


\begin{algorithm}[htbp]\small
	\SetAlgoRefName{2 (restatement)}
	\SetKwInOut{Input}{Input}\SetKwInOut{Output}{Output}
	\Input{matrix $M$, $\NDsquare$ Matrix $R$  with $\mathrm{dim} R = \mathrm{dim} M$, positive real number~$\varepsilon$, positive integer $m_{\max}$}
	\KwResult{generator $H'$ of the hermiticity- and trace-preserving channel in the $\varepsilon$-ball of $M$ with the smallest non-Markovianity measure $\mu_{\min}$ over all $\vec{m}\in\{-m_{\max},-m_{\max}+1,\dots,0,\dots,m_{\max}-1,m_{\max}\}^{\times d^2}$.}
	
	\medskip
	
	$d\leftarrow \sqrt{\mathrm{dim} M}$ \\
	$G_0 \leftarrow \log R$
	
	\smallskip
	
	$P_j \leftarrow \dyad{r_j}{\ell_j}$ \  \textit{($\ket{r_j}$ and $\bra{\ell_j},\, j=1,\dots,d^2$  right and  left eigenvectors of~$M$)}
	
	\smallskip
	
	$\ket{\omega}\leftarrow \sum_{j=1}^d \ket{j,j}$,
	\ $\omega_\perp \leftarrow \1-\dyad{\omega}{\omega}$
	
	\medskip
	
	$\mu_{\min} \in \mathds{R}^{+}$ \hspace{12pt} \textit{(initialize it with an high value)}
	
	\smallskip
	
	$\delta$ satisfying $\varepsilon = \exp ( \delta) \cdot \delta \cdot \Fnorm{G_0}$, \hspace{8pt} $\delta_{\max}=10 \cdot \delta$, \hspace{8pt} $\delta_{\mathrm{step}} \in \mathds{R}^{+}$ \hspace{15pt}
	\medskip
	
	\For{$\delta < \delta_{\max}$}{
		
		\smallskip
		
		\For{$G_{\vec{m}} \leftarrow G_0 + 2\pi \i \sum_{j=1}^{d^2} m_j \, P_j $ \  \textit{(branches of $G_0$)}}{

			Run convex optimisation programme on variables $\mu(\vec{m},\delta)$ and $X(\vec{m},\delta)$:\\
			\Indp $\begin{array}{ll}
				\text{minimise} & \mu(\vec{m},\delta) \\
				\text{subject to } &X(\vec{m},\delta) \text{ hermitian}\\
				&\Fnorm{X(\vec{m},\delta)-G_{\vec{m}} ^\Gamma } \leq \delta \\
				&\omega_\perp X(\vec{m},\delta) \omega_\perp + \mu \frac{\1}{d} \geq 0 \\
				&\norm{\Tr_1[X(\vec{m},\delta)]}_{1} = 0
			\end{array}$\\
			\Indm
			
			\medskip
			
			\If{$\Fnorm{M - \exp X^\Gamma(\vec{m},\delta) } < \varepsilon$}{
				
				\smallskip
				
				Store  $(\mu(\vec{m},\delta), X^\Gamma(\vec{m},\delta))$}
			
		}
		
		$\delta \leftarrow \delta+\delta_{\mathrm{step}}$
		
	}

\smallskip

\Return{$( \mu_{\min},G')=( \mu(\vec{m}',\delta'), X^\Gamma(\vec{m}',\delta') )$ \hspace{5pt} for $(\vec{m}',\delta')=\mathrm{argmin} \ \mu(\vec{m},\delta)$ }
	
	\caption{Compute non-Markovianity measure~$\mu_{\min}$} \label{alg:app:mu}
\end{algorithm}

\FloatBarrier

\subsection{Multiple snapshots} \label{app:mulSnap}

A detailed version of the convex optimisation algorithm for the case of multiple snapshots is given in \cref{alg:app:multiple_snapshots}.

\begin{algorithm}[h]\small
	\SetKwInOut{Input}{Input}\SetKwInOut{Output}{Output}
	\Input{$(d^2 \times d^2)$-dimensional matrices $M_1\dots,M_N$ , positive real numbers $t_1,\dots,t_N$, positive real number $\varepsilon$, positive integer $m_{\max}$}
	\KwResult{$L$ Lindbladian minimising $\sum_{c=1}^N \Fnorm{t_c\,L-\log M_c}$ such that $\Fnorm{M_c-\exp t_c\,L} < \varepsilon$ for all $c$}
	
	\medskip
	
	\For{ $c=1,\dots,N$}{
		
		\smallskip
		
		$G_0^c \leftarrow \log M_c$
		
		\smallskip
		
		$P_j^c \leftarrow \dyad{r_j}{\ell_j}$ \ \textit{($\ket{r_j}$ and $\bra{\ell_j},\, j=1,\dots,d^2$ right and left eigvecs of~$M_c$)}
	}

	\medskip
	
	$\delta$ satisfying $\varepsilon = \exp ( \delta) \cdot \delta \cdot \Fnorm{G_0}$, \hspace{8pt} $\delta_{\max}=10 \cdot \delta$, \hspace{8pt} $\delta_{\mathrm{step}} \in \mathds{R}^{+}$ \hspace{15pt}
	\medskip
	
	\For{$\delta < \delta_{\max}$}{
		
		\smallskip
		
		\For{$G^c_{\vec{m}} \leftarrow G^c_0 + 2\pi \i \sum_{j=1}^{d^2} m_j \, P^c_j $ \  \textit{(branches of $G_0$)}}{
			
			Run convex optimisation programme on variable $X(\vec{m})$:\\
			\Indp $\begin{array}{ll}
				\text{minimise} & \sum_c \Fnorm{ t_c\,X(\vec{m}, \delta)-(G_{\vec{m}} ^c)^\Gamma}   \\
				\text{subject to } &X(\vec{m}, \delta) \text{ hermitian}\\
				&\Fnorm{X(\vec{m},\delta)-(G_{\vec{m}} ^c)^\Gamma} \leq \delta \hspace{5pt} \text{for all } $c=1,\dots,N$\\
				&\omega_\perp X(\vec{m}, \delta) \omega_\perp \geq 0 \\
				&\norm{\Tr_1[X(\vec{m}, \delta)]}_{1} = 0\\
			\end{array}$\\
			\Indm
			
			\smallskip
			\If{$\Fnorm{M_c - \exp  t_c\, X^\Gamma(\vec{m}, \delta)} < \varepsilon$ \hspace{5pt} \emph{for all} $c=1,\dots,N$}{
				
				\smallskip
				
				Store $X(\vec m, \delta)$ \\
				$\mathrm{distance}(\vec{m},\delta) \leftarrow \sum_c \Fnorm{M_c - \exp  t_c\, X^\Gamma(\vec{m}, \delta)}$}
			
		}
		
		$\delta \leftarrow \delta+\delta_{\mathrm{step}}$
		
	}
		
		\Return $L=X^\Gamma(\vec{m}', \delta') \hspace{5pt} \text{ for } (\vec{m}', \delta')=\mathrm{argmin}\, \{\mathrm{distance}(\vec{m}, \delta)\}$
		
	\caption{Retrieve best-fit Lindbladian for multiple snapshots}\label{alg:app:multiple_snapshots}
\end{algorithm}

Once again, dealing with degeneracies requires more work. If the perturbation of an $n$-degenerate complex eigenvalue is identified, then we should check whether this is consistent with all other snapshots, that is, all measurements show a cluster of $n$ eigenvalues. Because of the hermiticity-preserving condition we also expect a partner cluster of $n$ eigenvalues, corresponding to the complex-conjugate partner subspace. Note that the two clusters will overlap when they approach the negative axis, and conversely that any cluster close to the negative axis is expected to split in two partner clusters representing the perturbation of two eigenvalues related by complex conjugation: this means that when taking successive snapshots, we can avoid the case of perturbed negative eigenvalues by choosing suitable measurement times.
Secondly, we should also verify if in all snapshots both the multi-dimensional subspace of a cluster and its partner subspace are consistent with the same unperturbed pair of hermitian-related eigenspaces, up to some approximation. Indeed, recall that eigenspaces are stable with respect to perturbation.
Moreover, since here we are checking for time-\emph{independent} Markovianity from multiple snapshots, we are interested in the case where the Lindbladians for all snapshots are the same.

Given this, and the results of \cref{par:approximate_HP_structure}, our approach to handling an $n$-degenerate complex eigenvalue $\lambda$ is as follows.
We select one of the snapshots, $c'$, at random, and apply the pre-processing steps from \cref{alg:MAIN} to $M_{c'}$ in order to obtain a random hermitian-related basis of vectors for the $\lambda$ and $\lambda^\ast$ subspaces. Denote these by $\{v_j\}$ and $\{v_j^\dagger \}$ respectively, and denote the projections onto the subspace spanned by these vectors by $\Pi_{v}$ and $\Pi_{v^\dagger}$.
We then determine whether this choice of basis is compatible with the other snapshots, by checking that
\begin{equation} \label{eq:constraint_multiple_1}
	\sum_c \Big\Vert  \Pi_c(\lambda) - \Pi_v\Big\Vert  + \sum_c \Big\Vert  \Pi_c(\lambda^\star) - \Pi_{v^\dagger}\Big\Vert  \leq \varsigma_1,
\end{equation}
where $\varsigma_1$ is a tolerance parameter for the perturbation of subspaces (which we set arbitrarily, but that one can derive rigorously using refs.~\cite{StewartSun} and~\cite{Kato}).
A set of basis vectors retrieved with the above approach can then be used as the new eigenvectors of the input matrices $\setn{M_c}_c$ for the clusters of eigenvalues, in analogous fashion as for the single-snapshot case, and then run \cref{alg:MR:multiple_snapshots} on these modified operators.
This procedure for constructing a random basis should be repeated a number of times, and the optimal result kept, as in the single snapshot case.

A special case is the perturbation of degenerate real eigenvalues that do not turn into complex ones through the sequence of snapshots, corresponding to real eigenvalues for the Lindblad generator. In this case, we do not have a partner invariant subspace. As discussed already, an eigenspace of an hermiticity-preserving operator with respect to a real eigenvalue admits self-adjoint eigenvectors in addition to hermitian-related pairs.
Consider an $n$-degenerate real eigenvalue $\kappa$.
As in the complex case, we pick a snapshot $c'$ at random, and apply the pre-processing steps from \cref{alg:MAIN} to $M_{c'}$ to find a basis for the subspace $\Pi_{c'}(\kappa)$.
This basis will be composed of a set of vectors $\{v_j\}_{j=1}^p$, a set of hermitian-related vectors $\{v_j^\dagger\}_{j=1}^p$, and a set of self-adjoint vectors $\{s_j \}_{j=1}^{n-2p}$, for some value $p=1,\dots,n/2$.
Denoting by $\Pi_v$, $\Pi_{v^\dagger}$, $\Pi_s$ the corresponding projections onto the subspaces spanned by these sets of vectors, the constraint~\cref{eq:constraint_multiple_1} turns into:
\begin{equation} \label{eq:constraint_muptiple_2}
	\sum_c \Big\Vert  \Pi_c(\kappa) - \Pi_v - \Pi_{v^\dagger} -  \Pi_s   \Big\Vert  \leq \varsigma_2.
\end{equation}
As in the complex case, we run \cref{alg:app:multiple_snapshots} on the input matrices resulting by using these vectors as the bases for the degenerate subspace in each $M_c$, and then repeat the randomised process a number of times.

\subsection{The main algorithm for single snapshot -- complete version}\label{app:main}
Our main algorithm, which provides a recipe for evaluating non-Markovianity in the single snapshot case, is given in \cref{alg:app:MAIN}.
It takes as input an estimate of the channel, $M$, a precision parameter, $p$, an accuracy parameter $\varepsilon$, and an integer, $r$, which determines how many random basis choices will be tested in the case of degenerate eigenvalues.

\begin{algorithm}\small
	\SetKwInOut{Input}{Input}\SetKwInOut{Output}{Output}
	\Input{$\NDsquare$ matrix $M$, integer $\mathrm{random}\_\mathrm{samples}$, positive real numbers~$p,\, \varepsilon$}
	\KwResult{Lindbladian, $L$, consistent with $M$ (within error tolerance $\varepsilon$), or if no such $L$ exists, non-Markovianity parameter~$\mu$}
	
	\medskip
	
	$\lambda \leftarrow \mathrm{eigenvalues}(M)$
	
	\smallskip

	$\posCheck \leftarrow \left[(\lambda_i - \lambda_j < p) \cap \left(\Im(v_i) < p \right) \cap \left(\Re(\lambda_i > 0)\right) \mid i,j \in (0,\mathrm{dim} M), i>j\right]$
	
	$\negCheck \leftarrow \left[(\lambda_i -\lambda_j < p)\cap\left(\Im(\lambda_i) < p \right)\cap \left(\Re(\lambda_i < 0)\right) \mid i,j \in (0,\mathrm{dim} M), i>j\right]$

	$\compCheck \leftarrow \left[(\lambda_i - \lambda_j < p)\cap \left(\Im(\lambda_i) > p \right) \mid i,j \in (0,\mathrm{dim} M), i>j\right]$
	
	$\mathrm{checklist} \leftarrow [\posCheck,\negCheck,\compCheck]$
	
	\medskip
	
	\uIf{$\mathrm{sum}(\mathrm{checklist}) = 0$}
	{
		$L \leftarrow$ Run \hyperref[alg:app:Lindbladian]{Algorithm 1} on $M,M,\varepsilon$
		
		\uIf{ $\|M - \exp(L) \| < \varepsilon$}
		
		{\Output{$L$}}
		
		\Else{
			$\mu \leftarrow$ Run \hyperref[alg:app:mu]{Algorithm 2} on input $M, M, \varepsilon$
			
			\Output{$\mu$}
	}}
	\uElseIf{$\mathrm{sum}(\mathrm{\posCheck}) = {\mathrm{dim}M \choose 2}$}
	{ \Output{The channel is consistent with the identity map}}
	\Else{

		$\posDegSets \leftarrow \mathrm{list \ of \ sets \ of \ mutually \ deg. \ real \ positive \ eigenvalues}$
		
		$\negDegSets \leftarrow \mathrm{list \ of \ sets \ of \ mutually \ deg. \ real \ negative \ eigenvalues}$

		$\complexDegSets \leftarrow \mathrm{list \ of \ sets \ of \ mutually \ deg. \  complex \ eigenvalues}$

		$\conjPairSets \leftarrow  \mathrm{2d \ array \ indicating \ pairs \ of \ deg. \ complex \ sets \ are \ conjugate}$
		
		$\posBases \leftarrow []$, \hspace{10pt} $\negBases \leftarrow []$, \hspace{10pt}	$\complexBases \leftarrow []$
		
		\smallskip
		
		\For{$i \in (0,\mathrm{length}(\posDegSets))$}
		{$basis \leftarrow $ Run \cref{alg:construct_basis_real} on inputs $M$, $\posDegSets[i]$, $p$
			
			$\posBases.\mathrm{append}(basis)$}

		(continuing in Part II) $\dots$
		
	}
	\caption{Main algorithm, including pre-processing -- Part I
	}\label{alg:app:MAIN}
\end{algorithm}
%
%
\NoCaptionOfAlgo
\begin{algorithm}\small
	\SetAlgoRefName{}
	\SetKwInOut{Input}{Input}\SetKwInOut{Output}{Output}
	\Input{$\NDsquare$ matrix $M$, integer $\mathrm{random}\_\mathrm{samples}$, positive real numbers $p$, $\varepsilon$}
	\KwResult{Lindbladian, $L$, which is consistent with $M$ (within error tolerance $\varepsilon$), or if no such $L$ exists, non-Markovianity parameter $\mu$}
	
	\Else{
		
		$\dots$ (continuing from Part I)
		
		\For{$i \in (0,\mathrm{length}(\posDegSets))$}
		{$basis \leftarrow $ Run \cref{alg:construct_basis_complex} on inputs $M$, $\negDegSets[i]$,  $\negDegSets[i]$
			
			$\negBases.\mathrm{append}(basis)$}
		
		\For{$i \in (0,\mathrm{length}(\conjPairSets))$}
		{$basis \leftarrow $ Run \cref{alg:construct_basis_complex} on inputs $M$, $\conjPairSets[i,0]$, $\conjPairSets[i,1]$
			
			$\complexBases.\mathrm{append}(basis)$}
		
		$\conjTest \leftarrow \mathrm{list \ of \ boolean \ values \ indicating \ if \ } i^{\mathrm{th}} \mathrm{\ eigvect \ needs \ to \ be \ conjugated}$
		
		\For{$r \in (0,\mathrm{random}\_\mathrm{samples})$	}
		{$S \leftarrow $ Run \cref{alg:construct_basis_random} on inputs $M$, $\posDegSets$, $\negDegSets$, $\complexDegSets$, $\posBases$, $\negBases$, $\complexBases$, $\conjTest$
			
			$R_{r} \leftarrow S M S^{-1}$

			$L_r \leftarrow$ Run \hyperref[alg:app:Lindbladian]{Algorithm 1} on $M$, $R_r$, $\varepsilon$
		}
		
		\uIf{$\min (\|\exp(L_r) - M \| )< \varepsilon$}
		
		{$L_{\min} \leftarrow L_r $ that minimises $ \|\exp(L_r) - M \|$
			
			\Output{$L_{\min}$}}
		
		\Else{
			\For{$r \in (0,\mathrm{random}\_\mathrm{samples})$	}
			{
				$H_r, \mu_r \leftarrow$ Run \hyperref[alg:app:mu]{Algorithm 2} on input $M, R_r, \varepsilon$
			}
			
			$\mu_{\min} \leftarrow \min(\mu_r)$

			\Output{$\mu_{\min}$}}

	}
	\caption{Main algorithm, including pre-processing -- Part II}
\end{algorithm}

\RestoreCaptionOfAlgo

\FloatBarrier

\section{Perturbation theory and hermiticity-preserving structures}\label{app:perturbations}

\subsection{Dealing with exactly degenerate input matrices}\label{sec:matrix_perturbation}

In this section we show that it is always possible, given a degenerate Choi-hermitian matrix, to produce an arbitrarily close $\NDsquare$ matrix that preserves the hermiticity-preserving property. Formally,

\begin{theorem}[$\NDsquare$ matrices are dense in the Choi-hermitian matrix set]\label{thm:Choi-hermitian_dense}
	Let $M$ be an hermiticity-preserving matrix, either non-diagonalisable or having degenerate spectrum (or both). Then for any $\epsilon$ there exists an hermiticity-preserving $\NDsquare$ matrix $\wtM$ such that $\lVert\wtM - M\rVert<\epsilon$.
\end{theorem}

This result allows us to resolve the problem of the freedom of basis choice and reduce to a unique principle branch of the matrix logarithm. In real tomographic data, eigenvalues will invariably be non-degenerate. 
But we include this section to make clear that our algorithm can cope with any input, and makes absolutely no assumptions on the form of the matrices.

For the proof, we will make use of the argument presented in ref.~\cite{DenseDiagMatrices} showing that the set of NDS matrices is dense to prove that the same is true when we restrict to the subset of hermiticity-preserving matrices. In particular, we will apply the following definition and results.
\begin{definition}[Resultant of two polynomials]
	Let $p(x) = a_n x^n+\dots+a_1 x +a_0$ and $q(x) = b_m x^m+\dots+b_1 x +b_0$ be two polynomial in $x$ of degree $n$ and $m$, respectively. The \emph{resultant of $p$ and $q$}, denoted by $\mathrm{Res}(p,q)$, is then given by the determinant of the $(n+m) \times (n+m)$ \emph{Sylvester matrix} of $p$ and $q$.
\end{definition}

\begin{lemma}
	\begin{equation}
		\mathrm{Res}(p,q) = a_n^m \, b_m^n \prod_{j=1}^n\prod_{k=1}^m\left(r_j - s_k\right),
	\end{equation}
	where $\set{r_j}_j$ and $\set{s_k}_k$ are the roots of $p$ and $q$, respectively.
\end{lemma}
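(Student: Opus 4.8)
The plan is to regard $a_n$, $b_m$, and the roots $r_1,\dots,r_n$, $s_1,\dots,s_m$ as independent indeterminates, writing $p(x)=a_n\prod_{j}(x-r_j)$ and $q(x)=b_m\prod_{k}(x-s_k)$. Each coefficient $a_i$ (resp.\ $b_i$) is then $a_n$ (resp.\ $b_m$) times an elementary symmetric polynomial in the roots, so the Sylvester determinant $R:=\mathrm{Res}(p,q)$ is a genuine polynomial in $\C[a_n,b_m,r_1,\dots,r_n,s_1,\dots,s_m]$. The goal is to identify $R$ with $a_n^m b_m^n\prod_{j,k}(r_j-s_k)$, and my route is the classical divisibility-plus-degree argument in this polynomial ring, which is a UFD.

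\emph{Divisibility.} First I would show that $(r_j-s_k)\mid R$ for every pair $(j,k)$. On the hyperplane $r_j=s_k=:\rho$ the polynomials $p$ and $q$ share the root $\rho$, so $\tilde u:=q/(x-\rho)$ (of degree $m-1$) and $\tilde v:=-p/(x-\rho)$ (of degree $n-1$) satisfy $\tilde u\,p+\tilde v\,q=0$ with $(\tilde u,\tilde v)\neq 0$. Since the Sylvester matrix is exactly the matrix of the map $(u,v)\mapsto up+vq$ on $\{\deg u<m\}\times\{\deg v<n\}$, this produces a nonzero kernel vector and forces $\det S=0$ whenever $r_j=s_k$. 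Hence $(r_j-s_k)\mid R$. The $nm$ linear factors $(r_j-s_k)$ are pairwise non-associate irreducibles, so by unique factorization their product divides $R$:
\[
    \prod_{j=1}^{n}\prod_{k=1}^{m}(r_j-s_k)\ \mid\ R .
\]

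\emph{Degree and leading constant.} Write $R=Q\cdot\prod_{j,k}(r_j-s_k)$ with $Q\in\C[a_n,b_m,r,s]$. To see that $Q$ carries no root-dependence I would use the isobaric property of the resultant: under the simultaneous scaling $r_j\mapsto t r_j$, $s_k\mapsto t s_k$ one can multiply $S$ on the left and right by suitable diagonal matrices of powers of $t$ to obtain $\det S(t)=t^{nm}\det S(1)$, so $R$ is homogeneous of total degree $nm$ in the roots. As $\prod_{j,k}(r_j-s_k)$ is also homogeneous of degree $nm$, the quotient $Q$ has root-degree $0$, i.e.\ $Q\in\C[a_n,b_m]$. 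Next, since each of the $m$ rows of $S$ built from the coefficients of $p$ contributes exactly one factor $a_n$, and each of the $n$ rows built from $q$ contributes exactly one $b_m$, the determinant is homogeneous of degree $m$ in $a_n$ and $n$ in $b_m$; hence $Q=\kappa\,a_n^m b_m^n$ for a scalar $\kappa$. Finally I would fix $\kappa=1$ by specialization: setting all $s_k=0$ makes $q(x)=b_m x^m$, whose coefficient rows reduce $S$ to triangular form, so $\det S$ can be read off directly and matched against $\kappa\,a_n^m b_m^n\prod_j r_j^m$ (the value of the right-hand side at $s=0$), giving $\kappa=1$.

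The main obstacle is precisely this second step: divisibility alone determines $R$ only up to a cofactor in $\C[a_n,b_m,r,s]$, and the real content is certifying that this cofactor is the pure monomial $a_n^m b_m^n$ rather than something carrying extra root-dependence. The homogeneity (isobaric) count is what rules out the latter, after which the $a_n,b_m$ degrees and a single specialization pin down the constant. An alternative that bypasses the UFD bookkeeping is to prove the equivalent single-product form $R=a_n^m\prod_{j}q(r_j)$ -- equivalent because $q(r_j)=b_m\prod_k(r_j-s_k)$ -- by identifying $R$ with $a_n^m\det(M_q)$, where $M_q$ is multiplication by $q$ on $\C[x]/(p)$; when $p$ has distinct roots the Chinese remainder isomorphism diagonalizes $M_q$ with eigenvalues $q(r_j)$, so $\det M_q=\prod_j q(r_j)$, and the general identity follows by continuity since both sides are polynomial in the coefficients.
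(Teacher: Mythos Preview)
Your argument is the classical divisibility-plus-homogeneity proof and is correct as outlined; the divisibility step, the isobaric degree count in the roots, the separate homogeneity in $a_n$ and $b_m$, and the specialization at $s_k=0$ all work as you describe, and the alternative via $\det(M_q)$ is a valid shortcut.

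However, there is nothing to compare against: the paper does not prove this lemma. It is stated as a standard fact about resultants (immediately followed by the corollary that $p$ and $q$ share a root iff $\mathrm{Res}(p,q)=0$), and the paper only \emph{uses} that corollary in the proof of \cref{thm:Choi-hermitian_dense}. So your proposal supplies a proof where the paper simply cites the result; there is no difference in approach to discuss, only the observation that you have written out what the authors chose to take for granted.
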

\begin{corollary}\label{resultant_roots}
	$p$ and $q$ have a common root if and only if $\mathrm{Res}(p,q)=0$.
\end{corollary}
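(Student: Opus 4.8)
The plan is to derive this corollary directly from the product formula for the resultant established in the preceding Lemma, handling both directions of the biconditional at once. The key observation is that the factorisation
\begin{equation}
	\mathrm{Res}(p,q) = a_n^m \, b_m^n \prod_{j=1}^n\prod_{k=1}^m\left(r_j - s_k\right)
\end{equation}
expresses the resultant as a product of two independent contributions: the leading-coefficient factor $a_n^m\, b_m^n$ and the double product of root differences. The whole argument reduces to controlling when each of these vanishes.

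First I would dispose of the leading-coefficient factor. Since $p$ has degree exactly $n$ and $q$ has degree exactly $m$, by the definition of degree we have $a_n \neq 0$ and $b_m \neq 0$, hence $a_n^m\, b_m^n \neq 0$. Because the ground field has no zero divisors, it follows immediately that $\mathrm{Res}(p,q)=0$ if and only if the double product $\prod_{j=1}^n\prod_{k=1}^m (r_j - s_k)$ vanishes.

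Next I would analyse that double product. Working over an algebraically closed field such as $\mathbb{C}$, both polynomials split completely, so $\set{r_j}_j$ and $\set{s_k}_k$ are precisely the roots of $p$ and $q$, listed with multiplicity. A finite product of field elements is zero if and only if at least one factor is zero, so $\prod_{j,k}(r_j - s_k)=0$ if and only if $r_j = s_k$ for some pair $(j,k)$, which is exactly the statement that $p$ and $q$ possess a common root. Chaining the two equivalences yields $\mathrm{Res}(p,q)=0 \iff p$ and $q$ have a common root.

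As for obstacles, the argument is essentially immediate once the Lemma is available, so there is no genuinely hard step. The only points requiring care are implicit: the non-vanishing of $a_n^m\, b_m^n$ relies on $p$ and $q$ having their stated degrees, and the description of $\set{r_j}_j,\set{s_k}_k$ as the full root sets relies on the field being algebraically closed. Both are standard conventions in this setting and present no real difficulty.
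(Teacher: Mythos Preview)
Your proposal is correct and matches the paper's approach: the paper states the corollary without an explicit proof, treating it as an immediate consequence of the product formula in the preceding Lemma, which is exactly the argument you spell out. There is nothing to add.
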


\begin{proof}[Proof of \cref{thm:Choi-hermitian_dense}]
	Given a Choi-hermitian matrix $M$, possibly defective or derogatory, we consider its perturbation $\widetilde M = (1-\alpha) M + \alpha E$, where $\alpha > 0$ and $E$ being an hermiticity-preserving $\NDsquare$ matrix. The operator $\widetilde M$ is then Choi-hermitian for any $\alpha>0$.
	
	We note that the characteristic polynomial $p$ of $\widetilde M $ has multiple roots if and only if $p$ and its derivative $p'$, have at least one common root; according to~\cref{resultant_roots}, this means that $\mathrm{Res}(p,p')=0$. Note that $\mathrm{Res}(p,p')$ is a polynomial in $\alpha$; because of the fundamental theorem of algebra, we then know that it can only have a finite number of roots or that it is equal to 0 for all $\alpha$. We can immediately rule out the second case, since we know that for $\alpha=1$ we have $\widetilde M = E$, which has no degenerate eigenvalues. This implies that there exists some value $\alpha_{\min}$ so that for every $0<\alpha<\alpha_{\min}$ $\widetilde M$ is $\NDsquare$ and $\lVert\wtM - M\rVert<\alpha \lVert E - M \rVert$.
	Choosing $\alpha < \min \{ \epsilon/\lVert E - M \rVert,\alpha_{\min} \}$ concludes the proof.
\end{proof}

For our specific task, \cref{thm:Choi-hermitian_dense} implies that we can choose a perturbation matrix~$E$ and a perturbation parameter~$\alpha$ so that we can obtain a $\NDsquare$ matrix from any input channel, such that any (possibly defective or derogatory) quantum embeddable channel will be recognized as compatible with Markovian dynamics by our algorithm even after being perturbed, given a tolerance parameter $\varepsilon \geq \alpha \left( \Fnorm{M}+\Fnorm{E} \right)$. Analogously, our algorithm running for the perturbation of a map with a non-Markovianity parameter~$\mu$ will retrieve a channel with a parameter smaller or equal to $\mu$.
Therefore, without loss of generality, we can substitute any matrix with a sufficiently close $\NDsquare$ matrix and take its logarithm instead.

A simple way to construct a Choi-hermitian $\NDsquare$ perturbation operator $E$ is to first consider a matrix $D$ with $\dim D = \dim M$, diagonal in the elementary basis, with the constraint
\begin{equation}
	d_{(j,k)} +\bar{d}_{(k,j)} \neq  d_{(\ell,m)} +\bar{d}_{(m,\ell)} \qquad \text{for all} \quad j,k,\ell,m\ \text{with} \ (j,k)\neq(\ell,m) ,
\end{equation}
where $d_{(j,k)}$ is the diagonal element in the $((j-1)d+k)$-th row and column. Then,
$E \coloneqq D+\mathds{F} D^\ast \mathds{F}$ is hermiticity-preserving and $\NDsquare$.

\subsection{Reconstructing perturbed degenerate eigenspaces in the multi-qubit case} \label{app:multi-qubit}
Assume that a multi-qubit channel has a cluster of $n$ eigenvalues that are close to each others, presumably stemming from an $n$-dimensional eigenspace with respect to an eigenvalue $\lambda$. We denote this subspace of $M$ by $\mc S_\lambda$ and we distinguish three possible cases.

If $\lambda$ is complex, then as previously discussed the hermiticity-preserving condition implies the existence of a second $n$-dimensional subspace with respect to the eigenvalue $\lambda^\ast$, $\mc S_{\lambda^\ast}$. This remark already tells us that $n \leq d^2/2$; indeed, for the single-qubit case we have ruled out the possibility of a 3-dimensional eigenspace for a complex eigenvalue. If we do not identify a second cluster of $n$ eigenvalues that are close to $\lambda^\ast$, then the channel $M$ is not quantum embeddable. Otherwise, we look for a basis $\set{v_1,\dots,v_n}$ of $\mc S_\lambda$ such that $\{v_1^\dagger,\dots,v_n^\dagger\}$ is a basis for $\mc S_{\lambda^\ast}$. Given the eigenvectors $\set{w_1,\dots,w_n}$ of the cluster of eigenvalues of $M$ related to the perturbed eigenvalues originated from $\lambda$, and $\set{z_1,\dots,z_n}$ related to $\lambda^\ast$, we hence seek $n$ vectors of the form $\ket{v} \coloneqq \alpha_1 \ket{w_1}+\dots+\alpha_n \ket{w_n}$ such that $v^\dagger \in \mathrm{span}\{z_1,\dots,z_n\}$. The set of $n$ independent vectors satisfying the condition will be chosen as a new eigenbasis of $\mc S_\lambda$ and their hermitian counterparts as the eigenbasis of $\mc S_{\lambda^\ast}$. The algorithm to retrieve the closest Lindbladian should then run ideally over all feasible solutions of this eigenbasis problem, although in practice this won't be possible since this constitute an infinite set.

\begin{table}[htbp]
	\small
	\begin{center}\renewcommand{\arraystretch}{2.5}
		\begin{tabular}{|C{2.65cm}|C{2.65cm} C{2.65cm} C{2.65cm}|}
			\hline
			\textbf{Type of eigenvalue} & \textbf{constraint on other eigenspaces} & \textbf{constraint on dimension $n$} & \textbf{allowed basis vectors} \\
			\hline
			complex & existence of partner subspace with same dimension & $n\leq d^2/2$ & each basis vector has an h.r. vector in partner eigenspace\\
			\hline
			negative & none & $n$ even & $n/2$ pairs of h.r. basis vectors \\
			\hline
			positive & none & none & $p$ pairs of h.r. basis vectors and  $n-2p$ s.a. basis vectors \\
			\hline
		\end{tabular}
	\end{center}
	\caption{Structure of an $n$-dim eigenspace of a multi-qubit hermiticity-preserving operator. Here we abbreviate ``hermitian related'' by h.r. and ``self-adjoint'' by s.a.}\label{tab:multi-qubit_deg}
\end{table}

If $\lambda$ is negative, we recall that in this case the relevant constraint is that $\log M$ is Choi-hermitian. The related perturbed subspace of $M$ then needs $n/2$ vectors $v_1,\dots v_n$ such that $\{v_1,v_1^\dagger,\dots,v_{n/2},v_{n/2}^\dagger\}$ is a basis of $\mc S_\lambda$. This implies that $n$ must necessarily be even, and indeed in the discussion of the one-qubit case we ruled out the possibility of having 3 eigenvalues stemming from a degenerate negative eigenvalue. Thus, again denoting by $\setn{w_1,\dots,w_n}$ the eigenvectors of the perturbed eigenvalues of $\lambda$, we seek~$n/2$ vectors $\ket{v} \coloneqq \alpha_1 \ket{w_1}+\dots+\alpha_n \ket{w_n}$ such that $v^\dagger \in \mc S_\lambda$. The set of vectors given by the $n/2$ pairs $\setn{v,v^\dagger}$ will be chosen as the new eigenbasis. Again, the algorithm for the Lindbladian should run for all feasible sets of hermitian related eigenbasis for  the subspace $\mc S_\lambda$.

The remaining option is a real, positive unperturbed eigenvalue $\lambda$ generating a cluster of $n$ perturbed eigenvalues of $M$. Here we have an additional freedom due to the possibility of admitting self-adjoint eigenvectors. More precisely, for each $p= 0,1, \dots, n/2$, we seek an eigenbasis of $p$ pairs of hermitian-related vectors and $n-2p$ self-adjoint vectors. Again, in principle all sets of vectors with this structure should be used for the algorithms, for all values of $p$.

This analysis constitutes the theoretical ground for the pre-processing algorithms given in \cref{sec:pre-processing}. We schematically illustrate the argument in \cref{tab:multi-qubit_deg}.

\subsection{Stability of hermiticity preserving subspaces}\label{sec:perturbation_theory}

Of course it may not be possible to perfectly reconstruct a hermiticity-preserving basis structure, via the procedure outlined in \cref{app:multi-qubit}.
In these cases we show that it is still possible to construct an \emph{approximately} hermiticity-preserving basis.
To do that we need to introduce some notation and concepts from matrix perturbation theory. 

\subsubsection{Matrix perturbation theory}\label{sec:preliminaries_perturbation}

In order to account for a parameter $\varepsilon$ reflecting the error tolerance with respect to the inaccuracy of the tomographic measurement, we make use of the following techniques and results in perturbation theory.

We will call \emph{eigenspace of $A$} a subspace of a matrix $A$ related to a single eigenvalue, and \emph{simple invariant subspace} an invariant subspace whose corresponding eigenvalues are all distinct from the ones of its complement subspace.

\paragraph{Perturbation of eigenvalues}~\newline
Given an $n$-dimensional eigenspace of an operator $T$ with respect to the eigenvalue $\lambda$, an analytic perturbation of the form
$T(\varepsilon) = T + \varepsilon \,  T^{(1)}  + \varepsilon^2 \, T^{(2)}+\dots$
will create a \emph{$\lambda$-cluster} of eigenvalues $\tilde \lambda_1,\dots,\tilde \lambda_r$ (also referred to as \emph{$\lambda$-group} in~\cite{Kato}) with multiplicity $n_1,\dots, n_r$ so that $\sum_{j=1}^r n_j = n$. The branches of perturbed eigenvalues are analytic functions with respect to the perturbation parameter $\varepsilon$, except for algebraic singularities. In our work, we deal with $\NDsquare$ matrices only, such that every eigenvalue has multiplicity~1; this also implies that the perturbation of a multi-dimensional eigenspace will produce simple subspaces, and more generally that any subspace of the perturbed operator $T(\varepsilon)$ is simple. 

\paragraph{Perturbation of invariant subspaces}~\newline
To establish whether a given operator is a noisy implementation of a quantum Markovian map, we will make use of and slightly adapt the matrix perturbation framework developed in~\cite{StewartSun}.

Let $\mathcal S$ be an $n$-dimensional invariant subspace of an $d^2 \times d^2$-dimensional operator $M$ and let the columns of $U_1$ be a set of orthonormal vectors spanning it. Then, we extend it into a unitary matrix $(U_1, U_2)$, where the columns of $U_2$ are a basis of the orthogonal complement $\mc S^\perp$ of $\mathcal S$. Under this basis transformation, the matrix representation of $M$ with respect to $(U_1, U_2)$ is given by
\begin{equation}
	(U_1, U_2)^H \, M \, (U_1, U_2)
	=
	\begin{pmatrix} M_1 & G \\ 0 & M_2 \end{pmatrix}
\end{equation}
since $U_2^H \, M U_1 = 0_{d^2-n,n}$.

It is usually more convenient to represent $M$ in a block-diagonal form: for $U_1, U_2, M_1, M_2$ as above, there always exist matrices $Z_1$ and $Z_2$ such that (cfr. Section V, Thm 1.5 of~\cite{StewartSun})
\begin{equation}\label{eq:spectral_resolution}
	(U_1,Z_2)^{-1} = (Z_1,U_2)^H
	\qquad \text{and} \qquad
	(Z_1,U_2)^H \, M \, (U_1,Z_2)
	=
	\begin{pmatrix} M_1 & 0 \\ 0 & M_2 \end{pmatrix}.
\end{equation}
This is referred to as the \emph{spectral resolution of $M$}.
Then, the invariant subspace $\mc S$ spanned by columns of $U_1$ is called simple if the eigenvalues of $M_1$ are all distinct from the ones of $M_2$, i.e.,
\begin{equation}
	\mathrm{spec} (M_1) \cap \mathrm{spec}(M_2) = \emptyset .
\end{equation}
It follows immediately that the orthogonal complement of $\mathcal S$ under the spectral resolution is also a simple invariant subspace.

The analysis on perturbed subspaces is based on the \emph{separation function} for a simple invariant subspace $\mc S$ of $M$,
\begin{equation}
	\mathrm{sep}(M_1,M_2)
	\coloneqq
	\min_{\norm{P}=1} P M_1 - M_2P ,
\end{equation}
which is an expression to quantify the distance between $M_1$ and $M_2$.

Now back to our task, we consider some hermiticity-preserving matrix $\Lambda$ and a perturbation operator $E$ such that $M = \Lambda + E$ is the quantum channel which we want to either identify as or discriminate from a Markovian process.

Let $\Lambda$ have an $n$-dimensional eigenspace with respect to the eigenvalue $\lambda$, and let $U_1$ be the matrix whose orthonormal columns span the $n$-dimensional invariant subspace of $M$ with respect to the corresponding $\lambda$-cluster, producing the spectral resolution of $M$ as per \cref{eq:spectral_resolution}. Under this basis, we write the representation of the perturbation matrix $E$ in the block-form
\begin{equation}\label{eq:E_block_form}
	(Z_1,U_2)^H \, E \, (U_1,Z_2)
	=
	\begin{pmatrix} E_{11} & E_{12} \\ E_{21} & E_{22} \end{pmatrix}.
\end{equation}
Then we have as a special case of Theorem 2.8 in Chapter V of Stewart and Sun~\cite{StewartSun}, where here we consider the backward perturbation $F=-E$ such that $\Lambda= M + F$ and then use the property for matrix norms $\norm{-A}=\norm{A}$,
\begin{theorem}\label{thm:subspace_perturbation}
	If
	\begin{equation}
		\gamma \coloneqq \mathrm{sep}(M_1,M_2) - \norm{E_{11}} - \norm{E_{22}} >0 \quad \text{and} \quad
		4\cdot \norm{E_{12}}\norm{E_{21}}/\gamma^2 <1
	\end{equation}
	for some submultiplicative matrix norm $\lVert \, \cdot \, \rVert$,
	then there exists a unique matrix $P$ with $\norm{P}<2\cdot\norm{E_{21}}/\gamma$ such that the columns of
	\begin{equation}
		V_1 = U_1 + Z_2P
	\end{equation}
	form a basis of the $\lambda$-eigenspace of $\Lambda$.
\end{theorem}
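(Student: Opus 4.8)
Since the statement is exactly Theorem~2.8 of Chapter~V in~\cite{StewartSun}, read with the backward perturbation $F=-E$ so that $\Lambda = M + F$, the plan is to reproduce the fixed-point argument underlying it, carefully tracking the sign substitution. First I would pass to the spectral resolution \cref{eq:spectral_resolution} of $M$: in the biorthogonal frames $(U_1,Z_2)$ and $(Z_1,U_2)$ the matrix $M$ becomes block diagonal $\mathrm{diag}(M_1,M_2)$, while by \cref{eq:E_block_form} the backward perturbation $F=-E$ has blocks $F_{ij}=-E_{ij}$, so that
\begin{equation}
	(Z_1,U_2)^H\,\Lambda\,(U_1,Z_2)
	=
	\begin{pmatrix} M_1+F_{11} & F_{12} \\ F_{21} & M_2+F_{22}\end{pmatrix}.
\end{equation}
An $n$-dimensional candidate subspace of $\Lambda$ is parametrised as the column span of $V_1=U_1+Z_2P$, which in these coordinates is the span of $\begin{pmatrix} \1 \\ P\end{pmatrix}$. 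Setting the lower-left block of $\Lambda$ in the frame adapted to this subspace to zero shows that $\mathrm{span}(V_1)$ is $\Lambda$-invariant if and only if $P$ solves the algebraic Riccati equation
\begin{equation}
	\mathbf{T}(P) = F_{21} - P\,F_{12}\,P , \qquad
	\mathbf{T}(P) \coloneqq P(M_1+F_{11}) - (M_2+F_{22})P .
\end{equation}

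The core step is to solve this equation by Banach iteration. The Sylvester operator $\mathbf{T}$ is invertible with $\norm{\mathbf{T}^{-1}} \le 1/\mathrm{sep}(M_1+F_{11},M_2+F_{22})$, and from the definition of the separation function together with submultiplicativity of the norm one gets the perturbation bound $\mathrm{sep}(M_1+F_{11},M_2+F_{22}) \ge \mathrm{sep}(M_1,M_2) - \norm{F_{11}} - \norm{F_{22}} = \gamma$, which is positive by the first hypothesis, so $\norm{\mathbf{T}^{-1}}\le 1/\gamma$. Writing the Riccati equation as the fixed-point problem $P=\phi(P)\coloneqq \mathbf{T}^{-1}(F_{21}-P F_{12}P)$, on the ball $\{\norm{P}\le\rho\}$ one has $\norm{\phi(P)} \le (\norm{F_{21}}+\norm{F_{12}}\rho^2)/\gamma$ and $\norm{\phi(P)-\phi(P')}\le (2\rho\norm{F_{12}}/\gamma)\,\norm{P-P'}$. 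Hence $\phi$ is a self-map and a contraction of the ball whenever $\norm{F_{12}}\rho^2 - \gamma\rho + \norm{F_{21}}\le 0$; this quadratic in $\rho$ has real roots exactly when $4\norm{F_{12}}\norm{F_{21}}/\gamma^2<1$ — the second hypothesis — and rationalising shows its smaller root equals $2\norm{F_{21}}/(\gamma+\sqrt{\gamma^2-4\norm{F_{12}}\norm{F_{21}}}) < 2\norm{F_{21}}/\gamma$. Banach's fixed-point theorem then yields a unique $P$ in that ball, with $\norm{P} < 2\norm{F_{21}}/\gamma = 2\norm{E_{21}}/\gamma$.

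It remains to identify $\mathrm{span}(V_1)$ with the $\lambda$-eigenspace of $\Lambda$ rather than some arbitrary $n$-dimensional invariant subspace. By construction $\mathrm{span}(V_1)$ is $n$-dimensional and $\Lambda$ acts on it with spectrum equal to that of $M_1+F_{11}-PF_{12}$, i.e.\ the $\lambda$-cluster of $M$ carried back by the perturbation $F=-E$; since in our setting $M$ is $\NDsquare$ arising as a perturbation of $\Lambda$, this cluster collapses precisely onto the eigenvalue $\lambda$ of $\Lambda$, whose eigenspace has dimension $n$ by hypothesis, so the two subspaces coincide. The main obstacle is the bookkeeping: deriving the Riccati equation with the correct signs, establishing the separation-function perturbation bound $\mathrm{sep}(M_1+F_{11},M_2+F_{22})\ge\mathrm{sep}(M_1,M_2)-\norm{F_{11}}-\norm{F_{22}}$, and tracking the substitution $F=-E$ so that the ``$M=\Lambda+\text{perturbation}$'' formulation of~\cite{StewartSun} and its bound become the ``$\Lambda = M - E$'' statement needed here; once these are pinned down, the contraction estimates are routine.
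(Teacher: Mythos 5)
The paper gives no proof of this statement at all: it is presented as a direct special case of Theorem~2.8 in Chapter~V of~\cite{StewartSun}, applied to the backward perturbation $F=-E$ (so that $\Lambda=M+F$) together with $\norm{-A}=\norm{A}$. Your proposal is therefore not an alternative route but a correct reconstruction of the standard Riccati/contraction proof underlying the cited theorem, with the sign substitution tracked properly; the derivation of the quadratic self-map condition, the bound $\mathrm{sep}(M_1+F_{11},M_2+F_{22})\ge\mathrm{sep}(M_1,M_2)-\norm{F_{11}}-\norm{F_{22}}=\gamma$, and the identification of the smaller root with $2\norm{E_{21}}/\big(\gamma+\sqrt{\gamma^2-4\norm{E_{12}}\norm{E_{21}}}\big)$ all match the source. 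Two small points to tidy: the representation of $\Lambda$ on $\mathrm{span}(V_1)$ is $M_1+F_{11}+F_{12}P$ (the invariance condition is $A_{21}+A_{22}P=P(A_{11}+A_{12}P)$, so the restricted operator is $A_{11}+A_{12}P$, not $M_1+F_{11}-PF_{12}$); and Banach's theorem as you invoke it gives uniqueness only in the ball of radius equal to the smaller root, whereas the theorem asserts uniqueness for all $\norm{P}<2\norm{E_{21}}/\gamma$ --- this follows because the second hypothesis gives $2\norm{E_{21}}/\gamma<\gamma/(2\norm{E_{12}})$, so $\phi$ is a contraction on the whole of that larger ball and any two fixed points there must coincide. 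Neither issue affects the validity of the argument.
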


\subsubsection{Approximate hermiticity-preserving eigenspace structure}\label{par:approximate_HP_structure}
When perturbations ``mix'' eigenspaces in a way such that it is not anymore possible  to obtain an exact choice of vectors according to the prescription given in  \cref{app:multi-qubit}, we can search for vectors that are close to satisfying those conditions. The motivation comes from the following analysis based on the relation between the set of vectors spanning an unperturbed eigenspace and its perturbed version,
where we will make full use of the tools presented in \cref{sec:preliminaries_perturbation}. We show that the hermiticity-preserving structure is a stable property with respect to perturbations.
Formally,

\begin{theorem}[Stability of the hermiticity-preserving structure]\label{thm:stability_h-r-_structure}
	Let $\Lambda$ be an hermiticity - preserving map and $M=\Lambda+E$ its perturbed version.
	If~$\lambda<0$, $n$-degenerate eigenvalue of $\Lambda$, then there exists a set of basis vectors $\set{\tilde w_i}_{i=1}^n$ spanning the right invariant subspace of $M$ with respect to the $\lambda$-cluster such that
	\begin{equation}\label{eq:almost_h-r_vectors}
		\norm{\tilde w_i^\dagger - \tilde{w}_{i+n/2}} = \mc O \big(\norm{Z_2}_1\norm{E_{21}}_1 \big)
		\qquad \text{for } i=1,\dots ,n/2,
	\end{equation}
	where $Z_2$ is the submatrix of the basis transformation $(U_1,Z_2)$, $(Z_1,U_2)^H$ for the spectral resolution of $M$ for the invariant subspace of the $\lambda$-cluster and its complement as per \cref{eq:spectral_resolution} and $E_{21}=U_2^H E U_1$ as per \cref{eq:E_block_form}.
	If~$\lambda \geq 0$, then there exists a set of basis vectors $\set{\tilde w_i}_{i=1}^n$ spanning the right invariant subspace of $M$ with respect to the $\lambda$-cluster such that for some $p\in \set{0,\dots , n/2}$  \cref{eq:almost_h-r_vectors} stands for $i=1, \dots,p$, and moreover
	\begin{equation}
		\norm{\tilde w_i - \tilde{w}_i^\dagger} = \mc O \big(\norm{Z_2}_1\norm{E_{21}}_1 \big)
		\qquad \text{for } i=2p+1,\dots ,n.
	\end{equation}
	If~$\lambda$ is complex, then there exist a set of basis vectors $\set{\tilde w_i}_{i=1}^n$ spanning the right invariant subspace of $M$ with respect to the $\lambda$-cluster and a set of basis vectors $\set{\tilde w_i}_{i=n+1}^{2n}$ spanning the right invariant subspace of $M$ with respect to the $\lambda^\ast$-cluster such that
	\begin{equation}
		\norm{\tilde w_i^\dagger - \tilde{w}_{i+n}} = \mc O \big(\norm{X_2}_1\norm{E_{21}}_1+\norm{Y_2}_1\norm{E_{21}'}_1 \big)
		\qquad \text{for } i=1,\dots ,n,
	\end{equation}
	where $X_2$ is the submatrix of the basis transformation $(T_1,X_2)$, $(X_1,T_2)^H$ for the spectral resolution of $M$ for the subspace of the $\lambda$-cluster and its complement subspace,
	$E_{21}=T_2^H E T_1$,
	$Y_2$ is the submatrix of the basis transformation $(S_1,Y_2)$, $(Y_1,S_2)^H$ for the spectral resolution of $M$ for the subspace of the $\lambda^\ast$-cluster and its complement subspace, $E_{21}'=S_2^H E S_1$.
\end{theorem}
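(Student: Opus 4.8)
The plan is to combine the exact hermiticity-preserving eigenbasis structure of the clean map $\Lambda$, catalogued in \cref{sec:Markovianity} and \cref{tab:multi-qubit_deg}, with the quantitative invariant-subspace bound of \cref{thm:subspace_perturbation}. Fix the degenerate eigenvalue $\lambda$ of $\Lambda$ with $n$-dimensional (right) eigenspace $\mc S_\lambda$. Since $\Lambda$ is hermiticity-preserving, $\mc S_\lambda$ has an exact \emph{structured} basis $B=(b_1,\dots,b_n)$: $n/2$ hermitian-related pairs if $\lambda<0$; $p$ hermitian-related pairs together with $n-2p$ self-adjoint vectors for some $p\in\{0,\dots,\lfloor n/2\rfloor\}$ if $\lambda\geq 0$; and, if $\lambda$ is complex, a basis $B$ of $\mc S_\lambda$ whose adjoints $\{b_1^\dagger,\dots,b_n^\dagger\}$ form a basis of the partner eigenspace $\mc S_{\lambda^\ast}$.

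First I would invoke \cref{thm:subspace_perturbation}. For $E$ small enough its hypotheses hold: $\mathrm{sep}(M_1,M_2)$ stays bounded away from $0$, converging to $\mathrm{sep}(\Lambda_1,\Lambda_2)>0$ (positive because the $\lambda$-cluster is spectrally isolated in $\Lambda$), while $\norm{E_{11}}_1,\norm{E_{22}}_1,\norm{E_{12}}_1,\norm{E_{21}}_1\to 0$, so $\gamma>0$ and $4\norm{E_{12}}_1\norm{E_{21}}_1/\gamma^2<1$. Applying the theorem with the (submultiplicative) entrywise $1$-norm yields $P$ with $\norm{P}_1<2\norm{E_{21}}_1/\gamma$ such that $V_1:=U_1+Z_2P$ is a basis of $\mc S_\lambda$, where $U_1$ has orthonormal columns spanning the invariant subspace of $M$ attached to the $\lambda$-cluster. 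Let $C$ be the unique invertible change of basis with $B=V_1C$, and define the candidate perturbed basis by $\tilde W:=U_1C=(V_1-Z_2P)C=B-Z_2PC$; its columns $\tilde w_i$ span the $\lambda$-cluster invariant subspace of $M$ because $U_1$ does and $C$ is invertible.

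The bounds then drop out by applying the anti-linear modified adjoint $\dagger$ to the identity $\tilde w_i=b_i-(Z_2PC)_i$ column by column. For $\lambda<0$, $b_i^\dagger=b_{i+n/2}$ gives $\tilde w_i^\dagger-\tilde w_{i+n/2}=(Z_2PC)_{i+n/2}-(Z_2PC)_i^\dagger$, whose $1$-norm is at most $\norm{(Z_2PC)_{i+n/2}}_1+\norm{(Z_2PC)_i}_1\leq 2\norm{Z_2}_1\norm{P}_1\norm{C}_1$, using $\norm{v^\dagger}_1=\norm{v}_1$ (the flip $\Flip$ only permutes entries and conjugation preserves moduli) and submultiplicativity; substituting $\norm{P}_1<2\norm{E_{21}}_1/\gamma$ gives the claimed $\mc O(\norm{Z_2}_1\norm{E_{21}}_1)$, with the bounded factors $\gamma^{-1}$ and $\norm{C}_1$ absorbed into the constant. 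The $\lambda\geq 0$ case is identical on the $p$ hermitian-related pairs, and on the self-adjoint columns uses $b_i=b_i^\dagger$ to obtain $\tilde w_i-\tilde w_i^\dagger=(Z_2PC)_i^\dagger-(Z_2PC)_i$. The complex case simply runs this argument twice: once for the $\lambda$-cluster (producing the resolution data $T_1,X_2,P,C$) and once for the $\lambda^\ast$-cluster (producing $S_1,Y_2,P',D$ from the structured basis $\{b_i^\dagger\}$ of $\mc S_{\lambda^\ast}$); then $\tilde w_i:=(T_1C)_i$ and $\tilde w_{i+n}:=(S_1D)_i$ satisfy $\tilde w_i^\dagger-\tilde w_{i+n}=(Y_2P'D)_i-(X_2PC)_i^\dagger$, and the two error terms combine into $\mc O(\norm{X_2}_1\norm{E_{21}}_1+\norm{Y_2}_1\norm{E_{21}'}_1)$.

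I expect the main obstacle to be not any individual estimate but controlling the auxiliary matrix $C$ (and $C,D$ in the complex case) uniformly as the perturbation vanishes: one must argue that $\norm{C}_1$ does not blow up, which follows once $U_1$ -- and hence $V_1=U_1+Z_2P$, since $\norm{P}_1\to 0$ -- is chosen to converge to a fixed orthonormal basis of $\mc S_\lambda$, so that $C$ converges to the coordinate matrix of $B$ in that fixed basis, which is invertible with bounded norm. The remaining work is bookkeeping: fixing the value of $p$ and the pairing indices matching the structured basis of $\Lambda$ to the perturbed basis, noting that \cref{thm:subspace_perturbation} is stated for right (not left) invariant subspaces, and checking that the relevant eigenvalue of $\Lambda$ is semisimple of multiplicity $n$ so that the eigenbasis structure of \cref{sec:Markovianity} applies.
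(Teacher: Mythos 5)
Your proposal is correct and follows essentially the same route as the paper's proof: apply \cref{thm:subspace_perturbation} to write the exact eigenbasis of $\Lambda$'s $\lambda$-eigenspace as $V_1=U_1+Z_2P$, transfer the structured (hermitian-related / self-adjoint) coefficients from the $V_1$-basis to the $U_1$-basis of the perturbed invariant subspace, and bound the resulting defect by $\norm{Z_2}\norm{P}$ times the coefficient size via the triangle inequality and invariance of the $1$-norm under the modified adjoint. Your change-of-basis matrix $C$ is just the paper's coefficient vectors $\alpha,\beta$ assembled into a matrix (the paper keeps the factor $\max_j\{\abs{\alpha_j},\abs{\beta_j}\}$ explicit rather than arguing it is bounded), and the complex case is handled identically by running the argument on the two partner clusters.
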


In principle, every basis set satisfying the bounds of \cref{thm:stability_h-r-_structure} should be used as a new eigenbasis of $R$ for the Lindbladian algorithms.
However, due to the fact that perturbation terms $ \norm{E_{12}}, \norm{E_{11}}$ and $\norm{E_{22}}$ cannot be evaluated precisely, in practice the expression in \cref{eq:almost_h-r_vectors} can only be estimated according to a guess on the perturbation matrix $E$ reflected by the tolerance parameter~$\varepsilon$. Nevertheless, when an exact hermiticity-preserving structure cannot be retrieved, this theorem motivates the approach of searching for a basis set made of pairs of eigenvectors that are close to being hermitian-related (or self-adjoint, respectively), instead of performing a brute-force search over all basis transformations in the subspace.

\vspace{12pt}

\begin{proof}[Proof of \cref{thm:stability_h-r-_structure}]
	In this proof we will make use of the vector 1-norm and matrix 1-norm only, and denote them simply by $\lVert \, \cdot \, \rVert$ to ease the notation. Assume that $\Lambda$ is an hermiticity-preserving operator such that $\Lambda = M - E$. Let $\{w_j\}_{j=1}^n$ be the set of eigenvectors of $M$ related to the cluster of $n$ eigenvalues $\{\tilde\lambda_j\}_j$ stemming from an $n$-degenerate eigenvalue $\lambda<0$ of $\Lambda$, and arranged in columns forming a $d^2 \times n$ matrix $W_1$. Let $\{ w_j\}_{j=n+1}^{d^2}$ be the remaining eigenvectors with respect to the remaining eigenvalues $\{\mu_j\}_j$ arranged in the columns forming the $d^2\times (d^2-n)$ matrix $W_2$.
	We can write
	\begin{equation}\label{eq:structure_approx_relation}
		(W_1,W_2)^{-1} M \, (W_1,W_2) =
		\begin{pmatrix} \mathrm{diag}(\tilde \lambda_1,\dots,\tilde\lambda_n) & 0\\ 0 & \mathrm{diag} (\mu_1,\dots\mu_{d^2-n})
		\end{pmatrix}.
	\end{equation}
	Consider a new basis of orthonormal vectors for the subspace of the $\lambda$-cluster $\{u_j\}_j$ given by $\ket{u_j}= \sum_{k=1}^n \varsigma_{jk} \, \ket{w_k}$, $j=1\dots,n$, arranged in columns to form a matrix $U_1$, together with a $d^2 \times (d^2-n)$ matrix $U_2$ such that $(U_1,U_2)$ is unitary. Construct matrices $Z_1$ and $Z_2$ such that $(U_1,Z_2)^{-1} = (Z_1,U_2)^H$. Under this basis transformation, we write
	
	\begin{align}\label{eq:block_structure}
		(Z_1,U_2)^H \, M \, (U_1,Z_2)
		&=
		\begin{pmatrix}M_1 & 0 \\ 0 & M_2 \end{pmatrix}\\
		\intertext{and}
		(Z_1,U_2)^H \, E \, (U_1,Z_2)
		&=
		\begin{pmatrix} E_{11} & E_{12} \\ E_{21} & E_{22} \end{pmatrix}.
	\end{align}
	
	With this choice of basis and matrix representation, we can apply \cref{thm:subspace_perturbation} in \cref{sec:preliminaries_perturbation}: the eigenspace of $\lambda$ is spanned by the columns $\{v_j\}_j$ of the $d^2\times n$ matrix $V_1 = U_1 + Z_2\,P$ for a unique operator $P$ with
	\begin{equation}
		\norm{P} < 2 \cdot \norm{E_{21}}/\left(\mathrm{sep}(M_1,M_2)-\norm{E_{11}}-\norm{E_{22}}\right).
	\end{equation}
	
	Now, since $\Lambda$ is hermiticity-preserving, then there are $n/2$ vectors of the form $\ket{\tilde v} = \sum_{j=1}^n \alpha_j \ket{v_j}$ such that $\ket{\tilde v^\dagger}$ is in the same subspace, i.e., $\ket{\tilde v^\dagger} = \sum_{j=1}^n \beta_j \ket{v_j}$ for some $\set{\beta_j}_j$. If the $\alpha$'s and $\beta$'s are a solution for this relation, then for $a_k \coloneqq \sum_j \alpha_j \varsigma_{jk}$ and $b_k \coloneqq \sum_j \beta_j \varsigma_{jk}$ and defining 
	$\tilde w_a \coloneqq \sum_{k=1}^n a_k \ket{w_k}$ and 
	$\tilde w_b \coloneqq \sum_{k=1}^n b_k \ket{w_k}$ we obtain
	\begin{align}
		&\Big\Vert \tilde w_a^\dagger - \tilde w_b \Big\Vert
		=
		\bigg\Vert \sum_{k=1}^n( a_k \ket{w_k})^\dagger - \sum_{k=1}^n b_k \ket{w_k} \bigg\Vert
		=
		\bigg\Vert \sum_{j=1}^n ( \alpha_j \ket{u_j})^\dagger - \sum_{j=1}^n \beta_j \ket{u_j} \bigg\Vert \\
		&\leq
		\bigg\Vert \sum_{j=1}^n ( \alpha_j \ket{v_j})^\dagger - \sum_{j=1}^n \beta_j \ket{v_j} \bigg\Vert
		+
		\sum_{j=1}^n \abs{\alpha_j} \, \norm{\ket{u_j^\dagger} - \ket{v_j^\dagger} }
		+
		\sum_{j=1}^n \abs{\beta_j} \, \Big\lVert \ket{u_j} - \ket{v_j} \Big\rVert.\label{eq:bound_hp}
	\end{align}
	The first term in~\cref{eq:bound_hp} is 0 by construction. Moreover, the  1-norm is invariant under hermitian conjugation, i.e.
	\begin{equation}
		\Vert \ket{u_j^\dagger} - \ket{v_j^\dagger} \Vert =  \Vert(\ket{u_j} - \ket{v_j})^\dagger \Vert =  \Vert \Flip(\ket{u_j} - \ket{v_j})^\ast \Vert =  \Vert \ket{u_j} - \ket{v_j} \Vert .
	\end{equation}
	Thus we have
	\begin{align}
		\Big\Vert \tilde w_a^\dagger - \tilde w_b \Big\Vert
		&\leq
		2\, \max_j\set{\abs{\alpha_j},\abs{\beta_j}} \, \norm{V_1 - U_1} \\
		&\leq
		2\, \max_j\set{\abs{\alpha_j},\abs{\beta_j}} \, \norm{Z_2} \norm{P}\\
		&\leq
		\frac{4\, \max_j\set{\abs{\alpha_j},\abs{\beta_j}} \, \norm{Z_2} \norm{E_{21}}}{\mathrm{sep}\left(M_1,M_2\right) - \norm{E_{11}}- \norm{E_{22}}}. \label{eq:bound_approx_hp}
	\end{align}
	
	As discussed above, if $\lambda>0$ then self-adjoint vectors should be taken into account. Assume the same structure as in \cref{eq:structure_approx_relation,eq:block_structure} again with $\ket{u_j}= \sum_{k=1}^n \varsigma_{jk} \ket{w_k}$, $j=1\dots,n$ orthonormal vectors and with the columns of $V_1 = U_1 + Z_2\, P$ spanning the eigenspace of $\lambda$. Assume that there exists a set of coefficients $\set{\alpha_j}_j$ such that $\ket{\tilde v} =\sum_j \alpha_j \ket{v_j}$ and $\tilde v^\dagger=\tilde v$. Then a bound follows in the same fashion as~\cref{eq:bound_approx_hp},
	\begin{align}\label{eq:bound_approx_hp_selfadjoint}
		\Big\Vert \tilde w_a^\dagger - \tilde w_a \Big\Vert
		&\leq
		2\, \max_j\set{\abs{\alpha_j}} \, \norm{Z_2} \norm{P}\\
		&\leq
		\frac{4\, \max_j\set{\abs{\alpha_j}} \, \norm{Z_2} \norm{E_{21}}}{\mathrm{sep}\left(M_1,M_2\right) - \norm{E_{11}}- \norm{E_{22}}},
	\end{align}
	where again $a_k = \sum_j \alpha_j \varsigma_{jk}$.

	An analogous bound can be derived for a complex eigenvalue $\lambda$, where the condition should also account for a partner eigenspace with respect to $\lambda^\ast$. In this case, we have two different basis choices to consider. Let the columns of $T_1$ span the subspace of the $\lambda$-cluster and let $(T_1,T_2)$ be unitary and such that
	\begin{equation}
		(X_1,T_2)^H \, M \, (T_1,X_2)
		=
		\begin{pmatrix}M_1 & 0 \\ 0 & A_2 \end{pmatrix}.
	\end{equation}
	Similarly, let the columns of $S_1$ be a basis of the subspace for the $\lambda^\ast$-cluster so that
	\begin{equation}
		(Y_1,S_2)^H \, M \, (S_1,Y_2)
		=
		\begin{pmatrix}N_1 & 0 \\ 0 & B_2 \end{pmatrix}.
	\end{equation}
	with $(S_1,S_2)$ unitary. Then the columns of $L_1 = T_1 + X_2\,P$ span the eingenspace of $\lambda$ and those of $R_1= S_1+ Y_2 \, Q$ span that of $\lambda^\ast$.
	Additionally, assume that the eigenspace of $\lambda$ admits a basis of vectors whose hermitian counterparts span the eigenspace of $\lambda^\ast$, that is, there exist coefficients $\alpha$'s and $\beta$'s such that $\sum_j (\alpha_j \ket{\ell_j})^\dagger = \sum_j \beta_j \ket{r_j}$.
	
	Now, let $\set{w_k}_{k=1}^n$ denote the eigenvectors of the $\lambda$-cluster, and let\linebreak  $\set{w_k}_{k=n+1}^{2n}$ denote those of the $\lambda^\ast$-cluster and write $\ket{t_j} =\sum_j \varsigma_{jk} \ket{w_k}$ and $\ket{s_j} = \sum_j \varsigma_{j,k+n} \ket{w_{k+n}}$.
	Then for 
	$a_k \coloneqq \sum_j \alpha_j \varsigma_{jk}$,
	$b_{k,n} \coloneqq \sum_j \beta_j \varsigma_{j,k+n}$
	and
	$\tilde w_a \coloneqq \sum_{k=1}^n a_k \ket{w_k}$,
	$\tilde w_{a,n} \coloneqq \sum_{k=1}^{n} b_{k,n} \ket{w_{k+n}}$
	\begin{equation}\label{eq:approximate_HP_complex_pair}
		\Big\Vert \tilde w_a^\dagger - \tilde w_{a,n} \Big\Vert
		\leq
		\max_j\set{\abs{\alpha_j},\abs{\beta_j}} \ \left( \norm{X_2}\,\norm{P} + \norm{Y_2}\,\norm{Q} \right).
	\end{equation}
\end{proof}

\section{Estimation guarantees}\label{app:maths}

\subsection{Time dependent algorithm: error bounds and Lipschitz parameter} \label{app:timeDep} 

\subsubsection{Estimation of time-dependent Markovian channels}\label{sec:bounds}

We want to investigate the error in the approximation of the time series snapshots of a time-dependent Markovian quantum channel with best-fit Lindbladians from our scheme, and in particular to link the degree of accuracy with the number of snapshots taken. In other words, we want to establish how fine-grained the tomographic measurements should be in order to achieve the desired level of accuracy.

To simplify calculations and presentation, we assume that all snapshots are taken at uniform time intervals $\frac{\mc T}{N}$ over a total run time $\mc T$, but we note that the results can be extended to the general case straightforwardly.
As a second assumption for $\Phi(t)$, we consider an evolution where the Lindbladian is time-dependent but with moderately varying fluctuation; as discussed, we impose a Lipschitz continuity
$\norm{L(t_2)-L(t_1)} \leq \eta (t_2 - t_1)$ for some Lipschitz constant~$\eta$.
We then have

\begin{theorem}[Snapshots approximation]
	Let $\set{M_p}_{p=1}^N$ be a time series of $N$ tomographic snapshots over a total run time $\mc T$ of a Markovian quantum channel acting on a $d$-dimensional space. Let $\eta$ be the Lipschitz constant of the generator for some norm $\norm{\ \cdot \ }$.
	Construct $\Theta_p=M_p M_{p-1}^{-1}$ and define $\wt M_p=\prod_{j=p}^1 \e^{L_j}$ with best-fit Lindbladians $\set{L_p}_p$ from \cref{alg:time_dependent}.
	Then for all $p=1,\dots, N$,
	\begin{equation}\label{eq:thm_expression_1}
		\norm{\Theta_p - \e^{L_p}} \leq \eta^2 \frac{\mc{T}^4}{N^3} + \mc O \Big(\frac{\mc T^5}{N^4}\Big)
	\end{equation}
	and
	\begin{equation}
		\norm{M_p - \wt M_p}
		\leq
		\sqrt d \Big[ \exp\Big(\sqrt d \, \eta^2 \frac{\mc{T}^4}{N^2}\Big) - 1 \Big] + \mc O \Big(\frac{\mc T^5}{N^3}\Big) .
	\end{equation}
\end{theorem}
The number of snapshots should hence scale quadratically with respect to the run time, and the difference between snapshots and reconstructed best-fit channel closes in the asymptotic limit $N\rightarrow \infty$, in line with the analysis presented in~\cite{WolfDivisibility}.

\begin{proof}
	We first investigate a bound on $\Theta_p$.
	Consider the time-averaged Lindbladian
	$L_\avg(p) = \frac{N}{\mc T} \int_{(p-1)\frac{\mc T}{N}}^{p\frac{\mc T}{N}} L(t) \dt$ and expand the time-ordered integral as a Dyson series,
	\begin{align}
		\Theta_p= &\mathbb T \exp \int_{(p-1)\frac{\mc T}{N}}^{p\frac{\mc T}{N}} L(t) \dt \\
		&=
		\1 + \int_{(p-1)\frac{\mc T}{N}}^{p\frac{\mc T}{N}} L(t) \dt
		+
		\int_{(p-1)\frac{\mc T}{N}}^{p\frac{\mc T}{N}} dt_1 \int_{(p-1)\frac{\mc T}{N}}^{t_1} dt_2 \, L(t_1) L(t_2) \\
		&+
		\int_{(p-1)\frac{\mc T}{N}}^{p\frac{\mc T}{N}} dt_1 \int_{(p-1)\frac{\mc T}{N}}^{t_1}  dt_2 \int_{(p-1)\frac{\mc T}{N}}^{t_2} \dt_3 \, L(t_1) L(t_2) L(t_3) + \dots \\
		&=
		\1 + L_\avg(p) \frac{\mc T}{N}
		+
		\sum_{k=2}^\infty
		\int_{(p-1)\frac{\mc T}{N}}^{p\frac{\mc T}{N}} dt_1 \int_{(p-1)\frac{\mc T}{N}}^{t_1} dt_2 \dots \int_{(p-1)\frac{\mc T}{N}}^{t_{k-1}} \dt_k \, L(t_1) L(t_2) \cdots L(t_k). \label{eq:Dyson_unfolded}
	\end{align}
	Let us consider the second Dyson term, and note that thanks to the Lipschitz continuity in the observed time interval  we can write
	$L(t) = L_\avg(p) + \eta \frac{\mc T}{N} Y(t)$ for some $Y(t)$ with $\norm{Y(t)} \leq 1$. Taking the norm yields
	\begin{align}
		&\norm{\int_{(p-1)\frac{\mc T}{N}}^{p\frac{\mc T}{N}} dt_1 \int_{(p-1)\frac{\mc T}{N}}^{t_1} dt_2 \, L(t_1) L(t_2)} \\
		&\leq
		\int_{(p-1)\frac{\mc T}{N}}^{p\frac{\mc T}{N}} dt_1 \int_{(p-1)\frac{\mc T}{N}}^{t_1} dt_2 \, \norm{(L_\avg(p) + \eta \frac{\mc T}{N} Y(t_1))(L_\avg(p)+ \eta \frac{\mc T}{N} Y(t_2)) } \\
		&\leq
		\frac 1 2 \norm{L_\avg(p)}^2 \frac{\mc{T}^2}{N^2} + 2 \eta \norm{L_\avg(p)} \norm{\max Y(t)} \frac 1 2 \frac{\mc{T}^3}{N^3} + \eta^2 \norm{\max Y(t)}^2 \frac 1 2 \frac {\mc{T}^4}{N^4}\\
		&\leq
		\frac 1 2 \norm{L_\avg(p)}^2 \frac{\mc{T}^2}{N^2} + 2 \eta \norm{L_\avg(p)} \frac 1 2 \frac{\mc{T}^3}{N^3} + \eta^2 \frac 1 2 \frac {\mc{T}^4}{N^4}.
	\end{align}
	In the same vein, the norm of $k$-th term (with $k\geq 2$) in the Dyson series can be bounded as
	\begin{align}
		&\norm{\int_{(p-1)\frac{\mc T}{N}}^{p\frac{\mc T}{N}} dt_1 \int_{(p-1)\frac{\mc T}{N}}^{t_1} dt_2 \dots \int_{(p-1)\frac{\mc T}{N}}^{t_{k-1}} \dt_k \, L(t_1) L(t_2) \cdots L(t_k)} \\
		&\leq
		\frac {1}{k!} \norm{L_\avg(p)}^k \frac{\mc{T}^k}{N^k}
		+ k \eta \norm{L_\avg(p)} \frac {1}{k!}\frac{\mc{T}^{k+1}}{N^{k+1}}
		+ \dots\\
		&=
		\sum_{\ell=0}^k \binom k \ell \norm{L_\avg(p)}^{k-\ell}\eta^\ell \frac {1}{k!} \frac{\mc{T}^{k+\ell}}{N^{k+\ell}} \\
		&=
		\sum_{\ell=0}^k \frac{1}{\ell!} \eta^\ell \left(\frac{\mc{T}}{N}\right)^{2\ell}
		\frac{1}{(k-\ell)!} \norm{L_\avg(p)}^{k-\ell} \frac{\mc{T}^{k-\ell}}{N^{k-\ell}}.
	\end{align}
	Now if we fix $\ell$ and sum over $k$ (that is, we sum the $\ell$-th item in the sum for the bound of each Dyson term) we get
	\begin{align}
		&\frac{1}{\ell!} \eta^\ell \left(\frac{\mc T}{N}\right)^{2\ell}
		\sum_{k\geq \ell}\frac{1}{(k-\ell)!} \norm{L_\avg(p)}^{k-\ell} \frac{\mc{T}^{k-\ell}}{N^{k-\ell}} \\
		&=
		\frac{1}{\ell!} \eta^\ell \left(\frac{\mc T}{N}\right)^{2\ell}
		\sum_{m\geq 0}\frac{1}{m!} \norm{L_\avg(p)}^{m} \frac{\mc{T}^{m}}{N^{m}} \\
		&=
		\frac{1}{\ell!} \eta^\ell \left(\frac{\mc T}{N}\right)^{2\ell} \exp \left(\norm{L_\avg(p)} \frac{\mc T}{N} \right), \label{eq:bound_for_ell}
	\end{align}
	and summing now \cref{eq:bound_for_ell} over $0\leq \ell \leq k $ gives
	\begin{equation}\label{eq:two_exponentials}
		\exp \left(\eta \frac{\mc{T}^2}{N^2}\right) \exp \left( \norm{L_\avg(p)} \frac{\mc T}{N}\right) .
	\end{equation}
	Note that in \cref{eq:two_exponentials} we have accounted for an extra term not appearing in \cref{eq:Dyson_unfolded}, that is, the item for $k=\ell=1$ that should thus be subtracted, yielding
	\begin{equation}
		\norm{\mathbb T \exp \int_{(p-1)\frac{\mc T}{N}}^{p\frac{\mc T}{N}} L(t) \dt}
		\leq
		\exp \left(\eta \frac{\mc{T}^2}{N^2}\right) \exp \left( \norm{L_\avg(p)} \frac{\mc T}{N}\right) - \eta \frac{\mc{T}^2}{N^2}.
	\end{equation}
	
	For the actual bound on the approximation of $\Theta_p$ we thus have
	\begin{align}
		\norm{\Theta_p - \exp L_p}
		&\leq
		\norm{\mathbb T \exp \int_{(p-1)\frac{\mc T}{N}}^{p\frac{\mc T}{N}} L(t) \dt - \exp L_p}  \\
		&\leq
		\norm{\mathbb T \exp \int_{(p-1)\frac{\mc T}{N}}^{p\frac{\mc T}{N}} L(t) \dt - \exp \Big(L_\avg(p) \frac{\mc T}{N}\Big)}\\
		&\leq
		\left(\exp \eta \frac{\mc{T}^2}{N^2}  - 1\right) \exp \Big(\norm{L_\avg(p)} \frac{\mc T}{N}\Big) - \eta \frac{\mc{T}^2}{N^2} \\
		&=
		\eta \frac{\mc{T}^2}{N^2} \left( \exp \Big(\norm{L_\avg(p)} \frac{\mc T}{N}\Big) - 1 \right)
		+ \mc O (\eta^2 \frac{\mc{T}^4}{N^4}).
	\end{align}
	Considering that by Lipschitz continuity we have $\norm{L_\avg(p)}\leq \eta \mc T$, neglecting terms $\mc O (\frac{\mc T^5}{N^4})$ we obtain the first result in the theorem in \cref{eq:thm_expression_1},
	\begin{equation}\label{eq:bound_T_p_approximation}
		\norm{\Theta_p - \exp L_p} \leq \eta^2 \frac{\mc{T}^4}{N^3} .
	\end{equation}
	
	
	\medskip
	
	Now we turn to the error bound in the snapshot approximation, which is clearly the largest at the latest snapshot $M_N$.
	By writing $\Theta_p =   \exp L_p  + A_p$ we then have
	\begin{equation}
		\norm{M_N - \wt M_N}
		=
		\Big\Vert \prod_{p=N}^1 (\exp L_p  + A_p) - \prod_{p=N}^1 \exp L_p  \Big\Vert .
	\end{equation}
	Now, following the argument in~\cite{WolfDivisibility}, we note that $\prod_{p=N}^1 (\exp L_p  + A_p)$ contains $\binom{N}{k}$ terms with $N-k$ exponentials~$\exp L_p$ and $k$ operators from~$\set{A_p}_p$, where the former come in at most $k+1$ separated groups each of those bound by $\sqrt d$~\cite[Theorem 1]{contractivityPT}, and the latter according to \cref{eq:bound_T_p_approximation} are bound by
	$ \max \norm{A_p} \leq \eta^2 \frac{\mc{T}^4}{N^3} + \mc O (\frac{\mc T^5}{N^4})$ .
	Hence,
	\begin{align}
		\norm{M_N - \wt M_N}
		&\leq
		\sqrt d \Big[ \Big(1+\sqrt d \, \eta^2 \frac{\mc{T}^4}{N^3} + \mc O (\frac{\mc T^5}{N^4})\Big)^N - 1 \Big] \\
		&\leq
		\sqrt d \Big[ \exp \Big( \sqrt d \, \eta^2 \frac{\mc{T}^4}{N^2} \Big) - 1 \Big] + \mc O (\frac{\mc T^5}{N^3}) .
	\end{align}
	
\end{proof}

\subsubsection{Bound on difference between consecutive extracted Lindbladians}\label{sec:Lipschitz_bound}
In this section we prove the following result for the interpretation of the parameter $\beta(\eta)$ in \cref{alg:time_dependent}.
This can be expressed in relation to the Lipschitz constant~$\eta$ of the Lindblad operator, the total run time~$\mc T$ and the total number of snapshots taken, $N$. More precisely, $\beta(\eta)$ is characterised by the RHS of \cref{eq:bound_for_beta} of the following theorem.

\begin{theorem}[Lipschitz bound]
	Consider a time-dependent Markovian process $\Phi(t)$ characterised by a Lipschitz continuous Lindblad generator $L(t)$ with Lipschitz constant $\eta$ with respect to some norm $\norm{\ \cdot \ }$.
	Then the set of best-fit Lindbladians $\set{L_p}_{p=1}^N$ to
	\begin{equation}
		\log \Phi( (p-1)\mc T/N,p \mc T/N ) \quad p=2,\dots,N
	\end{equation}
	satisfies
	\begin{equation}\label{eq:bound_for_beta}
		\norm{L_p - L_{p-1}} \leq \eta \, \mc T^2 / N^2 + 2 \Big(\mathfrak R (p) + \mathfrak R (p-1) \Big)
	\end{equation}
	where $\mathfrak R(p)$ is the error for the truncation at the first order of the Magnus series over the interval $[t_{p-1},t_p]$.
\end{theorem}

\noindent In the specific case of snapshots taken at regular intervals $\mc T/N$, we thus have:
\begin{equation}
	\beta(\eta) = \eta \frac{\mc T^2}{N} + 4 \mathfrak R .
\end{equation}

\begin{proof}
	We consider the operator $K_p = \log \Phi( (p-1)\mc T/N,p \mc T/N )= L_{\avg} (p) \frac{\mc T}{N} + \mathfrak R (p)$, which is not guaranteed to be of Lindbladian form~\cite{Schnell20,Schnell21}. By triangle inequality we have
	\begin{align}
		\norm{L_p - L_{p-1}}
		&\leq
		\norm{L_p - K_p + K_p - L_{p-1} + K_{p-1} - K_{p-1} } \\
		&\leq
		\norm{L_p - K_p} + \norm{ L_{p-1} - K_{p-1} } + \norm{ K_p - K_{p-1} } \label{eq:bound_triangle}.
	\end{align}
	Since $L_p$ is to be considered the best-fit Lindbladian and observing that the first order Magnus term $L_\avg (p) \frac{\mc T}{N} = \int_{(p-1)\mc T /N}^{p \mc T /N}   L(t) \dt$ is a Lindbladian, it follows that
	\begin{equation}
		\norm{L_p - K_p} \leq \norm{L_\avg (p) \frac{\mc T}{N} - K_p} \eqqcolon \mathfrak R (p),
	\end{equation}
	and equivalently for  $\norm{ L_{p-1} - K_{p-1} }$.
	For the third term in \cref{eq:bound_triangle}, we write
	\begin{align}
		\norm{ K_p - K_{p-1} }
		&\leq
		\norm{L_\avg (p) \frac{\mc T}{N} - L_\avg (p-1) \frac{\mc T}{N} } + \mathfrak R (p) + \mathfrak R (p-1) \\
		&=
		\norm{ \int_{(p-1)\mc T /N}^{p \mc T /N} L(t)  \dt - \int_{(p-2)\mc T /N}^{(p-1) \mc T /N} L(t)  \dt} + \mathfrak R (p) + \mathfrak R (p-1) \\
		&=
		\norm{ \int_{(p-1)\mc T /N}^{p \mc T /N} \Big( L(t) - L(t-\mc T/N) \Big)  \dt } + \mathfrak R (p) + \mathfrak R (p-1) \\
		&\leq
		\eta \, T /N \int_{(p-1)\mc T /N}^{p \mc T /N} \dt + \mathfrak R (p) + \mathfrak R (p-1) \\
		&=
		\eta \, T^2 /N^2 + \mathfrak R (p) + \mathfrak R (p-1) .
	\end{align}
\end{proof}

\subsection{Searching over branches of the matrix logarithm} \label{sec:Mloop}

The loop in all the convex optimisation algorithms runs a brute-force search over the $\vec{m}$-branches of the matrix logarithm.
As previously explained, there are countably infinitely many such branches.
However, Khachiyan and Porkolab~\cite{Khachiyan+Porkolab} prove that setting $m_\mathrm{max} = \BigO(2^{2^{\poly(d)}})$ always suffices to find a solution if one exists.
In fact, the same authors prove that there is a polynomial-time algorithm for integer semidefinite programming with any fixed number of variables, so for any fixed Hilbert space dimension in our setting, based on the ellipsoid method (which is more efficient that performing brute-force search up to the upper-bound).
This polynomial-time algorithm is important theoretically, but not practical.
In reality, integer programming solvers use branch-and-cut methods~\cite{Mitchell2009}.

However, the increased implementation complexity of these more sophisticated techniques is unlikely to be justified here. Instead, we encode a simple, brute-force search over the branches of the logarithm, ordered by increasing $\abs{\vec{m}}$. There are physical reasons why a naive, brute-force search is likely to work well here. Large values of $m_j$ correspond to high-energy / frequency components of the noise. It is unlikely that very high energy underlying physical processes play a significant role, and it is also unlikely that our tomographic data will be sensitive enough to resolve very high-frequency components. Therefore, if there is a Lindblad generator consistent with the tomographic snapshot, it is most likely to occur at low values of $\abs{\vec{m}}$. Numerical studies on synthesised examples of 1-qubit tomography (see \cref{fig:m-variation}) confirm that setting a small value of $m_\mathrm{max}$ suffices in practice; indeed, even $m_\mathrm{max}=1$ suffices in all cases we tested. This is also corroborated by the numerics that was carried out in~\cite{Markdynamics08}.

It would be straightforward to replace the brute-force search by a call to an integer program solver if this ever proved necessary.

\begin{figure}[htbp]
	\centering
	\begin{subfigure}[t]{0.45\textwidth}
		\centering
		\includegraphics[width=\linewidth]{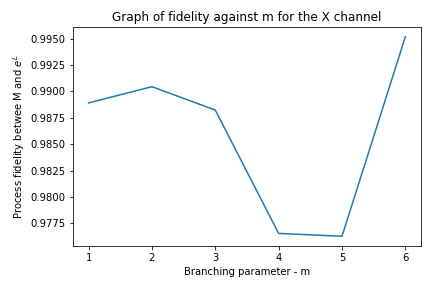}
		\caption{Process fidelity of the tomographic results and the Markovian channel output by our algorithm for an $X$-gate, running \cref{alg:MAIN}  for 500 random samples for each value of $m$.}
	\end{subfigure}
	\hfill
	\begin{subfigure}[t]{0.45\textwidth}
		\centering
		\includegraphics[width=\linewidth]{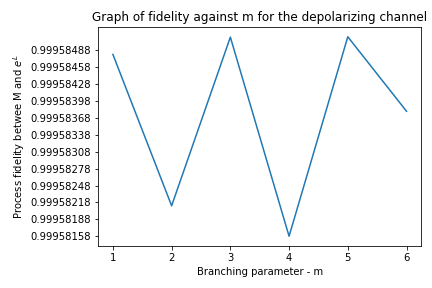}
		\caption{Process fidelity of the tomographic results and the Markovian channel output by our algorithm for a depolarizing channel with $p=0.3$. For each $m$-value, \cref{alg:MAIN} was run for 100 random samples.}
	\end{subfigure}
	\caption{Investigating how the optimal Lindbladian found by our algorithm varies with the number of branches of the logarithm searched over for the $X$-gate and depolarizing channel. There is some fluctuation in fidelity - this is due to the randomised nature of the algorithm (see \cref{sec:examples} for full results investigating how fidelity varies with random samples). }\label{fig:m-variation}
\end{figure}

\subsection{Analytical derivation of the non-Markovianity parameter}\label{sec:mu_by_hand}

In order to benchmark the output of \cref{alg:MR:NoMark_parameter}, we calculate by hand the value of $\mu$ for the unital quantum channel numerically investigated in \cref{sec:Unital_quantum_channel}. This particular example is convenient because there is no cluster of eigenvalues; indeed, all eigenvalues of the simulated perturbed channel are positive and far apart from each other. This means that all subspaces are 1-dimension and thus we don't have to undergo a search over (infinitely many) feasible basis vectors for the unperturbed channel.

Noting that the white noise addition cannot influence either the herm\-it\-ic\-ity-preserving or the trace-preserving condition, our strategy will be to construct a matrix from the tomographic data by ``filtering'' these two properties, and then calculate the minimal value of $\mu$ to increase the eigenvalues of its Choi representation in order to satisfy the conditionally complete positivity condition. This approach does not guarantee that we will obtain the closest map compatible with Markovian dynamics through white noise addition, but we can reasonably expect to retrieve an operator that is very close to the optimal one. We will then compare the value for $\mu$ from this calculation with the one returned by algorithm when setting $\varepsilon$ equal to the distance between the input matrix and the matrix exponential of the map calculated with this analytical method.

Consider the constraints on the eigenvalues and eigenvectors of an herm\-it\-ic\-ity- and trace-preserving channel having non-degenerate and non-negative spectrum. Hermiticity implies that all eigenvectors must be self-adjoint; the trace-preserving condition means that one eigenvalue must be equal to 0 with respect to a left eigenvector being proportional to the maximally entangled state $\bra{\omega}=\bra{(1,0,0,1)}$. By the biorthogonality conditions between left and right eigenvectors, this forces the three right eigenvectors related to the complement of the kernel to have first and fourth components of opposite sign. Hence, we will go through the following steps to turn the simulated perturbed unital channel $\mathcal{U}$ into an hermiticity- and trace-preserving matrix $H'$ and calculate its non-Markovianity measure by hand:
\begin{enumerate}[label=(\arabic*)]
	\item From the eigenvalues $\lambda_1>\lambda_2>\lambda_3>\lambda_4$, set $\lambda_4=0$ and define the matrix $D\coloneqq \mathrm{diag} (\lambda_1,\lambda_2,\lambda_3,0)$.
	\item From the right eigenvectors $\{w_j\}_j$ build self-adjoint vectors $\ket{\tilde{v}_j} = 1/2 \cdot (\ket{w_j}+ \ket{w_j^\dagger})$, for $j=1,\dots,4$.
	\item Construct three vectors orthogonal to $\bra{\omega}$ whose first and fourth components are respectively $v_j^{(1)} = 1/2 \cdot \big(\tilde v_j^{(1)} - \tilde v_j^{(4)}\big)$ and $v_j^{(4)}=-v_j^{(1)}$, and with $v_j^{(2)}=\tilde v_j^{(2)}$ and $v_j^{(3)}=\tilde v_j^{(3)}$, for $j=1,2,3$. Define then $S\coloneqq (v_1,v_2,v_3,\tilde v_4)$.
	\item Define $H' \coloneqq S D S^{-1}$ and calculate
	\begin{equation}
		\mu = 2 \cdot \lvert \lambda_{\min} (\omega_\perp  (H')^\Gamma \omega_\perp) \rvert
		\qquad \text{and} \qquad
		\varepsilon = \Fnorm{\mathcal{U} - \exp H'} .
	\end{equation}
\end{enumerate}
For the simulated channel, under this procedure we obtain a value of $\mu=5.76983$ and $\varepsilon = 0.01788$. As a comparison, by setting $\varepsilon = 0.01788$ in our algorithm (with $\delta_{\mathrm{step}}=10^{-5}$) this returns $\mu_{\min}=5.76539819$.
We refer to the accompanying Mathematica notebook in the Supplementary Material for the explicit calculation following this procedure.

\end{document}